\renewcommand{\baselinestretch}{1.5} 
\newtheorem{theorem}{Theorem}
\titleformat{\section}{\large\bfseries}{\thesection}{1em}{}
\titleformat{\subsection}{\normalsize\bfseries}{\thesubsection}{1em}{}
\titleformat{\subsubsection}{\normalsize\bfseries}{\thesubsubsection}{1em}{}
\DeclareMathOperator{\E}{\textnormal{\mbox{E}}}
\def\bSig\mathbf{\Sigma}
\newcommand*{\indep}{%
  \mathbin{%
    \mathpalette{\@indep}{}%
  }%
}
\newcommand*{\nindep}{%
  \mathbin{
    \mathpalette{\@indep}{\not}
  }%
}
\newcommand*{\@indep}[2]{%
  \sbox0{$#1\perp\m@th$}
  \sbox2{$#1=$}
  \sbox4{$#1\vcenter{}$}
  \rlap{\copy0}
  \dimen@=\dimexpr\ht2-\ht4-.2pt\relax
  \kern\dimen@
  {#2}%
  \kern\dimen@
  \copy0 
} 
\begin{document}

\def\spacingset#1{\renewcommand{\baselinestretch}%
{#1}\small\normalsize} \spacingset{1}



\begin{center}
\Large{\textbf{Efficient estimation of subgroup treatment effects using multi-source data}}\\
\vspace{1cm}
\small{Guanbo Wang$^ {1, 2,*}$
Alexander W. Levis$^{3}$, Jon A. Steingrimsson$^{4}$, and Issa J. Dahabreh$^ {1,2,5}$ }\\
\vspace{0.5cm}
\footnotesize{$^{1}$CAUSALab, Harvard T.H. Chan School of Public Health, Boston, MA }\\
\footnotesize{$^{2}$Department of Epidemiology, Harvard T.H. Chan School of Public Health, Boston, MA }\\
\footnotesize{$^{3}$Department of Statistics \& Data Science, Carnegie Mellon University, Pittsburgh, PA}\\
\footnotesize{$^{4}$Department of Biostatistics, Brown University, Providence, RI}\\
\footnotesize{$^{5}$Department of Biostatistics, Harvard T.H. Chan School of Public Health, Boston, MA  \\
}
\end{center}
\bigskip
\begin{abstract}
Investigators often use multi-source data (e.g., multi-center trials, meta-analyses of randomized trials, pooled analyses of observational cohorts) to learn about the effects of interventions in subgroups of some well-defined target population. Such a target population can correspond to one of the data sources of the multi-source data or an external population in which the treatment and outcome information may not be available. We develop and evaluate methods for using multi-source data to estimate subgroup potential outcome means and treatment effects in a target population. We consider identifiability conditions and propose doubly robust estimators that, under mild conditions, are non-parametrically efficient and allow for nuisance functions to be estimated using flexible data-adaptive methods (e.g., machine learning techniques). We also show how to construct confidence intervals and simultaneous confidence bands for the estimated subgroup treatment effects. We examine the properties of the proposed estimators in simulation studies and compare performance against alternative estimators. We also conclude that our methods work well when the sample size of the target population is much larger than the sample size of the multi-source data. We illustrate the proposed methods in a meta-analysis of randomized trials for schizophrenia.
\end{abstract}

\noindent%
{\it Keywords:} Causal inference; doubly robust; effect modifier; multi-source data; non-parametrically efficient; transportability; subgroup analyses.
\vfill

\newpage
\spacingset{1.45} 
\section{Introduction}
Many statistical inquiries, including quantifying the heterogeneity of treatment effects, can be better addressed by complex data from multiple sources with proper statistical methods. \citep{dahabreh2019generalizing, siddique2019causal, liu2022modeling, wang2020estimating, wang2023evaluating}
Multi-source data arise in multi-center trials, multi-cohort studies, and meta-analyses. Using multi-source data allows for more efficient estimation of heterogeneous treatment effects, a challenging task with single-source data due to limited sample size and restricted population reach \citep{kunzel2019metalearners}.

When using multi-source data, each data source often corresponds to a distinct, and sometimes ill-defined, population. However, decision-makers typically seek insights into the heterogeneity of a treatment effect in a well-defined target population so that the evidence from the analyses will be relevant to members of that well-defined target population. 
Such a target population may correspond to one data source of the multi-source data. For example, in a multi-cohort study, a target population can be represented by one source-specific of the cohorts; we call such a target population by a \textit{internal target population} \citep{dahabreh2019studydesigns}. In addition, it is also possible that decision-makers are interested in a population that is irrelevant to the population in the multi-cohort study, but a population where the treatment and outcome information may not be available (e.g., the treatment to be investigated is not accessible to the population). We call such a target population by an \textit{external target population} \citep{dahabreh2023efficient}.

Here, we focus on studying the \textit{subgroup treatment effects}, that is, conditional average treatment effects for subgroups of a target population. Such subgroups can be defined in terms of one or more covariates. Estimating subgroup treatment effects can support causal explanations, facilitate personalized treatment recommendations, and generate hypotheses for future studies \citep{oxman1992consumer, rothwell2005subgroup, wang2007statistics, sun2010subgroup, sun2014use, jaman2024penalized}. Recognizing these benefits, regulatory agencies have advocated reporting of estimated subgroup treatment effects in various disease domains \citep{ICH1998, FDA2018, FDA2019, FDA2023, amatya2021subgroup}. 

The rationale for utilizing multi-source data to estimate the subgroup treatment effects of an internal target population is that the large sample size of multi-source data can be helpful in improving the precision of the estimators. On the other hand, when estimating the subgroup treatment effects of an external target population, using multi-source data becomes essential for identifying the target parameter due to the lack of treatment and outcome information in the data collected for the external target population.

However, a challenge when estimating subgroup treatment effects using multi-source data is that one needs to adjust for covariate distribution imbalance across the various data sources \citep{dahabreh2023efficient}. Furthermore, when researchers aim to examine multiple subgroup effects concurrently, it becomes crucial to address the heightened risk of false positives arising from the multiple comparisons \citep{burke2015three}. Moreover, when the multi-source data are high-dimensional or complex, data-adaptive methods (e.g., machine learning techniques) may be most suitable to capture the complicated relationships among variables in estimating nuisance parameters. However, such flexible methods typically result in slower convergence rates unless estimators are carefully constructed to overcome this issue. 

In this paper, we develop doubly robust and non-parametric efficient estimators for subgroup treatment effects, where flexible data-adaptive methods can be employed to model the nuisance parameters. We also show how to construct simultaneous confidence bands for the estimators, which is suggested to be used by decision-makers when the interest lies in multiple subgroups simultaneously. Simulation studies and real data analyses validate and illustrate the advantages of our proposed estimators.

\section{Estimating subgroup treatment effects for an internal target population}\label{sec:internal}

In this section, we describe methods for estimating subgroup effects from multi-source data for an internal target population.

Suppose we have access to multiple datasets $\mathcal{S} = \{1, \ldots, m\}$ from different sources; each dataset, indexed by $s \in \mathcal{S}$, can be viewed as an internal target population. We further assume that these datasets comprise
simple random samples from (near-infinite) underlying superpopulations \citep{robins1988confidence}. For each observation, we have the information on the data source $S \in \mathcal{S}$, treatment $A\in\mathcal{A}$ that the patient received (with $\mathcal{A}$ assumed to be finite), baseline confounders $X\in\mathcal{X}$ (which may be high-dimensional), and outcome $Y$. Thus the observed data can be represented by $O_i=(Y_i, S_i, A_i, X_i)$, for $i=1,\ldots,n$, where $n$ is the sample size of the multi-source data. The sample size of each dataset is $n_s, \forall s=1,\dots,m$. The data structure is given in Figure \ref{fig:Internal}. 
\begin{figure}
\centering
\begin{subfigure}{.45\textwidth}
  \centering
  \includegraphics[scale=0.3]{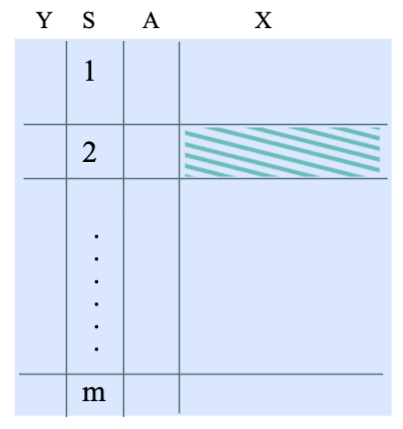}
  \caption{Data structure of the case described in Section \ref{sec:internal}, the shaded dataset represents an internal target population.}
  \label{fig:Internal}
\end{subfigure}%
\hfill
\begin{subfigure}{.45\textwidth}
  \centering
  \includegraphics[scale=0.26]{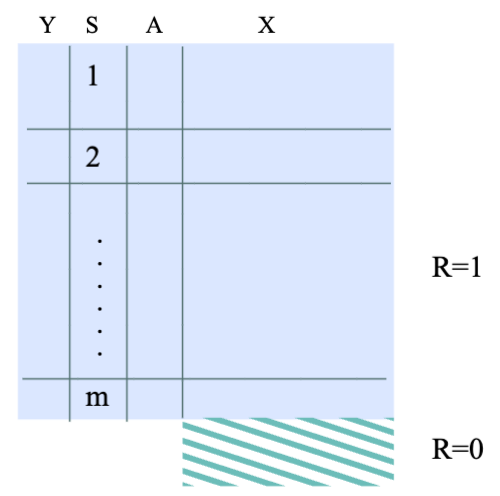}
  \caption{Data structure of the case described in Section \ref{sec:external}, the shaded dataset represents an external target population.}
  \label{fig:External}
\end{subfigure}
\caption{Data structures of two cases}
\end{figure}

Throughout this paper, we use a counterfactual framework \citep{robins2000d}. Define the counterfactual outcome $Y^a$ as the outcome that would be observed had the subject received treatment $a$. We are interested in investigating how conditional average treatment effects in an internal target population vary according to low-dimensional discrete effect modifiers $\widetilde{X}\subset X$. Therefore, in this section, the parameters of interest are subgroup potential outcome means specific to data source $s \in \mathcal{S}$, $\phi_{a, s}(\widetilde{x})=E(Y^{a}|\widetilde{X}=\widetilde{x}, S=s)$, and the corresponding subgroup treatment effect $\E(Y^{a}-Y^{a'}|\widetilde{X}=\widetilde{x}, S=s)= \phi_{a, s}(\widetilde{x})-\phi_{a', s}(\widetilde{x})$.
Note that these parameters are based on counterfactual outcomes, which are not fully observed. We thus first provide sufficient identifying conditions to allow estimation of these parameters.
\subsection{Identification}
We consider the following assumptions that are sufficient to identify $\phi_{a, s}(\widetilde{x})$:

\noindent
\textit{A1. Consistency of potential outcomes:} if $A_i = a,$ then $Y^a_i = Y_i$, for every individual $i$ and every treatment $a \in \mathcal A$. \\
This assumption encodes two requirements: (1) no interference, i.e., outcomes are not affected by treatment of other subjects \citep{vanderWeele2009}, (2) a single version of the treatment of interest \citep{rubin2010reflections}.
\\
\noindent
\textit{A2. Exchangeability over treatment in an internal target population:} for each $a \in \mathcal A$ and an internal target population $s$, $Y^a \indep A | (X, S=s)$. \\
Equivalently, this condition assumes that $X$ can sufficiently adjust for confounding in each internal target population.\\
\noindent
\textit{A3. Positivity of the probability of treatment in each of internal target population:} for each treatment $a \in \mathcal A$, if $f(x, S=s) \neq 0$, then $\Pr(A = a | X = x, S = s) > 0$.\\
\noindent
\textit{A4. Exchangeability over source index:} for each $a \in \mathcal A$, $Y^a \indep S | X$. \\
Like Assumption \textit{A2}, the plausibility of Assumption \textit{A4} would need to be assessed based on substantive knowledge. In practice, the impact of violations of these conditions may be evaluated by sensitivity analyses \citep{robins2000sensitivity}. It is also worth noting that the combination of Assumption \textit{A2} and \textit{A4} implies that for each $a \in \mathcal A$, $Y^a \indep (S, A) | X$ and thus $Y \indep S | (X, A=a)$.\\
\noindent
\textit{A5. Positivity of the probability of participation in the internal target population:} $\forall x\in\mathcal{X}, \forall s \in\mathcal{S}, \Pr(S = s | X = x) >0$ with probability 1.\\
Because $\widetilde{X}\subset X$, the above assumption implies that $\Pr(\widetilde{X}=\widetilde{x}, S = s)=\Pr(S = s|\widetilde{X}=\widetilde{x})\Pr(\widetilde{X}=\widetilde{x})>0$. 
That is, the probability of observing the subgroup in the internal target population data is positive, i.e., the effect modifier patterns of interest are observable in the internal target population data. 

When Assumptions \textit{A1} through \textit{A5} hold, our parameter of interest $\phi_{a, s}(\widetilde{x})$ can be identified using the observed data $O$.

\begin{theorem}[Identification of subgroup potential outcome means in an internal target population]
\label{thm:identification2}
Under conditions \textit{A1} through \textit{A5}, the subgroup potential outcome means in an internal target population under treatment $a \in \mathcal A$, $\phi_{a, s}(\widetilde{x})=\E(Y^a | \widetilde{X}=\widetilde{x}, S=s)$, is identifiable by the observed data functional
\begin{equation*} 
  \phi_{a, s}(\widetilde{x}) \equiv 
  \E\big\{\E(Y |A=a, X) | \widetilde{X}=\widetilde{x}, S=s\big\},
\end{equation*}
which can be equivalently expressed as 
\begin{equation*} 
  \phi_{a, s}(\widetilde{x})=\dfrac{1}{\Pr(\widetilde{X}=\widetilde{x}, S=s)}\E\Big\{\dfrac{I(A=a, \widetilde{X}=\widetilde{x})Y \Pr(S=s|X)}{\Pr(A=a|X)}\Big\}.
\end{equation*}
\end{theorem}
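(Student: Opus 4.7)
My plan is to establish the two identifying formulas in sequence: first the outcome-regression (g-formula) expression, by iteratively peeling away conditioning variables using Assumptions \textit{A1}, \textit{A2}, \textit{A4} (with the positivity conditions \textit{A3}, \textit{A5} ensuring that every conditional expectation encountered is well defined); and then the inverse-probability-weighted expression, obtained by re-expressing the outer conditional expectation as an integral against the conditional density of $X$ given $(\widetilde{X}, S)$ and applying a Bayes-type rewriting.

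For the outcome-regression form, I would begin with
\[
  \phi_{a,s}(\widetilde{x}) = \E(Y^a \mid \widetilde{X}=\widetilde{x}, S=s)
\]
and condition on $X$ inside by iterated expectations; this is legitimate because $\widetilde{X}$ is a coordinate of $X$. The expression becomes
\[
  \phi_{a,s}(\widetilde{x}) = \E\{\E(Y^a \mid X, S=s) \mid \widetilde{X}=\widetilde{x}, S=s\}.
\]
I would then simplify the inner conditional expectation in three micro-steps: (i) invoke \textit{A2} to introduce $A=a$ into the conditioning set, so the inner term becomes $\E(Y^a \mid X, A=a, S=s)$; (ii) apply \textit{A1} (consistency) on the event $\{A=a\}$ to replace $Y^a$ by $Y$; and (iii) use the implication of \textit{A2} and \textit{A4} already noted in the excerpt, namely $Y \indep S \mid (X, A=a)$, to drop $S=s$ from the conditioning. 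The inner term collapses to $\E(Y \mid A=a, X)$, yielding the first identifying formula.

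For the weighting form, I would take the outer conditional expectation just derived and write it as
\[
  \E\{\E(Y \mid A=a, X) \mid \widetilde{X}=\widetilde{x}, S=s\} = \int \E(Y \mid A=a, X=x) \, f(x \mid \widetilde{X}=\widetilde{x}, S=s) \, d\mu(x).
\]
A Bayes-rule identity rewrites the conditional density as $I(\widetilde{X}=\widetilde{x}) \Pr(S=s \mid X=x) f(x) / \Pr(\widetilde{X}=\widetilde{x}, S=s)$, where the indicator encodes that $\widetilde{X}$ is measurable with respect to $X$. I would then rewrite the inner regression as $\E\{Y\,I(A=a) \mid X=x\} / \Pr(A=a \mid X=x)$, which is valid by \textit{A3}. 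Folding everything back into a single unconditional expectation and absorbing the indicator $I(\widetilde{X}=\widetilde{x}) I(A=a)$ into $I(A=a,\widetilde{X}=\widetilde{x})$ produces the stated weighted expression.

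The main bookkeeping to watch is positivity: I need $\Pr(\widetilde{X}=\widetilde{x}, S=s) > 0$ so the outer normalization is well defined---this follows from \textit{A5} since $\widetilde{X} \subset X$, as the excerpt already notes---and \textit{A3} ensures that dividing by $\Pr(A=a \mid X)$ is permissible on the support of $X$. Beyond these checks, every step is an application of iterated expectations, one of Assumptions \textit{A1}--\textit{A5}, or an algebraic rearrangement, so I do not anticipate any substantive technical obstacle.
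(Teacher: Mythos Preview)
Your proposal is correct and follows essentially the same route as the paper's proof: iterated expectations plus \textit{A2}, \textit{A4}, \textit{A1} for the g-formula form, then a Bayes-type rewriting of the outer conditional expectation (with \textit{A3}, \textit{A5} as positivity safeguards) for the weighting form. The only cosmetic difference is ordering---the paper drops $S$ before applying consistency, whereas you apply consistency first and then invoke the observed-data implication $Y \indep S \mid (X, A=a)$---but the two are logically equivalent.
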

The proof is given in \ref{appendix:identification2}.
It immediately follows that under the conditions of Theorem~\ref{thm:identification2}, subgroup treatment effects are identifiable: $\E(Y^{a}-Y^{a'}|\widetilde{X}=\widetilde{x}, S=s)=\phi_{a, s}(\widetilde{x})-\phi_{a', s}(\widetilde{x})$. In addition, this contrast remains identifiable when Assumption \textit{A4} is replaced by a slightly weaker condition: $(Y^{a}-Y^{a'})\indep S|X$.
\subsection{Estimation}\label{sec:estimation2}
One can use either of the two identifying formulae from Theorem~\ref{thm:identification2} to estimate $\phi_{a, s}(\widetilde{x})$ by plugging in estimators of the unknown nuisance functions and replacing expectations with sample averages. However, consistency of these approaches would require consistent models for the component nuisance functions. If parametric models are chosen, it is quite likely in practice that these may be misspecified, especially if the data are complex or high-dimensional. 
Therefore, non-parametric models for the nuisance functions may be preferable, though these would result in slow convergence rates for the two plug-in estimators described.

To achieve fast convergence and efficient estimation of the subgroup effect parameter $\phi_{a, s}(\widetilde{x})$, while allowing flexible nuisance models, we will base estimation on the influence function of $\phi_{a, s}(\widetilde{x})$. In \ref{appendix:influence_function2}, we show that under the nonparametric model for the observable data, the first-order influence function \citep{bickel1993efficient} of $\phi_{a, s}(\widetilde{x})$ is
\begin{align*}
\mathit\Phi_ {p_{0}}(a, s, \widetilde{x})=&\dfrac{1}{\Pr_ {p_{0}}(\widetilde{X}=\widetilde{x}, S=s)}
    \Bigg[I(\widetilde{X}=\widetilde{x}, S=s)\big\{\E_{p_{0}}(Y | A=a, X)-\phi_{a, s}(\widetilde{x})\big\}\\
    +&I(A=a, \widetilde{X}=\widetilde{x})\dfrac{\Pr_{p_{0}}(S=s|X)}{\Pr_{p_{0}}(A=a|X)}\Big\{Y-\E_{p_{0}}(Y | A=a, X)\Big\}\Bigg],
\end{align*}
where the subscript $p_{0}$ denotes that all quantities are evaluated at the ``true'' data law. We also show that it is the efficient influence function under certain semiparametric models. The proof of the following result is given in \ref{appendix:corollary}.

\begin{restatable}[Efficient influence function under semiparametric model]{corollary}{cor:identification2}
\label{cor:identification2}
The efficient influence function under the semi-parametric model which incorporates the constraint $Y \indep S | (X, A=a)$, or/and knowing the propensity score $\Pr(A=a|X, S=s)$ (e.g., the multi-source data are a collection of randomized clinical trials) is $\mathit\Phi_ {p_{0}}(a, s, \widetilde{x})$.
\end{restatable}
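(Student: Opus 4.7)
The plan is to exploit the standard fact that the efficient influence function under a semiparametric submodel equals the orthogonal projection of any nonparametric influence function onto the tangent space of that submodel. Thus, if I can show that $\mathit\Phi_{p_0}(a, s, \widetilde{x})$, already derived as an influence function under the nonparametric model, lies in the tangent space of each semiparametric submodel considered, it will automatically be the efficient influence function under those submodels.

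First, I would factorize the observed-data density of $O = (Y, A, S, X)$ as $p(O) = p(Y \mid A, X, S)\,p(A \mid X, S)\,p(S \mid X)\,p(X)$, yielding the orthogonal tangent-space decomposition $\mathcal{T} = \mathcal{T}_Y \oplus \mathcal{T}_A \oplus \mathcal{T}_S \oplus \mathcal{T}_X$, where $\mathcal{T}_Y$, $\mathcal{T}_A$, $\mathcal{T}_S$ are the mean-zero conditional-score spaces of the relevant factors and $\mathcal{T}_X$ is the space of mean-zero functions of $X$. I would then observe that $\mathit\Phi_{p_0}$ splits naturally: the ``$Y$ piece'',
\begin{equation*}
\frac{I(A = a,\,\widetilde{X} = \widetilde{x})}{\Pr_{p_0}(\widetilde{X} = \widetilde{x},\,S = s)}\cdot\frac{\Pr_{p_0}(S = s \mid X)}{\Pr_{p_0}(A = a \mid X)}\,\bigl\{Y - \E_{p_0}(Y \mid A = a, X)\bigr\},
\end{equation*}
depends only on $(Y, A, X)$ (the fixed value $s$ enters through $\Pr_{p_0}(S = s \mid X)$, not through the observed $S$), while the remaining piece depends only on $(X, S)$ and lives in $\mathcal{T}_S \oplus \mathcal{T}_X$.

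For the submodel imposing $Y \indep S \mid (X, A=a)$, the restricted $\mathcal{T}_Y$ is the set of functions satisfying $\E[h \mid A=a, X, S] = \E[h \mid A=a, X]$ with overall conditional mean zero. Using Assumptions \textit{A2} and \textit{A4}, which together imply $\E(Y \mid A=a, X, S) = \E(Y \mid A=a, X)$, I would verify that the $Y$-piece of $\mathit\Phi_{p_0}$ lies in this restricted $\mathcal{T}_Y$; the remaining components lie in $\mathcal{T}_S \oplus \mathcal{T}_X$, which this restriction leaves untouched. For the submodel in which $\Pr(A=a \mid X, S)$ is known, $\mathcal{T}_A$ is removed entirely. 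I would compute the projection onto $\mathcal{T}_A$, namely $\E[\mathit\Phi_{p_0} \mid A, X, S] - \E[\mathit\Phi_{p_0} \mid X, S]$, and, again invoking $\E(Y \mid A=a, X, S) = \E(Y \mid A=a, X)$ so that $\E\{Y - \E(Y \mid A=a, X) \mid A=a, X, S\} = 0$, show that this projection vanishes. Hence $\mathit\Phi_{p_0}$ has no $\mathcal{T}_A$ component, so deleting that direction does not alter its projection. The two arguments immediately combine to cover the joint restriction.

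The main obstacle I anticipate is the careful bookkeeping of which sub-tangent-spaces are reduced under each submodel and the verification that the relevant conditional expectations simplify as claimed. In particular, the $\mathcal{T}_A$-projection step hinges squarely on the identifying equality $\E(Y \mid A=a, X, S) = \E(Y \mid A=a, X)$ implied by Assumption \textit{A4}; without it, the inner residual would not vanish and the projection would survive, so the efficient influence function under a known-propensity submodel would genuinely differ from $\mathit\Phi_{p_0}$.
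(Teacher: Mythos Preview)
Your proposal is correct and follows essentially the same strategy as the paper: show that the nonparametric influence function already lies in the tangent space of the restricted model, so that projection leaves it unchanged. The paper uses the factorization $p(s)\,p(x\mid s)\,p(a\mid x,s)\,p(y\mid a,x,s)$ (further splitting $X=(W,\widetilde X)$) and verifies directly that the two pieces of $\mathit\Phi_{p_0}$ land in $\Lambda_{W\mid \widetilde X,S}$ and $\Lambda_{Y\mid X,A}$, handling the joint restriction first and declaring the individual cases ``trivial''; you instead use $p(x)\,p(s\mid x)\,p(a\mid x,s)\,p(y\mid a,x,s)$ and treat each restriction separately, with an explicit $\mathcal T_A$-projection calculation for the known-propensity case. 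Your explicit appeal to $\E(Y\mid A=a,X,S)=\E(Y\mid A=a,X)$ via Assumptions \textit{A2} and \textit{A4} is exactly the ingredient the paper's argument also relies on but leaves implicit.
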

\noindent
According to Corollary \ref{cor:identification2}, efficient influence function-based estimation and inference will be the same if we incorporate either of the stated restrictions into the model.
The influence function above suggests the estimator
\begin{equation*}
  \widehat \phi_{a, s}(\widetilde{x}) =
  \dfrac{\widehat \kappa(\widetilde{x})}{n}\sum\limits_{i=1}^{n} \Bigg[ I(\widetilde{X}_{i}=\widetilde{x}_{i}, S_i = s) \widehat \mu_a(X_i) 
  +I(A_i = a, \widetilde{X}_{i}=\widetilde{x}_{i}) \dfrac{\widehat q_{s}(X_i)}{\widehat \eta_a(X_i)}  \Big\{ Y_i - \widehat \mu_a(X_i) \Big\} \Bigg],
\end{equation*}
where $\widehat \kappa(\widetilde{x}) = \{n^{-1} \sum_{i=1}^n I(\widetilde{X}=\widetilde{x}, S_i=s)\}^{-1}$ is an estimator for $\kappa=\Pr(\widetilde{X}=\widetilde{x}, S=s)^{-1}$, $\widehat \mu_a(X)$ is an estimator for $\mu_a(X)=\E(Y | A=a, X)$, $\widehat \eta_a(X)$ is an estimator for $\eta_a(X)=\Pr(A = a| X)$, and $\widehat q_{s}(X)$ is an estimator for $q_{s}(X)=\Pr(S = s | X)$. Note that one can use the decomposition $\sum_{s=1}^{S}\Pr(A=a|X, S=s)\Pr(S=s|X)$ to estimate $\eta_a(X)$.
 
For simplicity, we assume throughout that the nuisance estimates are trained in a separate independent sample, so as not to rely on empirical process conditions (e.g., Donsker-type conditions). In practice, with only one independent identical distributed sample, the same goal can be achieved by the use of sample splitting and cross-fitting, i.e., splitting the data into two halves (each half consists of stratified samples according to data source), using one for fitting nuisance estimates and the other for computing $\widehat \phi_{a, s}(\widetilde{x})$ as above \citep{schick1986, pfanzagl2012}. One can then swap samples, do the same procedure, and then average to obtain the final estimate \citep{chernozhukov2018double}. Note that this cross-fitting procedure can easily be extended to $> 2$ folds (e.g., see \citet{chernozhukov2018double, kennedy2020sharp}), but we will focus on the case of one split in this paper. When stratified sample splitting is not practical (e.g., small sample-sized data source), to achieve the asymptotic properties demonstrated below, one can use ensemble bagged estimators \citep{chen2022debiased}. Finally, in practice, we suggest users split the data stratified by the data source to make sure each split sample contains all sources of data.
\subsection{Asymptotic theory}
In this section, we give the properties of our proposed estimator. In particular, we show that our estimator is doubly robust, and under certain conditions, the estimator is asymptotically normal and non-parametrically efficient. 
Notably, the properties do not rely on the assumption that the nuisance models fall into Donsker classes \citep{vandervaart2000asymptotic}, which enables users to utilize flexible machine-learning methods to adjust for high-dimensional covariates.
\subsubsection{Asymptotic properties of the pointwise estimator}
We first notice that  $\widehat \kappa$ satisfies i) $I(\widehat \kappa^{-1}=0)=o_p(n^{-1/2})$, and ii) $\widehat \kappa\xrightarrow{P} \kappa$, see proof in \ref{app:lemma1}.
\noindent
To establish asymptotic properties of $\widehat \phi_{a, s}(\widetilde{x})$, we make the following assumptions:
\begin{enumerate}
\item[(a1)] $\exists \varepsilon>0, \quad s.t. \quad 
\Pr\{\varepsilon\leqslant \widehat \eta_a(X)\leqslant1-\varepsilon\}=
\Pr\{\varepsilon\leqslant \widehat q_s(X)\leqslant1-\varepsilon\}=1$
\item[(a2)] $ \E(Y^{2})<\infty$,
\item[(a3)] $||\widehat \mu_a(X) -\mu_a(X)||+||\widehat \eta_a(X) -\eta_a(X)||+||\widehat q_s(X) -q_s(X)||=o_p(1)$,
\end{enumerate}
where $||\cdot||$ denotes the $L_2$ norm, i.e., $\lVert f \rVert^2 = \E(f^2)$, for any $f \in L_2(P)$. 
The following Theorem gives the asymptotic properties of $\widehat \phi_{a, s}(\widetilde{x})$; a detailed proof is given in \ref{app:Proof-Estimation2}. 
 
\begin{restatable}{theorem}{thmestimationSA}
\label{thm:estimation2}
If assumptions \textit{A1} through \textit{A5}, and \textit{(a1)} through \textit{(a3)} hold, then 
\begin{align*}
     \widehat \phi_{a, s}(\widetilde{x})-\phi_{a, s}(\widetilde{x})
    = \mathbb{P}_n\{\mathit\Phi_ {p_{0}}(a, s, \widetilde{x})\}+
    R_{n}+
    o_p(n^{-1/2}),
\end{align*}
where $R_{n}\lesssim O_p\big\{\widehat \mu_a(X) -\mu_a(X)||(||\widehat \eta_a(X) -\eta_a(X)||+||\widehat q_s(X) -q_s(X)||)\big\}$, and nuisance parameters are estimated on a separate independent sample.

In particular, if $R_{n}= o_p(n^{-1/2})$, then 
$
    \sqrt{n}\big\{\widehat \phi_{a, s}(\widetilde{x})-\phi_{a, s}(\widetilde{x})\big\}\rightsquigarrow\mathcal{N}\Big[0, \E_{p_{0}}\big\{\mathit\Phi_ {p_{0}}(a, s, \widetilde{x})^{2}\big\}\Big].
$
That is, $\widehat \phi_{a, s}(\widetilde{x})$ is non-parametric efficient.
\end{restatable}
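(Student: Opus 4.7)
The plan is to write $\widehat\phi_{a,s}(\widetilde x) = \widehat\kappa(\widetilde x)\cdot \mathbb{P}_n T(\widehat\mu_a, \widehat\eta_a, \widehat q_s)$, where
\[
T(\mu, \eta, q) = I(\widetilde X = \widetilde x, S = s)\mu(X) + I(A = a, \widetilde X = \widetilde x)\frac{q(X)}{\eta(X)}\{Y - \mu(X)\},
\]
and observe that, under the truth, iterated expectations together with Theorem~\ref{thm:identification2} give $\E\{T(\mu_a, \eta_a, q_s)\} = \phi_{a,s}(\widetilde x)/\kappa$. This lets me rewrite the influence function as $\Phi_{p_0}(a, s, \widetilde x) = \kappa\{T(\mu_a, \eta_a, q_s) - \phi_{a,s}(\widetilde x)\, I(\widetilde X = \widetilde x, S = s)\}$, which is the target of the expansion.

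Next I would decompose
\[
\widehat\phi - \phi \;=\; \widehat\kappa\{\mathbb{P}_n T(\widehat\mu, \widehat\eta, \widehat q) - \mathbb{P}_n T(\mu, \eta, q)\} \;+\; \{\widehat\kappa\, \mathbb{P}_n T(\mu, \eta, q) - \phi\},
\]
and handle the two pieces separately. For the second piece I use the identity $\widehat\kappa - \kappa = -\kappa\widehat\kappa(\widehat\kappa^{-1} - \kappa^{-1})$ together with $\widehat\kappa \xrightarrow{P} \kappa$ (the lemma in \ref{app:lemma1}) and the CLT rate $\widehat\kappa^{-1} - \kappa^{-1} = O_p(n^{-1/2})$; a direct calculation then gives the linearization
\[
\widehat\kappa\, \mathbb{P}_n T(\mu, \eta, q) - \phi \;=\; \kappa\,(\mathbb{P}_n - \E)T(\mu, \eta, q) \,-\, \kappa\phi\,(\mathbb{P}_n - \E)I(\widetilde X = \widetilde x, S = s) + o_p(n^{-1/2}),
\]
whose right-hand side is exactly $\mathbb{P}_n \Phi_{p_0}$ (since $\E\Phi_{p_0}=0$). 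For the first piece I split $\mathbb{P}_n T(\widehat\mu, \widehat\eta, \widehat q) - \mathbb{P}_n T(\mu, \eta, q)$ into the empirical-process term $(\mathbb{P}_n - \E)\{T(\widehat\mu, \widehat\eta, \widehat q) - T(\mu, \eta, q)\}$ and a drift term $\E\{T(\widehat\mu, \widehat\eta, \widehat q) - T(\mu, \eta, q)\}$. The empirical-process term is $o_p(n^{-1/2})$ by the sample-splitting argument: conditionally on the training sample the integrand is deterministic, its $L_2$ norm vanishes by (a3), and its variance is controlled by (a1)--(a2); no Donsker condition is required.

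The drift term is the workhorse. Conditioning on $X$ and using $\E\{I(A = a)h(Y) \mid X\} = \eta_a(X)\E\{h(Y) \mid X, A=a\}$ with $\E(Y \mid X, A = a) = \mu_a(X)$, the contributions collapse into the single expression
\[
\E\!\left[ I(\widetilde X = \widetilde x)\{\widehat\mu_a(X) - \mu_a(X)\}\,\frac{q_s(X)\widehat\eta_a(X) - \widehat q_s(X)\eta_a(X)}{\widehat\eta_a(X)} \right].
\]
Writing $q_s\widehat\eta_a - \widehat q_s\eta_a = q_s(\widehat\eta_a - \eta_a) - \eta_a(\widehat q_s - q_s)$ and applying Cauchy--Schwarz with the bounded-propensity assumption (a1) yields $R_n \lesssim \|\widehat\mu_a - \mu_a\|(\|\widehat\eta_a - \eta_a\| + \|\widehat q_s - q_s\|)$. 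Assembling everything gives $\widehat\phi - \phi = \mathbb{P}_n \Phi_{p_0} + R_n + o_p(n^{-1/2})$; when $R_n = o_p(n^{-1/2})$, the CLT applied to $\mathbb{P}_n \Phi_{p_0}$ (finite variance by (a1)--(a2)) delivers the limiting normal law, and nonparametric efficiency follows because $\Phi_{p_0}$ is the nonparametric efficient influence function. The main obstacle I expect is the algebraic bookkeeping for the $\widehat\kappa$ linearization: one must verify that the first-order approximation to $\widehat\kappa\, \mathbb{P}_n T(\mu,\eta,q) - \phi$ matches the $-\kappa\phi\, I(\widetilde X = \widetilde x, S = s)$ component of $\Phi_{p_0}$ exactly in sign and magnitude, and this cancellation (rather than any uniform control) is what reconstructs the full influence function as opposed to an artificially augmented expansion.
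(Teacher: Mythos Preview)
Your proposal is correct and relies on the same ingredients as the paper's proof: sample splitting to control the empirical-process fluctuation without Donsker conditions, iterated conditioning to reduce the drift to a second-order product bias, and consistency of $\widehat\kappa$ for the normalization. The organization differs slightly. The paper folds $\widehat\kappa$ into the nuisance vector and applies the standard three-term decomposition
\[
\widehat\phi-\phi=(\mathbb{P}_n-\mathbb{P})\{G(\widehat{\boldsymbol\lambda})-G(\boldsymbol\lambda)\}+\mathbb{P}\{G(\widehat{\boldsymbol\lambda})-G(\boldsymbol\lambda)\}+(\mathbb{P}_n-\mathbb{P})G(\boldsymbol\lambda),
\]
so that the $\widehat\kappa$ contribution surfaces inside the drift as the cross term $\phi\{\widehat\kappa^{-1}/\kappa^{-1}-1\}+\phi\{\widehat\kappa/\kappa-1\}=\phi(\widehat\kappa-\kappa)(\kappa^{-1}-\widehat\kappa^{-1})=o_p(n^{-1/2})$, which is then discarded. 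You instead factor $\widehat\kappa$ out of $T$ from the start and linearize it separately via $\widehat\kappa-\kappa=-\kappa\widehat\kappa(\widehat\kappa^{-1}-\kappa^{-1})$, which makes the appearance of the $-\kappa\phi\,I(\widetilde X=\widetilde x,S=s)$ component of $\Phi_{p_0}$ more transparent and avoids the paper's algebraic cancellation step. The two routes are equivalent in content; yours is a bit cleaner in how it reconstructs the full influence function.
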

\noindent
Theorem \ref{thm:estimation2} characterizes the large-sample behavior of the proposed estimator. With sample splitting, the usual Donsker class assumptions for estimating the nuisance parameters are avoided. Instead, assumption (a3) only requires that the nuisance functions $\mu_a(X)$, $\eta_a(X)$, or $q_s(X)$ are consistently estimated---though potentially quite slowly---in $L_2$ norm. Many flexible and data-adaptive machine learning methods satisfy the requirement, which creates opportunities to adjust high-dimensional covariates flexibly.

Under assumptions (a1) through (a3), the proposed estimator is consistent with convergence rate $O_p(n^{-1/2}+R_{n})$. By the product structure of $R_n$, the estimator is doubly robust: if either $\mu_a(X)$ or $\{\eta_a(X), q_s(X)\}$ is estimated consistently (converge at rate $o_{p}(1)$), then the estimator is consistent, i.e., $||\widehat \phi_{a, s}(\widetilde{x})-\phi_{a, s}(\widetilde{x})||=o_{p}(1)$ \citep{kennedy2019robust}. 

The theorem also lays out the conditions for the estimator to be $\sqrt{n}$-consistent, asymptotically normal and semiparametric efficient; it is sufficient that $R_{n} = o_p(n^{-1/2})$. Note that this condition can hold even if all nuisance functions are estimated at slower than parametric $\sqrt{n}$-rates, so that efficiency of  estimation and validity of inference may be justified even when machine learning methods are used to model the nuisance functions. For instance, if all the nuisance functions are estimated at faster than $n^{1/4}$ rates, i.e., $\left\lVert \widehat\mu_a(X) - \mu_a(X)\right\rVert =\left\lVert \widehat\eta_a(X) - \eta_a(X)\right\rVert = \left\lVert \widehat q_s(X) - q_s(X)\right\rVert = o_p(n^{-1/4})$, then the condition holds. This is possible under sparsity, smoothness, or other structural assumptions. For example, under some conditions \citep{horowitz2009semiparametric}, generalized additive model estimators can attain rates of the form $O_p(n^{-2/5})$, which is $o_p(n^{-1/4})$. In general, the convergence rate for $\widehat \phi_{a, s}(\widetilde{x})$ can be much faster than that of any component nuisance function. \citep{kennedy2019robust} If the condition $R_n = o_p(n^{-1/2})$ is satisfied, then one can construct pointwise confidence intervals by either the bootstrap or a direct estimate of the asymptotic variance. 
\subsubsection{Simultaneous confidence bands}\label{sec:Simultaneous}
In the case that there are many subgroups of interest, i.e., $\widetilde{X}$ has many categories, the 95\% pointwise confidence intervals of each $\widehat \phi_{a, s}(\widetilde{x})$ do not reveal a valid 95\% confidence region for all the estimates simultaneously.
Usually, a valid simultaneous confidence band is wider than pointwise confidence intervals. Pursuing such uniform bands is of direct interest when we are simultaneously interested in all subgroup effects.

We show in Corollary \ref{lemma:simultaneous confidence bands_bootstrap} that (asymptotic) simultaneous confidence bands of $\widehat \phi_{a, s}(\widetilde{x})$ can be constructed via the bootstrap. Proofs are given in \ref{appendix:simultaneousconfidencebands_bootstrap}.
\begin{restatable}{corollary}{}
\label{lemma:simultaneous confidence bands_bootstrap}
Let $\widehat \phi_{a, s}(\widetilde{x})$ and $\widehat\sigma_{a,s}(\widetilde{x})$ be the estimated $\phi_{a, s}(\widetilde{x})$ and its standard deviation respectively.
Suppose $\widetilde{x}$ is a vector with support $\widetilde{X}$, and let $$
t^b_{max}=\text{sup}_{\widetilde{x}\in\widetilde{X}}\Big|\dfrac{\widehat \phi_{a, s}^b(\widetilde{x})-\widehat \phi_{a, s}(\widetilde{x})}{\widehat\sigma_{a,s}(\widetilde{x})}\Big|,
$$
where $\widehat \phi_{a, s}^b(\widetilde{x})$ is the point estimate of $\widehat \phi_{a, s}(\widetilde{x}), b=1,\dots,B$ from the $b$-th bootstrapped data. If assumptions (a1) through (a3) hold, then the $(1-\alpha)$ simultaneous confidence bands of $\widehat \phi_{a, s}(\widetilde{x})$ is $\{\widehat \phi_{a, s}(\widetilde{x})-c(1-\alpha)\widehat \sigma_{a,s}(\widetilde{x}),\text{ } \widehat \phi_{a, s}(\widetilde{x})+c(1-\alpha)\widehat \sigma_{a,s}(\widetilde{x})\}$, where the critical value $c(1-\alpha)$ is the $(1-\alpha)$-quantile empirical distribution of $t^b_{max}$ over $B$ replicates.
\end{restatable}
\noindent
We also show in \ref{app:SCB_another} that when certain conditions hold, the confidence bands can be approximated by another expression which avoids computing $t^b_{max}$. We verified empirically that when the sample size $n$ and number of bootstrap replications $B$ are sufficiently large, these two approaches result in numerically similar confidence bands.
\section{Estimating subgroup treatment effects for an external target population}\label{sec:external}


Next, we present the approach to estimate the subgroup treatment effects for an external target population. Similar to the data structure introduced in Section \ref{sec:internal}, we assume multi-source data $\mathcal{S}$ are collected with information of outcome $Y$, treatment assignment $A\in\mathcal{A}$ and (high-dimensional) covariates $X$. The observed data is given by  by $O_i=(Y_i, X_i, A_i, S_i)$, for  $i=1,\ldots,n_1+\ldots+n_m$. Different from the last section, we have access to the same (high-dimensional) covariates information from the external population of interest. We assume the external data collected are $n_0$ simple random samples from the population of interest. The observed data from the external target population can be presented as $O_j=(X_j)$, for $j=1,\ldots,n_0$. We use the random variable $R$ to indicate belonging in the multi-soure data. Therefore, combining the two datasets gives us the following structure $O_i=\{Y_{i}, A_{i}, X_{i}, S_{i}, R_{i}\}$, for $i=1,\ldots,n$, where $n=n_{0}+n_{1}+\ldots+n_{m}$ with  $Y_{i}=A_{i}=S_{i}=NA, \forall i\in\{i: R_i=0\}$. We abuse the notation a bit and denote $n_{0}+n_{1}+\ldots+n_{m}$ by $n$ for the rest of this section (whereas in the last section, $n$ is defined as $n_{1}+\ldots+n_{m}$). We are interested in estimating $\psi_a(\widetilde{x})=E(Y^{a}|\widetilde{X}=\widetilde{x}, R=0)$. The data structure is given in Figure \ref{fig:External}.

The assumptions for identifying the subgroup potential outcome mean in the external data are the same as assumptions \textit{A1-A5}, with an additional assumption\\
\textit{A6. Exchangeability over external data indicator:} for each $a\in\mathcal{A}, Y^a\indep R|X.$\\  
Next, we give the identification of the parameter of interest; the proof is given in \ref{appendix:identification1}. 

\begin{restatable}[Identification of subgroup potential outcome means in an external target population]{theorem}{thmidentificationcollection}
\label{thm:identification1}
Under conditions A1 through A6, the subgroup potential outcome means in the external population under treatment $a \in \mathcal A$, $\psi_a(\widetilde{x})=\E(Y^a | \widetilde{X}=\widetilde{x}, R = 0 )$, is identifiable by the observed data functional
$
  \psi_a(\widetilde{x}) \equiv 
  \E\big\{\E(Y |A=a, X, R=1) | \widetilde{X}=\widetilde{x}, R=0\big\},
$
which can be equivalently expressed as 
$
  \psi_a(\widetilde{x}) = \dfrac{1}{\Pr(\widetilde{X}=\widetilde{x}, R=0)} \E\Big\{\frac{I(A=a, \widetilde{X}=\widetilde{x}, R=1) Y \Pr(R=0|X)}{\Pr(R=1|X) \Pr(A=a|X, R=1)} \Big\}.
$
\end{restatable}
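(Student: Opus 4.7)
The strategy is to first establish the outcome-regression form of $\psi_a(\widetilde{x})$ from assumptions A1--A6, and then show the inverse-weighted expression is a purely algebraic rewriting of it.

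For the first identity, I would begin with $\psi_a(\widetilde{x}) = \E(Y^a \mid \widetilde{X} = \widetilde{x}, R = 0)$ and use the tower property to condition additionally on the full covariate vector $X$, which is legitimate because $\widetilde{X} \subseteq X$. Assumption A6 lets me switch the conditioning event $R = 0$ to $R = 1$ inside the inner expectation, after which the argument takes place entirely within the multi-source population. There, combining A2 (exchangeability over treatment given $(X, S)$) with A4 (exchangeability over source given $X$) via the contraction property of conditional independence yields $Y^a \indep A \mid (X, R = 1)$; consistency A1 then gives $\E(Y^a \mid X, R = 1) = \E(Y \mid A = a, X, R = 1)$. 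Substituting this back under the outer expectation produces the first identifying functional.

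For the equivalence with the inverse-weighted form, I would start from
\begin{equation*}
\E\left\{\frac{I(A = a, \widetilde{X} = \widetilde{x}, R = 1)\, Y\, \Pr(R = 0 \mid X)}{\Pr(R = 1 \mid X)\, \Pr(A = a \mid X, R = 1)}\right\}
\end{equation*}
and condition on $X$ via iterated expectation. The factor $I(\widetilde{X} = \widetilde{x})$ and the ratio $\Pr(R = 0 \mid X)/\Pr(R = 1 \mid X)$ are $X$-measurable and pull out, while $\E\{I(A = a, R = 1)\, Y \mid X\}$ factors as $\Pr(R = 1 \mid X)\,\Pr(A = a \mid X, R = 1)\,\E(Y \mid A = a, X, R = 1)$, so all three propensity-type factors cancel. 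A final application of the tower property absorbs the residual $\Pr(R = 0 \mid X)$ into $I(R = 0)$, and dividing by $\Pr(\widetilde{X} = \widetilde{x}, R = 0)$ returns $\E\{\E(Y \mid A = a, X, R = 1) \mid \widetilde{X} = \widetilde{x}, R = 0\}$.

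The main obstacle is essentially notational: assumptions A2 and A4 were originally stated in terms of the source label $S$, which is only meaningful when $R = 1$, so I need to read those conditions as holding within the multi-source subpopulation and then combine them carefully so the resulting exchangeability is conditional on $(X, R = 1)$ alone rather than on $(X, S = s)$. Positivity conditions A3 and A5, together with an implicit $\Pr(R = 1 \mid X) > 0$, are what keep all the conditional expectations and propensity ratios appearing in the derivation well defined.
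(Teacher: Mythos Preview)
Your proposal is correct and mirrors the paper's proof: the first identity is obtained by towering on $X$, invoking A6 to pass from $R=0$ to $R=1$, then using A2 together with A4 to remove dependence on $A$ before applying consistency A1; the second identity is a pure iterated-expectation/indicator manipulation relying on A3, A5 and positivity of $\Pr(R=1\mid X)$. The only cosmetic difference is that the paper derives the weighting form starting from the g-formula expression, whereas you start from the weighted expression and reduce it back---the computation is the same in reverse.
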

\noindent

Similar to Section \ref{sec:estimation2}, in \ref{appendix:influence_function1} we show the first-order influence function of $\psi_a(\widetilde{x})$, for $a \in \mathcal A$. We also show in \ref{app:cor:sameIF_phi} that the above influence function is also the efficient influence function under the semi-parametric model, which incorporates the constraint $Y \indep S | (X, A=a)$, or/and knowing the propensity score $\Pr(A=a|X, S=s)$. 
The influence function above suggests the estimator
{\small
\begin{equation*}
  \widehat \psi_a(\widetilde{x}) =
  \dfrac{\widehat \gamma(\widetilde{x})}{n}\sum\limits_{i=1}^{n} \Bigg[ I(\widetilde{X}_{i}=\widetilde{x}_{i}, R_i = 0) \widehat g_a(X_i) 
  +
  I(A_i = a, \widetilde{X}_{i}=\widetilde{x}_{i}, R_i=1) \dfrac{1 - \widehat p(X_i)}{\widehat p(X_i) \widehat e_a(X_i)}  \Big\{ Y_i - \widehat g_a(X_i) \Big\} \Bigg],
\end{equation*}}
where $\widehat \gamma(\widetilde{x}) = \{n^{-1} \sum_{i=1}^n I(\widetilde{X}=\widetilde{x}, R_i = 0)\}^{-1}$ is an estimator for $\gamma=\Pr
 (\widetilde{X}=\widetilde{x}, R=0)^{-1}$, $\widehat g_a(X)$ is an estimator for $g_a(X)=\E(Y | A=a, X, R = 1)$, $\widehat e_a(X)$ is an estimator for $e_a(X)=\Pr(A = a| X, R = 1)$, and $\widehat p(X)$ is an estimator for $p(X)=\Pr(R = 1 | X)$.  
We propose to use the same strategy, sample splitting that is described in Section \ref{sec:estimation2} to estimate the target parameter $\psi_a(\widetilde{x})$.

Similar to Lemma~\ref{lemma:gamma}, it can be proved that $I(\widehat \gamma(\widetilde{x})^{-1}=0)=o_p(n^{-1/2})$ and $\widehat \gamma(\widetilde{x})\xrightarrow{P} \gamma(\widetilde{x})$.
To establish asymptotic properties of $\widehat \psi_a(\widetilde{x})$, we make the following assumptions:
\begin{enumerate}
\item[(b1)] $\exists \varepsilon>0, \quad s.t. \quad 
\Pr\{\varepsilon\leqslant \widehat e_a(X)\leqslant1-\varepsilon\}=
\Pr\{\varepsilon\leqslant \widehat p_a(X)\leqslant1-\varepsilon\}=1$
\item[(b2)] $ \E(Y^{2})<\infty$,
\item[(b3)] $||\widehat g_a(X) -g_a(X)||+||\widehat e_a(X) -e_a(X)||+||\widehat p_a(X) -p_a(X)||=o_p(1)$. 
\end{enumerate}
The following Theorem gives the asymptotic properties of $\widehat \psi_a(\widetilde{x})$; a detailed proof is given in \ref{app:Proof-Estimation1}. 
 
\begin{restatable}{theorem}{thmestimationSA}
\label{thm:estimation1}
If assumptions \textit{A1} through \textit{A6}, and \textit{(b1)} through \textit{(b3)} hold, then 
\begin{align*}
     \widehat \psi_a(\widetilde{x})-\psi_a(\widetilde{x})
    =\mathbb{P}_n\{\mathit\Psi_ {p_{0}}(a, \widetilde{x})\}+
    Q_{n}+o_p(n^{-1/2}),
\end{align*}
where $Q_{n}\lesssim O_p\big\{||\widehat g_a(X) -g_a(X)||(||\widehat e_a(X) -e_a(X)||+||\widehat p_a(X) -p_a(X)||)\big\}$, and nuisance parameters are estimated on a separate independent sample.\\
In particular, if $Q_{n} = o_p(n^{-1/2})$, then 
$
    \sqrt{n}\big\{\widehat \psi_a(\widetilde{x})-\psi_a(\widetilde{x})\big\}\rightarrow\mathcal{N}\Big[0, \E_{p_{0}}\big\{\mathit\Psi_ {p_{0}}(a, \widetilde{x})^{2}\big\}\Big].
$\\
That is, $\widehat \psi_a(\widetilde{x})$ is non-parametric efficient.
\end{restatable}
\noindent
We can also construct the simultaneous confidence bands of the estimator with the same approach described in Section \ref{sec:Simultaneous}.

\section{Simulation Studies}

In this section, we summarize the results of two simulation studies we conducted to evaluate the proposed estimators. In the first study, we assessed the finite sample bias, standard error, and confidence interval coverage of the proposed estimators illustrated in the previous sections, comparing them with alternative estimators. In the second study, we verify the rate robustness properties suggested by Theorem~\ref{thm:estimation2} (and analogously Theorem~\ref{thm:estimation1}).
\subsection{Assessment of consistency and efficiency}
We generate data that contain both multi-source data and external data. The total sample size is either $n=\sum_{s=0}^{m}n_s=10^4 \text{ or }10^5$. 

We first generate a five-level categorical variable $X_1=\widetilde{X}$ by categorizing a standard normal distributed variable into five levels (with values 1 to 5 respectively) with the sample size ratio 1:2:3:2:1. Then generate nine variables $X_2,..., X_{10}$ from a mean 0.1 multivariate normal distribution with all marginal variances equal to 0.25 and all pairwise correlations equal to 0.5. Therefore, the covariates matrix $X^{n\times p}$ contains $p=10$ variables.

We consider the multi-source data come from three sources, $\mathcal{S}=\{1, 2, 3\}$, where the sample size of the multi-source data $\sum_{s=1}^{3}n_s$ can be 1000, 2000 or 5000. We assume the multi-source data indicator follows a Bernoulli distribution, $R\sim \text{Bernoulli}\{\Pr(R=1|X)\}$ with $\Pr(R=1|X)=\dfrac{\text{exp}(\beta X^T) }{1+ \text{exp}(\beta X^T)}$, $X = (1, X_{1}, \ldots, X_{10})$ and $\beta$ is a vector size of 11 containing $\beta_0$ and all others $\ln(1.05)$, where we solved for $\beta_0$ to result (on average) in the desired total number of multi-source data sample size \citep{robertson2021intercept}. In the multi-source data, we allocate the subjects to three internal source-specific data using a multinomial logistic model. That is, we assume $S|(X, R=1)\sim \text{Multinomial}(p_1, p_2, p_3; \sum_{s=1}^{3}n_s)$ with $p_1 = \Pr(S= 1 | X, R = 1 ) = \dfrac{e^{\xi X^T} }{1+ e^{\xi X^T} + e^{\zeta X^T} }$, $p_2 =\Pr(S= 2| X, R = 1 ) = \dfrac{e^{\zeta X^T} }{1+ e^{\xi X^T} + e^{\zeta X^T} }$, and $p_3 = \Pr(S= 3 | X, R = 1 ) = 1 - p_1 - p_2$ where $\xi$ and $\zeta$ are vectors size of 11 containing $\xi_0$ and $\zeta_0$ and all others range from $\ln(1.1)$ to $\ln(1.5)$. We used Monte Carlo methods to obtain intercepts $\xi_0$ and $\zeta_0$ that resulted in the sample sizes $n_1:n_2:n_3= 4:2:1$. The sample size of the external data $\{R=0\}, n_0=n-\sum_{s=1}^{3}n_s$.

We assume the treatment assignment follows a Bernoulli distribution, $A\sim \text{Bernoulli}\{\Pr(A=1|X, S)\}$ with $\Pr(A=1|X, S)=\dfrac{e^{\alpha_s X^T} }{1+ e^{\alpha_s X^T} }$, where $\alpha_s$ ranges from -2 to 1.5. We do not need to generate $A$ when $R=0$.
We generated potential outcomes as $Y^a =1+5a+\sum_{j=2}^{4}f_1(\theta X_j)+\sum_{j=5}^{7}f_2(\theta X_j)+\sum_{j=8}^{10}f_3(\theta X_j)+\sum_{k=1}^{5}\vartheta_k X_{1k} a+\sum_{j=2}^{5}\iota_j X_j a+ \epsilon^a$, where $f_1(x)=\text{sin}(x)/5, f_2(x)=\text{exp}(-0.25x), f_3(x)=0.02x^2+(2+0.2x)^2+2(0.015x)^3, \theta=0.75, \vartheta_k^1=(0.2, 0.4, -0.5, 0.1, -0.01), k=1,...,5, \iota_j^1=(0.2, 0.2, -0.2, -0.2), j=2,...,5$. In all simulations, $\epsilon^a$ had an independent normal distribution with mean 0 variance 10 for $a=0,1$. We generated observed outcomes under consistency, such that $Y = AY^1 + (1 - A) Y^0$.
   
We use logistic regression to estimate $\widehat p(X)$,  and estimate $\widehat \eta_{a}(X)$ and $\widehat e_{a}(X)$ by multiplying two probabilities obtained from logistic and multinomial regressions, multinomial model to estimate $\widehat q_{s}(X)$, and generalized additive model to estimate $\widehat \mu_{a}(X) \text{ and } \widehat g_{a}(X)$. To misspecify the model, we omitted $X_j, j=5,\dots,10$ \citep{kang2007demystifying}. In addition, we use linear regression to misspecify $\widehat \mu_{a}(X) \text{ and } \widehat g_a(X)$ excluding the interaction terms of $X$ and $A$.

We evaluate the performance of the proposed estimators for all the five subgroups. For comparison, we compare the proposed doubly robust (DR) estimator with the G computation (Regression) 
$
\widehat\phi_{a,s}^g(\widetilde{x})=\widehat \kappa(\widetilde{x})/n \cdot \sum_{i=1}^{n}I(\widetilde{X}_{i}=\widetilde{x}_{i}, S_i = s) \widehat \mu_a(X_i)$,
 and inverse probability of treatment weighting (IPTW)
$\widehat\phi_{a,s}^w(\widetilde{x})=\widehat \kappa(\widetilde{x})/n \cdot\sum\limits_{i=1}^{n} \big[ I(A_i = a, \widetilde{X}_{i}=\widetilde{x}_{i})\widehat q_{s}(X_i)/\widehat \eta_a(X_i)  \cdot Y_i\big].
$ Similar for $\widehat\psi_{a}(\widetilde{x})$.
All simulation results are based on 5000 runs.

Figure \ref{fig:Bias_SD} shows the bias, coverage of simultaneous confidence interval and pointwise confidence interval, theoretical standard deviation (estimated by influence functions) and Monte-Carlo standard deviation of $\widehat \phi_{1,1}(\widetilde{x}), \widetilde{x}=1,...,5$ when the sample size of the multi-source data is 1000 ($\sum_{s=1}^{3}n_s=1000$). The 95\% simultaneous confidence interval in Figure \ref{fig:Bias_SD} is obtained by Corollary \ref{lemma:simultaneous confidence bands}, but it is checked that numerically, it is almost the same as the one obtained from Corollary \ref{lemma:simultaneous confidence bands_bootstrap}. For each subgroup, the bias is near zero, the theoretical standard deviation and Monte-Carlo standard deviation are almost the same (with the latter one numerically larger than the former one but negligible in their scales), and the coverage of simultaneous confidence interval is uniformly higher than the coverage of pointwise confidence interval (at around 95\%). 
The corresponding simulation results for $\widehat \psi_{1}(\widetilde{x}), $ when the sample size $n=10^4$ and $\sum_{s=1}^{3}n_s=1000$ is given in \ref{app:add_sim} Figure \ref{app:fig:r1}, which shows similar results. When the sample size $n=10^5$ and/or $\sum_{s=1}^{3}n_s=2000$ or $5000$, simulations give similar results, which are given in \ref{app:add_sim} Table \ref{app:table:add_s}. 

\begin{figure}
    \centering
    \includegraphics[width=0.5\textwidth]{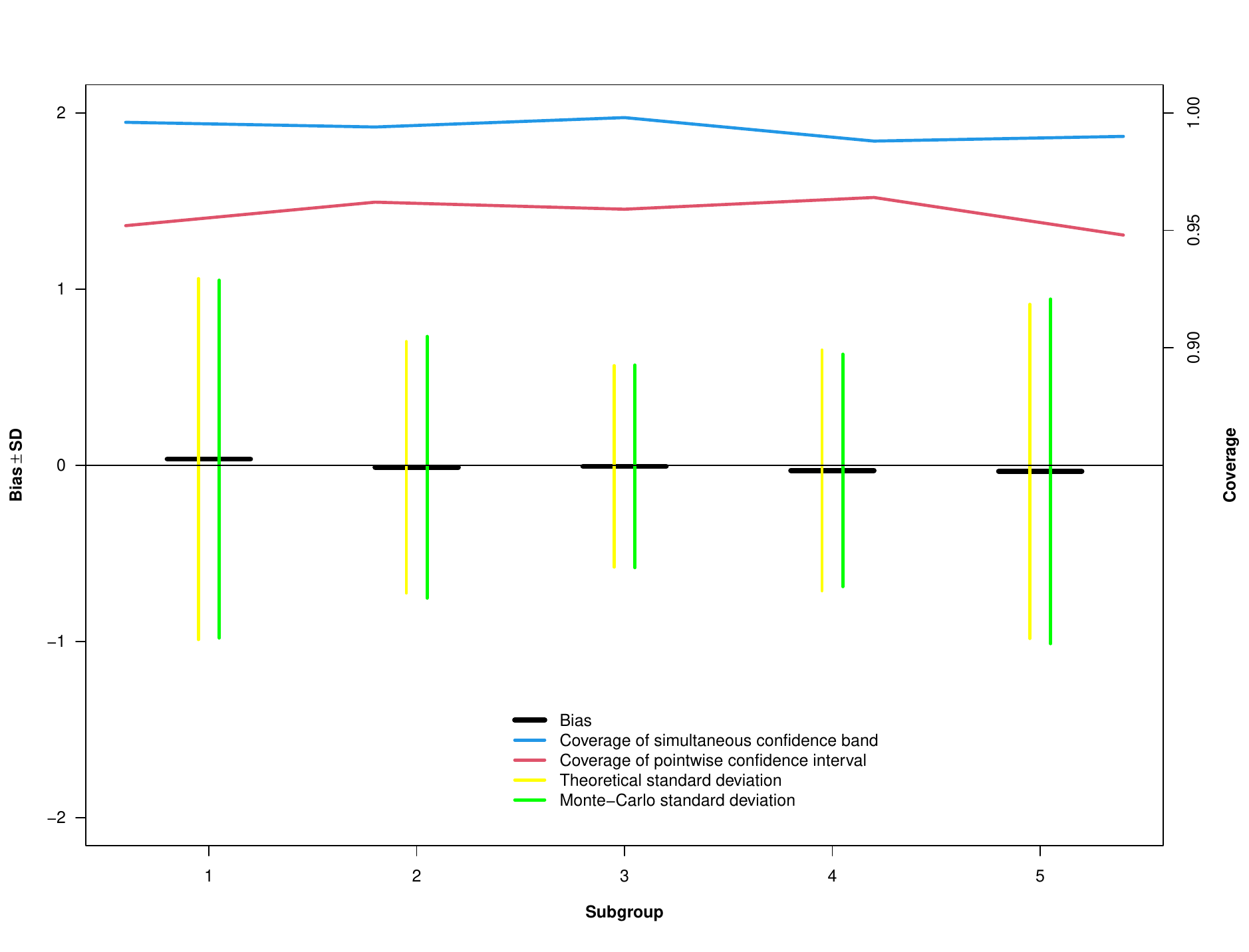}
    \caption{Biases, coverage of simultaneous confidence bands, coverage of pointwise confidence intervals, theoretical standard deviation and Monte-Carlo standard deviation of $\widehat \phi_{1}(\widetilde{x}), \widetilde{x}=1, 2, 3, 4, 5$. Sample size is 1000 with 5000 iterations.}
    \label{fig:Bias_SD}
\end{figure}
Figure \ref{fig:model_comparison} shows the averaged biases and Monte-Carlo standard errors of various estimators with different combination(s) of correctly specified models for estimating $\widehat \phi_{1,1}(3)$ when the sample size $\sum_{s=1}^{3}n_s$ is 1000. It is verified that our proposed estimator is doubly robust.
The corresponding simulation results for $\widehat \psi_{1}(3)$ when the sample size $n=10^4$ and $\sum_{s=1}^{3}n_s=1000$ is given in \ref{app:add_sim} Figure \ref{app:fig:r2}, which shows similar results.
\begin{figure}
    \centering
    \includegraphics[width=0.5\textwidth,trim={0 0 0 2cm},clip]{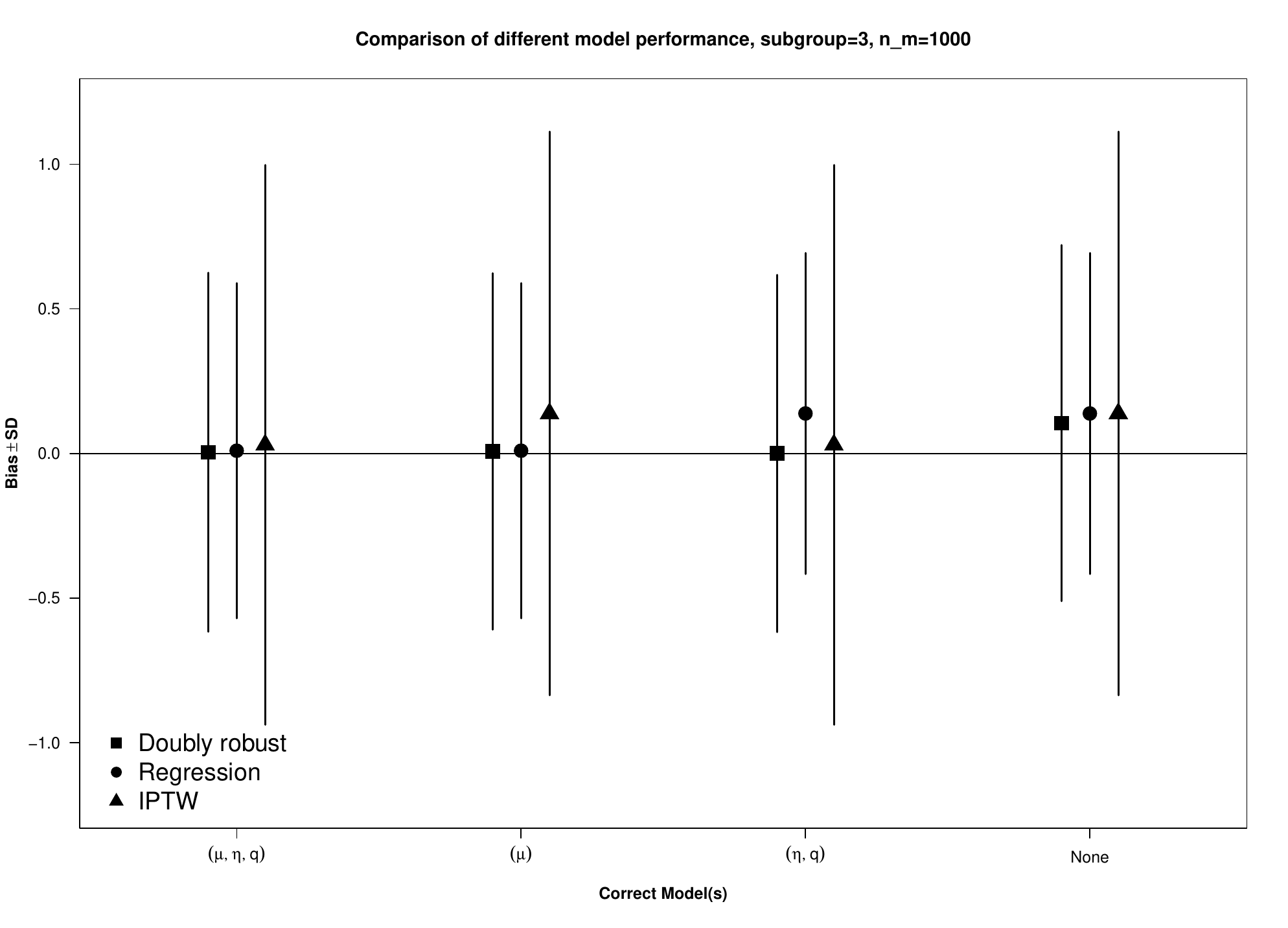}
    \caption{Simulation results for $\widehat \phi_{1,1}(3)$ when the sample size $\sum_{s=1}^{3}n_s$ is 1000 with 5000 iterations. The results include the averaged biases and Monte-Carlo standard errors of various estimators with different combination(s) of correctly specified models. The methods being compared are the proposed doubly robust (DR) estimator $\widehat\phi_{1,1}(3)$, the estimator that only uses outcome regression (Regression) $\widehat\phi^g_{1,1}(3)$, and the inverse probability of treatment weighting (IPTW) $\widehat\phi^w_{1,1}(3)$.}
    \label{fig:model_comparison}
\end{figure}
We additionally conduct simulations for different sample size settings, the results are given in \ref{app:add_sim} Tables \ref{app:table:add_r1} and \ref{app:table:add_r2}. It can be shown that the biases and standard deviations decrease with the increase in sample size. But when the model(s) is/are incorrectly specified, the biases of the corresponding estimator do not converge to zero.
Notably, \citet{schmid2020comparing} pointed out that some statistical methods performed poorly when generalizing effect estimates from randomized controlled trials to much larger target populations (which is a special case of transporting the effects to an external data source). From the simulation results, it can be verified that our methods work well when the sample size of the multi-source data is only 1\% of the sample size of the target population. 

\subsection{Assessment of rate robustness}

Theorem~\ref{thm:estimation2} shows that if $\lVert \widehat{\mu}_a - \mu_a\rVert \left\{\lVert \widehat{\eta}_a - \eta_a\rVert + \lVert \widehat{q}_s - q_s\rVert\right\} = o_p(n^{-1/2})$, then $\widehat{\phi}_{a,s}(\widetilde{x})$ is $\sqrt{n}$-consistent and asymptotically efficient. More generally, the asymptotic product-bias described in this result yields a distinct advantage over simpler plug-in or inverse-probability weighted estimators whose asymptotic bias is typically first order.

To illustrate this advantage, we conducted a simple numerical experiment. Details of the data generating mechanism are given in \ref{app:sim_gen_2}.
To ``estimate'' the nuisance functions $(\mu_a, \eta_a, q_s)$, we perturbed the true underlying functions as follows: $\widehat{\mu}_a(X) = \mu_a(X) + h \, \epsilon_{\mu}$, $\widehat{\eta}_{a}(X) = \mathrm{expit}\left\{\mathrm{logit}\left[\eta_{a}(X)\right] + 1.3h \, \epsilon_{\eta}\right\}$, and $\widehat{q}_{s}(X) = \mathrm{expit}\left\{\mathrm{logit}\left[q_{s}(X)\right] + 1.3 h \, \epsilon_{q}\right\}$, where $\epsilon_{\mu}, \epsilon_{\eta}, \epsilon_{q} \overset{\mathrm{iid}}{\sim} \mathcal{N}(n^{-r}, n^{-2r})$, setting $h = 2.5$. 
This construction guarantees that $\lVert \widehat{\mu}_a - \mu_a\rVert = O(n^{-r}), \lVert \widehat{\eta}_a - \eta_a\rVert = O(n^{-r}), \lVert \widehat{q}_s - q_s\rVert = O(n^{-r})$, so that we can study the performance of the proposed estimator with nuisance errors known to be $O(n^{-r})$.

For $n \in \{100, 500, 1000\}$, we computed nuisance estimates varying $r \in \{0.10 + 0.05k: k \in \{0, \ldots, 8\}\}$. For this simulation, the subgroup of interest was $\widetilde{X} \equiv X_0 = 1$, and the target parameter was $\phi_{1,1}(1)$. Aggregating over 5,000 replications for each scenario, we compared the root mean-square error (RMSE) of the proposed influence function-based estimator $\widehat{\phi}_{1,1}(1)$ to that of a plug-in estimator $\widehat \phi^{g}_{1,1}(1)$. Results are shown in Figure \ref{fig:rate}.
\begin{figure}
    \centering
    \includegraphics[width = 0.5\linewidth]{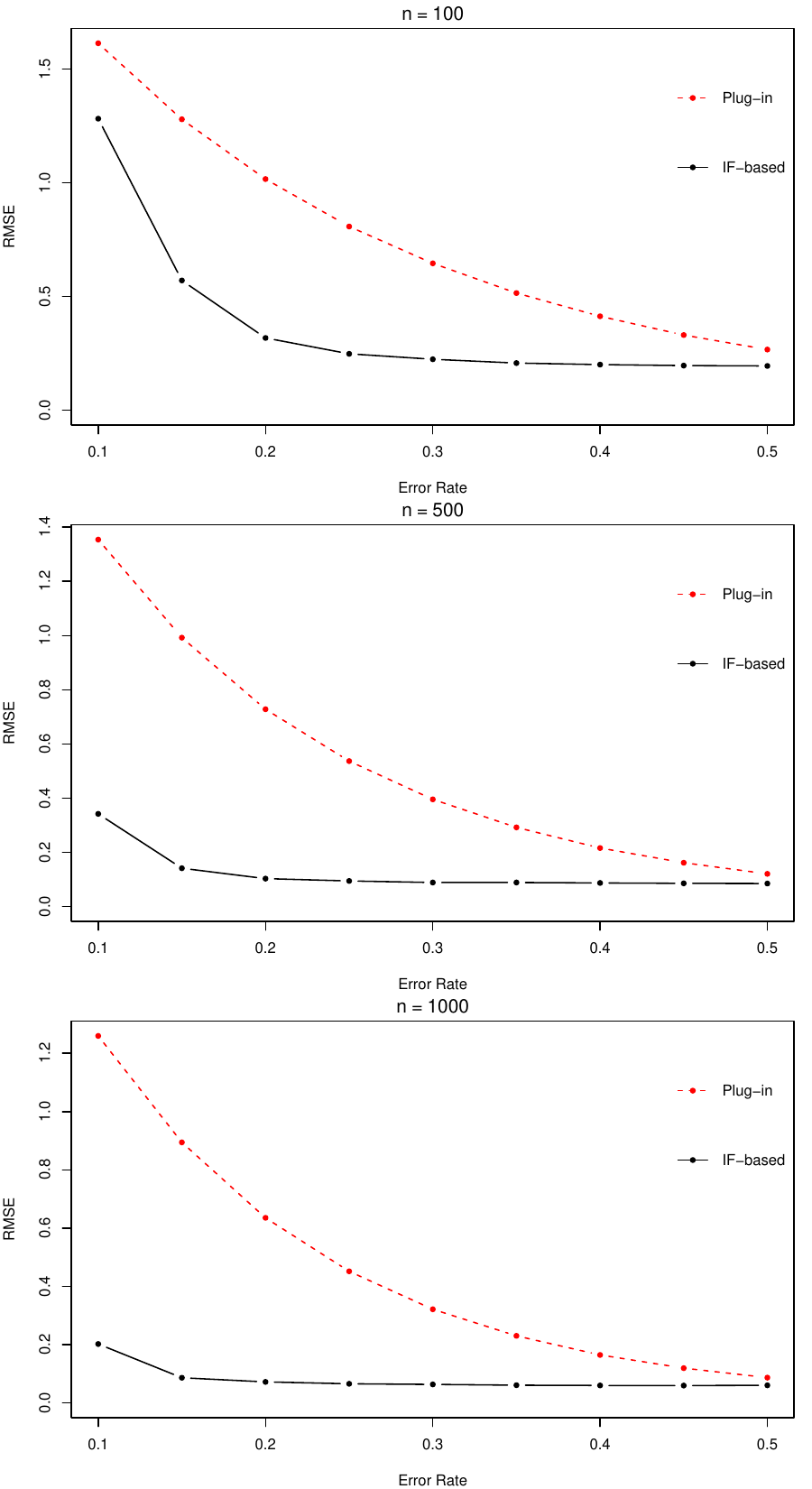}
    \caption{Simulation results for $\widehat \phi_{1,1}(1)$ for total sample size $\sum_{s=1}^{2}n_s$ 100, 500, and 1000 with 5000 iterations per scenario. Varying the error rate $r$ which controls the nuisance error at $O(n^{-r})$, the results include the root mean-square error (RMSE) of the proposed estimator (IF-based) and the estimator that only uses an outcome model (plug-in).}
    \label{fig:rate}
\end{figure}
According to the results in Figure~\ref{fig:rate}, the proposed doubly robust estimator outperforms the plug-in estimator, particularly when nuisance error is high (i.e., $r$ small). In fact, as predicted by Theorem~\ref{thm:estimation2}, near-optimal performance of the doubly robust estimator is achieved once $r \geq 0.25$. Such low RMSE is only achieved by the plug-in estimator when $r = 0.5$, which would only be anticipated when $\widehat{\mu}_a$ is a correctly specified parametric model. This situation is unlikely in practice.
Finally, note that similar results also apply to $\widehat \psi_{a}(\tilde x)$; the results for the analogous simulation study are omitted here.

\section{Application}\label{sec:Application}

Schizophrenia is a syndrome characterized by the presence of psychotic, negative, and cognitive  symptoms that typically emerges during the first half of adult life. The course and outcomes of schizophrenia vary markedly across individuals, and there is potential for recovery. However, in most cases, schizophrenia is a ``chronic, severe'' and ``highly disabling'' condition that drives substantial social and occupational dysfunction. \citep{schizophernia2023nimh} To rate the severity of schizophrenia symptoms, researchers frequently use the Positive and Negative Syndrome Scale (PANSS) total score (ranges from 30-200), and it closely relates to the clinical severity of patients \citep{leucht2005does}. The higher the score, the higher the severity of symptoms. Paliperidone extended-release (ER) is an oral antipsychotic agent  approved for treating schizophrenia by the US Food and Drug Administration and the European Agency for the Evaluation of Medicinal Products. 
\citet{emsley2008efficacy} evaluated efficacy and safety of paliperidone ER tablets (3 – 12 mg/day) for treating schizophrenia from three 6-week, placebo-controlled, double-blind trials
\citep{kane2007treatment,davidson2007efficacy,marder2007efficacy}. However, it is not clear how the effects vary according to the patient's characteristics and the effects in each target population (trial).

In this study, using the three trials, we estimate the difference in expected PANSS total score at week 6 after the treatment initiation between the assignment of paliperidone ER and placebo in different subgroups (stratified by gender, age, and PANSS baseline score) and target populations (represented by each trial) by the proposed method. 

We include the patients assigned to either paliperidone ER or placebo and whose PANSS total scores were evaluated at baseline and week 6. The summary statistics for each variable included in each trial are given in \ref{app:application1}. In total, 1218 patients were included. We include information on the patient's gender, age, race, and PANSS baseline score. The three trials represent different populations. For instance, regarding race, trial 1 was dominated by White, whereas trial 2 included mainly White and Black, and trial 3 mostly recruited White and Asians. The age distribution also presents heterogeneity among trials. We hypothesize that the treatment effects would vary across the population represented by different trials and the patient's PANSS total score at baseline, gender, and age group.
\begin{figure}
    \centering
    \includegraphics[width=0.3\textwidth]{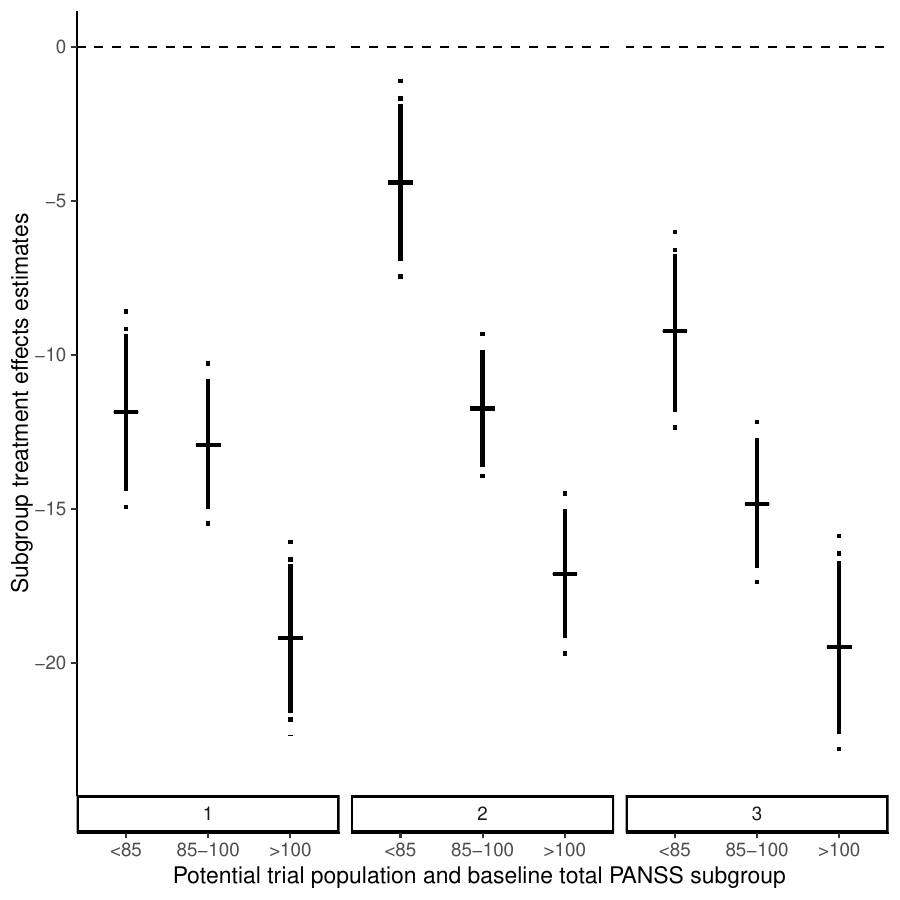}
    \caption{Results from analyses using the meta-trial data. For each trial (target population) and each PANSS total score baseline subgroup, the plot includes the point estimation (horizontal short solid line), 95\% confidence interval (vertical solid line), and simultaneous confidence bands (vertical dashed line). The horizontal dashed line indicates the null.}
        \label{fig:PANSSbaseline}
\end{figure}
We used GAM to estimate the outcome model and GLM for the treatment and source model. Figure \ref{fig:PANSSbaseline} shows the results where the subgroup is defined as PANSS total score baseline group ($<$85, 85-100, $>$100). Across all three target populations, patients with a higher PANSS baseline total score benefit most from the treatment but with different magnitudes. Target populations 2 and 3 exhibit similar heterogeneity but patients in target population 2 benefit more from the treatment. The difference of treatment effects between $<$85 and 85-100 subgroup are minimal in target population 1. This shows that the subgroup treatment effects can be different in different target populations, and the effects can be different in different subgroups. These stress the importance of investigating the subgroup effects rather than averaged effects and the effects in different target populations.

We additionally analyze the data to obtain the subgroup treatment effects stratified by the gender (with the collected data, we can only divide patients by female/male) and also the age group (18-34, 34-48, 48-74). The stratification is based on the first and third quantiles of the variable. The results are given in \ref{app:application2}. We can see that the treatment is more beneficial in females than males in trials 1 and 3. However, the treatment effects are similar in males and females in trial 2.  Paliperidone ER showed less effect in patients older than 48 in trials 1 and 3, but the patients who were 34-48 years old were less responsive to the treatment in trial 2. 
Both results show heterogeneity exists among subgroups and target populations.

\section{Discussion}

We have described methods to estimate a subgroup treatment effect for an internal or external target population using multi-source data. Our results can provide insights into one of the important problems in the field of ``evidence synthesis'': how to use the combined evidence to efficiently investigate the heterogeneity of treatment effects among a target population of substantive interest. Estimating subgroup treatment effects can answer how the effects vary according to a hypothesized categorical effect modifier, which can then be used to support causal explanations, facilitate personalized treatment recommendations, and generate hypotheses for future studies. 

Traditional techniques like meta-analysis for evidence synthesis \citep{higgins2009re,rice2018re} usually fall short in providing causally interpretable estimates from multi-source data. Conversely, applying conventional causal methodologies directly to multi-source data may lead to effect estimates that lack relevance to any specific, well-defined population of interest. In contrast, our causal method explicitly targets a well-defined population based on the interest of decision-making, which may be one of or different from the populations underlying the multi-source data, so that offers more relevant and actionable insights.


Our method provides a robust and efficient estimation of a treatment effect for a pre-specified subgroup among a target population. Leveraging the multi-source data, we can better explore the heterogeneity in a large sample, encompassing a large amount of heterogeneity. By employing semi-parametric theory and flexible machine learning methods, the developed estimator 1) is doubly robust, giving analysts two chances at consistency, 2) allows the use of flexible data-adaptive methods (possibly out of the Donsker class) to model the nuisance parameters with relatively slow convergence rates, and 3) is non-parametrically efficient. In addition, we take the multiplicity issue into account by providing simultaneous confidence bands.


We assume the transportability assumption for conditional difference measures (the mean difference of counterfactual outcomes is independent of the data source conditional on the covariates) to identify the target parameter. Such an assumption would need to be examined according to substantive knowledge. Other assumptions, such as transportability assumption for conditional relative effect measures \citep{dahabreh2022learning}, may also be able to identify the target parameters. We leave the work in the future.

\section*{Acknowledgments}
The authors thank Dr. Gonzalo Martínez-Alés for the insightful discussion regarding the data application.
This work was supported in part by Patient-Centered Outcomes Research Institute (PCORI) awards ME-2021C2-22365. The content is solely the responsibility of the authors and does not necessarily represent the official views of PCORI, PCORI's Board of Governors or PCORI's Methodology Committee. 
This study, carried out under The Yale University Open Data Access  (YODA) \url{https://yoda.yale.edu/} Project number 2022-5062, used data obtained from the Yale University Open Data Access Project, which has an agreement with Jassen Research \& Development, L.L.C.. The interpretation and reporting of research using this data are solely the responsibility of the authors and do not necessarily represent the official views of the Yale University Open Data Access Project or Jassen Research \& Development, L.L.C..
\vspace*{-8pt}

\section*{Data availability statement}

Code to reproduce our simulations and the data analyses are provided on GitHub: \href{https://github.com/Guanbo-W/Transportability_STE}{link}. \\
The data analyzed can be obtained from the YODA, subject to approval by the YODA \url{https://yoda.yale.edu/}

\section*{Supporting Information}
Web Appendices, Figures and Tables referenced in Sections 2 through 5, and a copy of the code uploaded on GitHub: \href{https://github.com/Guanbo-W/Transportability_STE}{link}.

\bibliographystyle{apalike}
\bibliography{biblio}

\begin{thebibliography}{}

\bibitem[Amatya et~al., 2021]{amatya2021subgroup}
Amatya, A.~K., Fiero, M.~H., Bloomquist, E.~W., Sinha, A.~K., Lemery, S.~J., Singh, H., Ibrahim, A., Donoghue, M., Fashoyin-Aje, L.~A., de~Claro, R.~A., et~al. (2021).
\newblock Subgroup analyses in oncology trials: regulatory considerations and case examples.
\newblock {\em Clinical Cancer Research}, 27(21):5753--5756.

\bibitem[Belloni et~al., 2015]{belloni2015some}
Belloni, A., Chernozhukov, V., Chetverikov, D., and Kato, K. (2015).
\newblock Some new asymptotic theory for least squares series: Pointwise and uniform results.
\newblock {\em Journal of Econometrics}, 186(2):345--366.

\bibitem[Bickel et~al., 1993]{bickel1993efficient}
Bickel, P.~J., Klaassen, C.~A., Bickel, P.~J., Ritov, Y., Klaassen, J., Wellner, J.~A., and Ritov, Y. (1993).
\newblock {\em Efficient and adaptive estimation for semiparametric models}, volume~4.
\newblock Springer.

\bibitem[Burke et~al., 2015]{burke2015three}
Burke, J.~F., Sussman, J.~B., Kent, D.~M., and Hayward, R.~A. (2015).
\newblock Three simple rules to ensure reasonably credible subgroup analyses.
\newblock {\em Bmj}, 351.

\bibitem[Chen et~al., 2022]{chen2022debiased}
Chen, Q., Syrgkanis, V., and Austern, M. (2022).
\newblock Debiased machine learning without sample-splitting for stable estimators.
\newblock {\em arXiv preprint arXiv:2206.01825}.

\bibitem[Chernozhukov et~al., 2018]{chernozhukov2018double}
Chernozhukov, V., Chetverikov, D., Demirer, M., Duflo, E., Hansen, C., Newey, W., and Robins, J. (2018).
\newblock Double/debiased machine learning for treatment and structural parameters.
\newblock {\em The Econometrics Journal}, 21(1):C1--C68.

\bibitem[Dahabreh et~al., 2019a]{dahabreh2019studydesigns}
Dahabreh, I.~J., Haneuse, S. J.-P., Robins, J.~M., Robertson, S.~E., Buchanan, A.~L., Stuart, E.~A., and Hern\'an, M.~A. (2019a).
\newblock Study designs for extending causal inferences from a randomized trial to a target population.
\newblock {\em arXiv preprint arXiv:1905.07764}.

\bibitem[Dahabreh et~al., 2019b]{dahabreh2019generalizing}
Dahabreh, I.~J., Hern{\'a}n, M.~A., Robertson, S.~E., Buchanan, A., and Steingrimsson, J.~A. (2019b).
\newblock Generalizing trial findings in nested trial designs with sub-sampling of non-randomized individuals.
\newblock {\em arXiv preprint arXiv:1902.06080}.

\bibitem[Dahabreh et~al., 2023]{dahabreh2023efficient}
Dahabreh, I.~J., Robertson, S.~E., Petito, L.~C., Hernán, M.~A., and Steingrimsson, J.~A. (2023).
\newblock Efficient and robust methods for causally interpretable meta-analysis: Transporting inferences from multiple randomized trials to a target population.
\newblock {\em Biometrics}, 79(2):1057--1072.

\bibitem[Dahabreh et~al., 2022]{dahabreh2022learning}
Dahabreh, I.~J., Robertson, S.~E., and Steingrimsson, J.~A. (2022).
\newblock Learning about treatment effects in a new target population under transportability assumptions for relative effect measures.
\newblock {\em arXiv preprint arXiv:2202.11622}.

\bibitem[Davidson et~al., 2007]{davidson2007efficacy}
Davidson, M., Emsley, R., Kramer, M., Ford, L., Pan, G., Lim, P., and Eerdekens, M. (2007).
\newblock Efficacy, safety and early response of paliperidone extended-release tablets (paliperidone er): results of a 6-week, randomized, placebo-controlled study.
\newblock {\em Schizophrenia research}, 93(1-3):117--130.

\bibitem[Emsley et~al., 2008]{emsley2008efficacy}
Emsley, R., Berwaerts, J., Eerdekens, M., Kramer, M., Lane, R., Lim, P., Hough, D., and Palumbo, J. (2008).
\newblock Efficacy and safety of oral paliperidone extended-release tablets in the treatment of acute schizophrenia: pooled data from three 52-week open-label studies.
\newblock {\em International Clinical Psychopharmacology}, 23(6):343--356.

\bibitem[{FDA}, 2018]{FDA2018}
{FDA} (2018).
\newblock {Developing Targeted Therapies in Low-Frequency Molecular Subsets of a Disease Guidance for Industry}.
\newblock \url{https://www.fda.gov/media/117173/download}.
\newblock Accessed: Jan. 3rd, 2024.

\bibitem[{FDA}, 2019]{FDA2019}
{FDA} (2019).
\newblock {Enrichment Strategies for Clinical Trials to Support Determination of Effectiveness of Human Drugs and Biological Products Guidance for Industry}.
\newblock \url{https://www.fda.gov/media/121320/download}.
\newblock Accessed: Jan. 3rd, 2024.

\bibitem[{FDA}, 2023]{FDA2023}
{FDA} (2023).
\newblock {Food and Drug Administration, Department of Health and Human Services, Code of Federal Regulation, 21 CFR 314.50}.
\newblock \url{https://www.ecfr.gov/current/title-21/part-314/section-314.50}.
\newblock Accessed: Jan. 3rd, 2024.

\bibitem[Higgins et~al., 2009]{higgins2009re}
Higgins, J.~P., Thompson, S.~G., and Spiegelhalter, D.~J. (2009).
\newblock A re-evaluation of random-effects meta-analysis.
\newblock {\em Journal of the Royal Statistical Society: Series A (Statistics in Society)}, 172(1):137--159.

\bibitem[Horowitz, 2009]{horowitz2009semiparametric}
Horowitz, J.~L. (2009).
\newblock {\em Semiparametric and nonparametric methods in econometrics}, volume~12.
\newblock Springer.

\bibitem[{ICH}, 1998]{ICH1998}
{ICH} (1998).
\newblock {European Medicines Agency, International Conference on Harmonization Topic E 9 Statistical Principles for Clinical Trials}.
\newblock \url{https://www.ema.europa.eu/en/documents/scientific-guideline/ich-e-9-statistical-principles-clinical-trials-step-5_en.pdf}.
\newblock Accessed: Jan. 3rd, 2024.

\bibitem[Jaman et~al., 2024]{jaman2024penalized}
Jaman, A., Wang, G., Ertefaie, A., Bally, M., Lévesque, R., Platt, R., and Schnitzer, M. (2024).
\newblock Penalized g-estimation for effect modifier selection in the structural nested mean models for repeated outcomes.

\bibitem[Kane et~al., 2007]{kane2007treatment}
Kane, J., Canas, F., Kramer, M., Ford, L., Gassmann-Mayer, C., Lim, P., and Eerdekens, M. (2007).
\newblock Treatment of schizophrenia with paliperidone extended-release tablets: a 6-week placebo-controlled trial.
\newblock {\em Schizophrenia Research}, 90(1-3):147--161.

\bibitem[Kang and Schafer, 2007]{kang2007demystifying}
Kang, J.~D. and Schafer, J.~L. (2007).
\newblock Demystifying double robustness: A comparison of alternative strategies for estimating a population mean from incomplete data.
\newblock {\em Statistical science}, 22(4):523--539.

\bibitem[Kennedy et~al., 2020]{kennedy2020sharp}
Kennedy, E.~H., Balakrishnan, S., and G’Sell, M. (2020).
\newblock Sharp instruments for classifying compliers and generalizing causal effects.
\newblock {\em The Annals of Statistics}, 48(4):2008--2030.

\bibitem[Kennedy et~al., 2019]{kennedy2019robust}
Kennedy, E.~H., Lorch, S., and Small, D.~S. (2019).
\newblock Robust causal inference with continuous instruments using the local instrumental variable curve.
\newblock {\em Journal of the Royal Statistical Society: Series B (Statistical Methodology)}, 81(1):121--143.

\bibitem[K{\"u}nzel et~al., 2019]{kunzel2019metalearners}
K{\"u}nzel, S.~R., Sekhon, J.~S., Bickel, P.~J., and Yu, B. (2019).
\newblock Metalearners for estimating heterogeneous treatment effects using machine learning.
\newblock {\em Proceedings of the national academy of sciences}, 116(10):4156--4165.

\bibitem[Leucht et~al., 2005]{leucht2005does}
Leucht, S., Kane, J.~M., Kissling, W., Hamann, J., Etschel, E., and Engel, R.~R. (2005).
\newblock What does the panss mean?
\newblock {\em Schizophrenia research}, 79(2-3):231--238.

\bibitem[Liu et~al., 2022]{liu2022modeling}
Liu, Y., Schnitzer, M.~E., Wang, G., Kennedy, E., Viiklepp, P., Vargas, M.~H., Sotgiu, G., Menzies, D., and Benedetti, A. (2022).
\newblock Modeling treatment effect modification in multidrug-resistant tuberculosis in an individual patient data meta-analysis.
\newblock {\em Statistical methods in medical research}, 31(4):689--705.

\bibitem[Marder et~al., 2007]{marder2007efficacy}
Marder, S.~R., Kramer, M., Ford, L., Eerdekens, E., Lim, P., Eerdekens, M., and Lowy, A. (2007).
\newblock Efficacy and safety of paliperidone extended-release tablets: results of a 6-week, randomized, placebo-controlled study.
\newblock {\em Biological Psychiatry}, 62(12):1363--1370.

\bibitem[{National Institute of Mental Health}, 2023]{schizophernia2023nimh}
{National Institute of Mental Health} (2023).
\newblock {Schizophrenia, U.S. Department of Health and Human Services, National Institutes of Health.}
\newblock Retrieved May 3, 2023, from \url{https://www.nimh.nih.gov/health/topics/schizophrenia}.

\bibitem[Oxman and Guyatt, 1992]{oxman1992consumer}
Oxman, A.~D. and Guyatt, G.~H. (1992).
\newblock A consumer's guide to subgroup analyses.
\newblock {\em Annals of internal medicine}, 116(1):78--84.

\bibitem[Pfanzagl, 2012]{pfanzagl2012}
Pfanzagl, J. (2012).
\newblock {\em Contributions to a general asymptotic statistical theory}, volume~13.
\newblock Springer Science \& Business Media.

\bibitem[Rice et~al., 2018]{rice2018re}
Rice, K., Higgins, J., and Lumley, T. (2018).
\newblock A re-evaluation of fixed effect (s) meta-analysis.
\newblock {\em Journal of the Royal Statistical Society: Series A (Statistics in Society)}, 181(1):205--227.

\bibitem[Robertson et~al., 2021]{robertson2021intercept}
Robertson, S.~E., Steingrimsson, J.~A., and Dahabreh, I.~J. (2021).
\newblock Using numerical methods to design simulations: revisiting the balancing intercept.
\newblock {\em American Journal of Epidemiology}.

\bibitem[Robins, 1988]{robins1988confidence}
Robins, J.~M. (1988).
\newblock Confidence intervals for causal parameters.
\newblock {\em Statistics in medicine}, 7(7):773--785.

\bibitem[Robins and Greenland, 2000]{robins2000d}
Robins, J.~M. and Greenland, S. (2000).
\newblock Causal inference without counterfactuals: comment.
\newblock {\em Journal of the American Statistical Association}, 95(450):431--435.

\bibitem[Robins et~al., 2000]{robins2000sensitivity}
Robins, J.~M., Rotnitzky, A., and Scharfstein, D.~O. (2000).
\newblock Sensitivity analysis for selection bias and unmeasured confounding in missing data and causal inference models.
\newblock In {\em Statistical models in epidemiology, the environment, and clinical trials}, pages 1--94. Springer.

\bibitem[Rothwell, 2005]{rothwell2005subgroup}
Rothwell, P.~M. (2005).
\newblock Subgroup analysis in randomised controlled trials: importance, indications, and interpretation.
\newblock {\em The Lancet}, 365(9454):176--186.

\bibitem[Rubin, 2010]{rubin2010reflections}
Rubin, D.~B. (2010).
\newblock Reflections stimulated by the comments of {S}hadish (2010) and {W}est and {T}hoemmes (2010).
\newblock {\em Psychological Methods}, 15(1):38--46.

\bibitem[Schick, 1986]{schick1986}
Schick, A. (1986).
\newblock On asymptotically efficient estimation in semiparametric models.
\newblock {\em The Annals of Statistics}, pages 1139--1151.

\bibitem[Schmid et~al., 2020]{schmid2020comparing}
Schmid, I., Rudolph, K.~E., Nguyen, T.~Q., Hong, H., Seamans, M.~J., Ackerman, B., and Stuart, E.~A. (2020).
\newblock Comparing the performance of statistical methods that generalize effect estimates from randomized controlled trials to much larger target populations.
\newblock {\em Communications in Statistics-Simulation and Computation}, pages 1--23.

\bibitem[Siddique et~al., 2019]{siddique2019causal}
Siddique, A.~A., Schnitzer, M.~E., Bahamyirou, A., Wang, G., Holtz, T.~H., Migliori, G.~B., Sotgiu, G., Gandhi, N.~R., Vargas, M.~H., Menzies, D., et~al. (2019).
\newblock Causal inference with multiple concurrent medications: A comparison of methods and an application in multidrug-resistant tuberculosis.
\newblock {\em Statistical methods in medical research}, 28(12):3534--3549.

\bibitem[Sun et~al., 2010]{sun2010subgroup}
Sun, X., Briel, M., Walter, S.~D., and Guyatt, G.~H. (2010).
\newblock Is a subgroup effect believable? updating criteria to evaluate the credibility of subgroup analyses.
\newblock {\em Bmj}, 340.

\bibitem[Sun et~al., 2014]{sun2014use}
Sun, X., Ioannidis, J.~P., Agoritsas, T., Alba, A.~C., and Guyatt, G. (2014).
\newblock How to use a subgroup analysis: users’ guide to the medical literature.
\newblock {\em Jama}, 311(4):405--411.

\bibitem[Vaart and Wellner, 1996]{vaart1996weak}
Vaart, A.~W. and Wellner, J.~A. (1996).
\newblock Weak convergence.
\newblock In {\em Weak convergence and empirical processes}, pages 16--28. Springer.

\bibitem[Van~der Vaart, 2000]{vandervaart2000asymptotic}
Van~der Vaart, A.~W. (2000).
\newblock {\em Asymptotic statistics}, volume~3.
\newblock Cambridge University Press.

\bibitem[VanderWeele, 2009]{vanderWeele2009}
VanderWeele, T.~J. (2009).
\newblock Concerning the consistency assumption in causal inference.
\newblock {\em Epidemiology}, 20(6):880--883.

\bibitem[Wang et~al., 2023]{wang2023evaluating}
Wang, G., Costello, M.~P., Pang, H., Zhu, J., Helms, H.-J., Reyes-Rivera, I., Platt, R.~W., Pang, M., and Koukounari, A. (2023).
\newblock Evaluating hybrid controls methodology in early-phase oncology trials: a simulation study based on the morpheus-uc trial.
\newblock {\em Pharmaceutical Statistics}.

\bibitem[Wang et~al., 2020]{wang2020estimating}
Wang, G., Schnitzer, M.~E., Menzies, D., Viiklepp, P., Holtz, T.~H., and Benedetti, A. (2020).
\newblock Estimating treatment importance in multidrug-resistant tuberculosis using targeted learning: An observational individual patient data network meta-analysis.
\newblock {\em Biometrics}, 76(3):1007--1016.

\bibitem[Wang et~al., 2007]{wang2007statistics}
Wang, R., Lagakos, S.~W., Ware, J.~H., Hunter, D.~J., and Drazen, J.~M. (2007).
\newblock Statistics in medicine—reporting of subgroup analyses in clinical trials.
\newblock {\em New England Journal of Medicine}, 357(21):2189--2194.

\end{thebibliography}





\clearpage
\textbf{\large\bf Supplementary material for ``Efficient estimation of subgroup treatment effects using multi-source data''}
\setcounter{page}{1}
\renewcommand{\thepage}{S\arabic{page}}
\begin{appendices}
\numberwithin{equation}{section}
\renewcommand{\theequation}{\thesection.\arabic{equation}}
\numberwithin{table}{section}
\renewcommand{\thetable}{\thesection.\arabic{table}}
\numberwithin{figure}{section}
\renewcommand{\thefigure}{\thesection.\arabic{figure}}
\numberwithin{assumption}{section}
\renewcommand{\theassumption}{\thesection.\arabic{assumption}}
\renewcommand{\thesection}{Web Appendix \Alph{section}}
\renewcommand{\thesubsection}{\Alph{section}.\arabic{subsection}}

\titleformat{\subsection}
      {\normalfont\fontsize{10}{12}\bfseries}{\thesubsection}{1em}{}
      

\renewcommand{\thetable}{S\arabic{table}}
\renewcommand{\thefigure}{S\arabic{figure}}
\renewcommand{\theequation}{S\arabic{equation}}
\renewcommand{\thetheorem}{S\arabic{theorem}}

\section{Identification of $\phi_{a, s}(\widetilde{x})$}
\label{appendix:identification2}
\begin{restatable}[Identification of subgroup potential outcome means in an internal data]{theorem}{thmidentificationcollection2}
\label{thm:identification2}
Under conditions \textit{A1} through \textit{A5}, the subgroup potential outcome mean in the internal source-specific data under treatment $a \in \mathcal A$, $\phi_{a, s}(\widetilde{x})=\E(Y^a | \widetilde{X}=\widetilde{x}, S=s )$, is identifiable by the observed data functional
\begin{equation} \label{eq:identification2g}
    \begin{split}
  \phi_{a, s} (\widetilde{x}) \equiv 
  \E\big\{\E(Y |A=a, X) | \widetilde{X}=\widetilde{x}, S=s\big\},
    \end{split}
\end{equation}
which can be equivalently expressed as 
\begin{equation} \label{eq:identification2w}
    \begin{split}
  \dfrac{1}{\Pr(\widetilde{X}=\widetilde{x}, S=s)}\E\Big\{\dfrac{I(A=a, \widetilde{X}=\widetilde{x})Y \Pr(S=s|X)}{\Pr(A=a|X)}\Big\}.
    \end{split}
\end{equation}
\end{restatable}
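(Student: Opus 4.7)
The plan is to establish the two identifying formulas in sequence: first show that $\phi_{a,s}(\widetilde{x}) = E\{E(Y \mid A=a, X) \mid \widetilde{X}=\widetilde{x}, S=s\}$, then show the inverse-weighted expression coincides with this g-formula expression.

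For the first formula, I would begin with $E(Y^a \mid \widetilde{X}=\widetilde{x}, S=s)$ and apply the tower property by conditioning further on $X$, which is legitimate because $\widetilde{X} \subset X$; this gives $E\{E(Y^a \mid X, S=s) \mid \widetilde{X}=\widetilde{x}, S=s\}$. Next I would invoke the exchangeability conditions in the inner expectation: A2 allows introducing the event $\{A=a\}$ into the conditioning (within source $s$, so that $E(Y^a \mid X, S=s) = E(Y^a \mid X, A=a, S=s)$), A1 (consistency) replaces $Y^a$ by $Y$ on this event, and then the implication A2 + A4 $\Rightarrow Y \indep S \mid (X, A=a)$ noted in the paper lets me drop $S=s$ from the conditioning in the inner expectation to arrive at $E(Y \mid A=a, X)$. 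Positivity assumptions A3 and A5 are used to make sure all conditional expectations and subsequent inverse weights are well-defined on the relevant event.

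For the second formula, I would work algebraically from the numerator $E\bigl\{I(A=a,\widetilde{X}=\widetilde{x}) Y \Pr(S=s\mid X)/\Pr(A=a\mid X)\bigr\}$. Conditioning first on $X$ and using $\widetilde{X} \subset X$, the factor $I(\widetilde{X}=\widetilde{x}) \Pr(S=s\mid X)/\Pr(A=a\mid X)$ is $X$-measurable, so it pulls out and leaves $E\{I(A=a) Y \mid X\} = \Pr(A=a\mid X)\, E(Y \mid A=a, X)$; the propensity factors cancel. Rewriting $\Pr(S=s\mid X) = E\{I(S=s)\mid X\}$ and pushing $I(\widetilde{X}=\widetilde{x})\, I(S=s)$ back inside the outer expectation via another application of the tower property gives $E\{I(\widetilde{X}=\widetilde{x}, S=s)\, E(Y \mid A=a, X)\}$. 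Dividing by $\Pr(\widetilde{X}=\widetilde{x}, S=s)$ then yields precisely the g-formula expression from the first part, completing the chain of equalities.

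The main bookkeeping obstacle is keeping the chain of conditioning sets straight: one must justify, step by step, which exchangeability assumption permits adding or removing which variable ($A$ via A2, $S$ via A4, and their combination for the $Y \indep S \mid (X,A=a)$ implication), while ensuring that the positivity conditions A3 and A5 make every denominator strictly positive on the relevant events and that $\widetilde{X} \subset X$ is invoked correctly whenever pulling indicators in and out of conditional expectations. Once these invocations are carefully enumerated, the rest reduces to routine use of iterated expectations and consistency.
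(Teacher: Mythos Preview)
Your proposal is correct and follows essentially the same route as the paper. The only cosmetic differences are (i) you apply consistency before dropping $S=s$ from the inner conditional (using the derived implication $Y\indep S\mid(X,A=a)$), whereas the paper drops $S=s$ first via A4 on $Y^a$ and applies consistency last, and (ii) for the weighting formula you argue from the IPW expression back to the g-formula while the paper proceeds in the opposite direction; both pairs of choices yield the same chain of equalities.
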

\begin{proof}
By the law of total expectation and Assumptions \textit{A2, A4, A1} respectively, we have
\begin{align*}
    \phi_{a, s}(\widetilde{x})= 
     & \E(Y^{a} | \widetilde{X}=\widetilde{x}, S=s)\\ 
    =& \E\big\{\E(Y^{a} | X, S=s) | \widetilde{X}=\widetilde{x}, S=s\big\}\\ 
    =& \E\big\{\E(Y^{a} | A=a, X, S=s) | \widetilde{X}=\widetilde{x}, S=s\big\}\\  
    =& \E\big\{\E(Y^{a} | A=a, X) | \widetilde{X}=\widetilde{x}, S=s\big\}\\ 
    =& \E\big\{\E(Y | A=a, X) | \widetilde{X}=\widetilde{x}, S=s\big\},
\end{align*}
which completes the derivation of the result in \eqref{eq:identification2g}. We can re-express \eqref{eq:identification2g} to use weighting (which relays on Assumptions \textit{A3} and \textit{A5}),
\begin{align*}
    \phi_{a, s}(\widetilde{x})
    =& \E\big\{\E(Y | A=a, X) | \widetilde{X}=\widetilde{x}, S=s\big\}\\
    =& \E\Bigg[\E\Big\{\dfrac{I(A=a)Y}{\Pr(A=a|X)}\Big| X\Big\}\Big|\widetilde{X}=\widetilde{x}, S=s\Bigg]\\
    =& \dfrac{1}{\Pr(\widetilde{X}=\widetilde{x}, S=s)}\E\Bigg[I(\widetilde{X}=\widetilde{x}, S=s)E\Big\{\dfrac{I(A=a )Y}{\Pr(A=a|X)}\Big| X\Big\}\Bigg]\\
    =& \dfrac{1}{\Pr(\widetilde{X}=\widetilde{x}, S=s)}\E\Big\{\dfrac{I(A=a, \widetilde{X}=\widetilde{x})Y \Pr(S=s|X)}{\Pr(A=a|X)}\Big\}.
\end{align*}
\end{proof}
\section{The first-order influence function of $\phi_{a, s}(\widetilde{x})$ under the non-parametric model}
\label{appendix:influence_function2}
The first-order influence function of $\phi_{a, s}(\widetilde{x})$ under the non-parametric model is 
\begin{align*}
\mathit\Phi_ {p_{0}}(a, s, \widetilde{x})=&\dfrac{1}{\Pr_ {p_{0}}(\widetilde{X}=\widetilde{x}, S=s)}
    \Big[I(\widetilde{X}=\widetilde{x}, S=s)\big\{\E_{p_{0}}(Y | A=a, X)-\phi_{a, s}(\widetilde{x})\big\}\\
    +&I(A=a, \widetilde{X}=\widetilde{x})\dfrac{\Pr_{p_{0}}(S=s|X)}{\Pr_{p_{0}}(A=a|X)}\big\{Y-\E_{p_{0}}(Y | A=a, X)\big\}\Big].
\end{align*}
\begin{proof}
Recall that under the nonparametric model, $\mathcal M_{\text{\tiny np}}$, for the observable data, $O = (X,S,A,Y)$, the density of the law of the observable data can be written as 
\begin{equation*}
    p(s,x,a,y) = p(x) p(s|x) p(a|x, s) p(y|a, x, s).
\end{equation*}
Under this model, the tangent space is the Hilbert space of mean zero random variables with finite variance. It can be decomposed as $L_2^0 = \Lambda_{X} \oplus \Lambda_{S|X} \oplus  \Lambda_{A|X, S} \oplus  \Lambda_{Y|A, X, S} $. We now derive the influence function for $\phi_{a, s}(\widetilde{x})$ under this nonparametric model. 

\noindent
Recall that
\begin{equation*} 
  \phi_{a, s}(\widetilde{x}) \equiv \E\big\{\E(Y | A=a, X) | \widetilde{X}=\widetilde{x}, S=s\big\}.
\end{equation*}
We will use the path differentiability of $\phi_{a, s}(\widetilde{x})$ to obtain the efficient influence function under the non-parametric model for the observed data \citep{bickel1993efficient}. To do so, we examine the derivative of $\phi_{a, s, p_t}(\widetilde{x})$ with respect to $t$; where the subscript $p_t$ denotes the dependence of $\psi_{a, s}(\widetilde{x})$ on a one-dimensional parametric sub-model $p_t$, indexed by $t \in [0,1)$, with $t = 0$ denoting the ``true'' data law. Furthermore, denote the function $u_{p}(\cdot)$ as the score function of the observed data with the (conditional) likelihood $p$.\\
Utilizing the law of total expectation, the tricks of $\E\big\{\E(u|w)v\big\}=\E\big\{\E(v|w)u\big\}$, subtracting the mean zero terms, and $\E(A|B)=\E\Big\{\dfrac{I(B=b)}{\Pr(B=b)}A\Big\}$ when $B$ is discrete repeatedly, we have
\begin{align*}
     &\dfrac{\partial \phi_{p_{t}}(a, \widetilde{x})}{\partial t}\Big|_{t=0}\\
    =&\dfrac{\partial}{\partial t}\E_{p_{t}}\big\{\E_{p_{t}}(Y | A=a, X) | \widetilde{X}=\widetilde{x}, S=s\big\}\Big|_{t=0}\\
    =&\dfrac{\partial}{\partial t}\E_{p_{t}}\big\{\E_{p_{0}}(Y | A=a, X) | \widetilde{X}=\widetilde{x}, S=s\big\}\Big|_{t=0}\\
    +&\E_{p_{0}}\Big\{\dfrac{\partial}{\partial t}\E_{p_{t}}(Y | A=a, X)\Big|_{t=0} \widetilde{X}=\widetilde{x}, S=s\Big\}\\
    =&\E_{p_{0}}\big\{E_{p_{0}}(Y | A=a, X) u_{Y|A, X} | \widetilde{X}=\widetilde{x}, S=s\big\}\\
    +& \E_{p_{0}}\big\{\E_{p_{0}}(Y u_{Y|A=a, X}| A=a, X) |\widetilde{X}=\widetilde{x}, S=s\big\}\\
    =&\E_{p_{0}}\Big\{\dfrac{I(\widetilde{X}=\widetilde{x}, S=s)}{\Pr_ {p_{0}}(\widetilde{X}=\widetilde{x}, S=s)}\E_{p_{0}}(Y | A=a, X) u_{Y|A, X}  \Big\}\\
    +& \E_{p_{0}}\Big(\E_{p_{0}}\Big[\big\{Y-\E_{p_{0}}(Y | A=a, X)\big\}u_{Y|A=a, X}| A=a, X\Big]\Big|\widetilde{X}=\widetilde{x}, S=s\Big)\\
    =&\E_{p_{0}}\Big[\dfrac{I(\widetilde{X}=\widetilde{x}, S=s)}{\Pr_ {p_{0}}(\widetilde{X}=\widetilde{x}, S=s)}\big\{\E_{p_{0}}(Y | A=a, X  )-\phi_{a, s}(\widetilde{x})\big\} u_{Y|A, X}  \Big]\\
    +& \E_{p_{0}}\Big(\dfrac{I(\widetilde{X}=\widetilde{x}, S=s)}{\Pr_ {p_{0}}(\widetilde{X}=\widetilde{x}, S=s)}\E_{p_{0}}\Big[\big\{Y-\E_{p_{0}}(Y | A=a, X)\big\}u(O)| A=a, X  \Big]\Big)\\
    =&\E_{p_{0}}\Big[\dfrac{I(\widetilde{X}=\widetilde{x}, S=s)}{\Pr_ {p_{0}}(\widetilde{X}=\widetilde{x}, S=s)}\big\{\E_{p_{0}}(Y | A=a, X)-\phi_{a, s}(\widetilde{x})\big\} u(O)  \Big]\\
    +& \E_{p_{0}}\Big[\dfrac{I(A=a, \widetilde{X}=\widetilde{x})\Pr_{p_{0}}(S=s|X)}{\Pr_ {p_{0}}(\widetilde{X}=\widetilde{x}, S=s)\Pr_{p_{0}}(A=a|X)}\big\{Y-\E_{p_{0}}(Y | A=a, X)\big\}u(O)\Big]\\
    =&\E_{p_{0}}\Big(u(O)\dfrac{I(\widetilde{X}=\widetilde{x}, S=s)}{\Pr_ {p_{0}}(\widetilde{X}=\widetilde{x}, S=s)}
    \Big[\big\{\E_{p_{0}}(Y | A=a, X)-\phi_{a, s}(\widetilde{x})\big\}\\
    +&\dfrac{I(A=a, \widetilde{X}=\widetilde{x})\Pr_{p_{0}}(S=s|X)}{\Pr_ {p_{0}}(\widetilde{X}=\widetilde{x}, S=s)\Pr_{p_{0}}(A=a|X)}\big\{Y-\E_{p_{0}}(Y | A=a, X)\big\}\Big]
    \Big).
\end{align*}
Thus, we conclude that the influence function of $\phi_{a, s}(\widetilde{x})$ is $\mathit\Phi_{p_0}(a, \widetilde{x})$.
\end{proof}
\section{Proof of Corollary \ref{cor:identification2}}
\label{appendix:corollary}
\begin{restatable}[Efficient influence function under semiparametric model]{corollary}{cor:identification2}
\label{cor:identification2}
The efficient influence function under the semi-parametric model which incorporating the constrain $Y \indep S | (X, A=a)$, or/and knowing the propensity score $\Pr(A|X, S=s)$ (e.g., the multi-source data are a collection of randomized clinical trials) is $\mathit\Phi_ {p_{0}}(a, s, \widetilde{x})$.
\end{restatable}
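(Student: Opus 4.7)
The strategy is to invoke the standard semiparametric fact that if $\mathcal{M}'$ is a semiparametric submodel of $\mathcal{M}_{\text{np}}$ with tangent space $\mathcal{T}' \subset L_2^0$, then the efficient influence function (EIF) under $\mathcal{M}'$ is the orthogonal projection of the nonparametric EIF onto $\mathcal{T}'$; in particular, if the nonparametric EIF already lies in $\mathcal{T}'$, the projection is trivial and the two EIFs coincide. Since the derivation in \ref{appendix:influence_function2} establishes that $\mathit\Phi_{p_{0}}(a, s, \widetilde{x})$ is the nonparametric EIF of $\phi_{a,s}(\widetilde{x})$, the task reduces to verifying that $\mathit\Phi_{p_{0}}$ lies in the tangent space induced by each restriction.

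First, I would record the tangent-space decomposition $L_2^0 = \Lambda_X \oplus \Lambda_{S|X} \oplus \Lambda_{A|X,S} \oplus \Lambda_{Y|A,X,S}$ under the nonparametric observed-data model. Imposing $Y \indep S | (X, A=a)$ collapses the $Y$-component to $\Lambda_{Y|A,X}$, giving $\mathcal{T}_1 = \Lambda_X \oplus \Lambda_{S|X} \oplus \Lambda_{A|X,S} \oplus \Lambda_{Y|A,X}$, while additionally assuming a known propensity $\Pr(A=a | X, S=s)$ eliminates $\Lambda_{A|X,S}$, yielding the smaller $\mathcal{T}_3 = \Lambda_X \oplus \Lambda_{S|X} \oplus \Lambda_{Y|A,X}$. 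Writing $\pi = \Pr(\widetilde{X}=\widetilde{x}, S=s)$, I would split $\mathit\Phi_{p_{0}} = h_1 + h_2$ into the outcome-regression piece $h_1(X,S) = \pi^{-1} I(\widetilde{X}=\widetilde{x}, S=s)\{\mu_a(X) - \phi_{a,s}(\widetilde{x})\}$ and the weighting-residual piece $h_2(Y,A,X) = \pi^{-1} I(A=a, \widetilde{X}=\widetilde{x})\{q_s(X)/\eta_a(X)\}\{Y - \mu_a(X)\}$. Because $h_1$ depends only on $(X,S)$, it decomposes into its $\Lambda_X$-part $\E[h_1|X]$ and its $\Lambda_{S|X}$-part $h_1 - \E[h_1|X]$, so $h_1 \in \Lambda_X \oplus \Lambda_{S|X}$ without appeal to any restriction.

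The substantive step is to check that $h_2$ lies in $\Lambda_{Y|A,X}$ under the restriction. Since $h_2$ does not depend on $S$, it contributes nothing in $\Lambda_{S|X}$. The identity $\E[h_2 | A, X] = \pi^{-1} I(A=a, \widetilde{X}=\widetilde{x})\{q_s(X)/\eta_a(X)\}\{\E(Y|A=a, X) - \mu_a(X)\} = 0$ holds by the very definition $\mu_a(X) = \E(Y|A=a, X)$; and under $Y \indep S | (X, A)$ the sharper identity $\E(Y|A=a, X, S) = \mu_a(X)$ upgrades this to $\E[h_2 | A, X, S] = 0$, which simultaneously confirms that $\Lambda_{Y|A,X} \subset \Lambda_{Y|A,X,S}$ under the restriction and that the $\Lambda_{A|X,S}$-projection $\E[h_2|A,X,S] - \E[h_2|X,S]$ of $h_2$ vanishes. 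Combining pieces, $\mathit\Phi_{p_{0}} \in \Lambda_X \oplus \Lambda_{S|X} \oplus \Lambda_{Y|A,X} = \mathcal{T}_3 \subset \mathcal{T}_1$, so the EIF under either semiparametric submodel (or under their intersection) equals $\mathit\Phi_{p_{0}}$.

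The only place the restriction performs nontrivial work is in annihilating the $\Lambda_{A|X,S}$-projection of $h_2$: without $Y \indep S | (X, A)$, the residual $\E(Y|A=a, X, S) - \mu_a(X)$ need not vanish, so $h_2$ would carry a genuine $\Lambda_{A|X,S}$-component and projection onto $\mathcal{T}_1$ or $\mathcal{T}_3$ would actually modify the nonparametric EIF. This algebraic step is therefore the main---and essentially only---obstacle; the rest is routine bookkeeping with the orthogonal tangent-space decomposition, and no regularity or empirical-process conditions are needed because the claim is purely an algebraic identity about membership in a tangent subspace.
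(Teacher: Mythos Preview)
Your proposal is correct and follows essentially the same route as the paper: split $\mathit\Phi_{p_0}$ into the outcome-regression piece $h_1$ and the weighting-residual piece $h_2$, then verify that each lies in a component of the restricted tangent space so that no projection is needed. The only cosmetic difference is the likelihood factorization---the paper uses $p(s)p(x|s)p(a|x,s)p(y|a,x,s)$ and further refines $X=(W,\widetilde{X})$ to place $h_1$ in $\Lambda_{W|\widetilde{X},S}$, whereas you use $p(x)p(s|x)\cdots$ and place $h_1$ in $\Lambda_X\oplus\Lambda_{S|X}$; the verification that $h_2\in\Lambda_{Y|A,X}$ via $\E[h_2\mid A,X]=0$ is identical.
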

\begin{proof}
Under the non-parametric model, the Hilbert space can be decomposed as $L_2^0 = \Lambda_{S} \oplus \Lambda_{X|S} \oplus  \Lambda_{A|X, S} \oplus  \Lambda_{Y|A, X, S}$. Now after incorporating the constrain $Y \indep S | (X, A=a)$ and knowing the propensity score $\Pr(A|X, S=s)$ will change the decomposition to $L_2^0 = \Lambda_{S} \oplus \Lambda_{X|S} \oplus  \Lambda_{Y|A, X}=\Lambda_{S} \oplus \Lambda_{W|\widetilde{X}, S} \oplus \Lambda_{\widetilde{X}|S} \oplus  \Lambda_{Y|A, X}$, where $X=(W, \widetilde{X})$.\\
Rewrite $\mathit\Phi_ {p_{0}}(a, s, \widetilde{x})$ as 
\begin{align*}
\mathit\Phi_ {p_{0}}(a, s, \widetilde{x})=&\dfrac{I(\widetilde{X}=\widetilde{x}, S=s)}{\Pr_ {p_{0}}(\widetilde{X}=\widetilde{x}, S=s)}
    \big\{\E_{p_{0}}(Y | A=a, X)-\phi_{a, s}(\widetilde{x})\big\}+\\
    &\dfrac{I(\widetilde{X}=\widetilde{x}, A=a)}{\Pr_ {p_{0}}(\widetilde{X}=\widetilde{x}, S=s)}
    \dfrac{\Pr_{p_{0}}(S=s|X)}{\Pr_{p_{0}}(A=a|X)}\big\{Y-\E_{p_{0}}(Y | A=a, X)\big\}.
\end{align*}
According to the total law of expectation, the first term is a function of $(X, S)$ and is mean zero conditional on $(\widetilde{X}, S)$. Thus it belongs to $\Lambda_{W|\widetilde{X}, S}$. The second term is a function of $(X, A, Y)$ and is mean zero conditional on $(X, A)$. Thus it belongs to $\Lambda_{Y|X, A}$. From these observations, we conclude that $\mathit\Phi_ {p_{0}}(a, s, \widetilde{x})$ belongs to $\Lambda_{W|\widetilde{X}, S}\oplus \Lambda_{Y|X, A}$, which indicates that the unique influence function $\mathit\Phi_ {p_{0}}(a, s, \widetilde{x})$ under the non-parametric model is also the efficient influence function under the semiparametric model which incorporate the two restrictions.\\
It is also trivial to see that $\mathit\Phi_ {p_{0}}(a, s, \widetilde{x})$ is also the efficient influence function under the semiparametric model which incorporate either constrain.
\end{proof}
\section{Proof of Lemma ~\ref{lemma:gamma}}
\label{app:lemma1}
\begin{restatable}{lemma}{lemmagamma}
\label{lemma:gamma}
For the above defined $\widehat \kappa(\widetilde{x}) = \{n^{-1} \sum_{i=1}^n I(\widetilde{X}=\widetilde{x}, S_i = s)\}^{-1}$,
\begin{enumerate}
    \item[(i)] $I(\widehat \kappa(\widetilde{x})^{-1}=0)=o_p(n^{-1/2})$
    \item[(ii)] $\widehat \kappa(\widetilde{x})\xrightarrow{P} \kappa(\widetilde{x})$.
\end{enumerate}
\end{restatable}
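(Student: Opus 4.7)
The plan is to exploit the fact that $\widehat\kappa(\widetilde{x})^{-1} = n^{-1}\sum_{i=1}^n I(\widetilde{X}_i=\widetilde{x}, S_i=s)$ is a sample mean of i.i.d.\ Bernoulli$(p)$ indicators, with $p = \Pr(\widetilde{X}=\widetilde{x}, S=s)$. Under Assumption \textit{A5}, together with the implicit requirement (noted just after \textit{A5} in the paper) that $\Pr(\widetilde{X}=\widetilde{x})>0$ for the subgroup of interest, we have
$$p = \Pr(S=s\mid \widetilde{X}=\widetilde{x})\,\Pr(\widetilde{X}=\widetilde{x}) > 0.$$
This single observation drives both parts of the lemma.

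For part (i), I would note that the event $\{\widehat\kappa(\widetilde{x})^{-1} = 0\}$ coincides exactly with the event that none of the $n$ i.i.d.\ draws lands in the cell $\{\widetilde{X}=\widetilde{x}, S=s\}$, so $\Pr(\widehat\kappa(\widetilde{x})^{-1}=0) = (1-p)^n$. Since this probability decays exponentially, for any fixed $\varepsilon > 0$ and any $n$ large enough that $n^{-1/2}\varepsilon < 1$,
$$\Pr\!\bigl(\, n^{1/2}\, I(\widehat\kappa(\widetilde{x})^{-1}=0) > \varepsilon\bigr) \;=\; \Pr\!\bigl(I(\widehat\kappa(\widetilde{x})^{-1}=0) = 1\bigr) \;=\; (1-p)^n \;\longrightarrow\; 0,$$
which gives $I(\widehat\kappa(\widetilde{x})^{-1}=0) = o_p(n^{-1/2})$ (in fact $o_p(n^{-k})$ for every $k>0$).

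For part (ii), the weak law of large numbers applied to the Bernoulli indicators yields $\widehat\kappa(\widetilde{x})^{-1} \xrightarrow{P} p = \kappa(\widetilde{x})^{-1}$. Because the reciprocal map $u \mapsto 1/u$ is continuous at $u=p>0$, the continuous mapping theorem then delivers $\widehat\kappa(\widetilde{x}) \xrightarrow{P} \kappa(\widetilde{x})$ on the event where $\widehat\kappa(\widetilde{x})^{-1} > 0$.

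The only mild subtlety, and the closest thing to an obstacle, is bookkeeping: $\widehat\kappa(\widetilde{x})$ is formally undefined on the null event $\{\widehat\kappa(\widetilde{x})^{-1}=0\}$, so one needs to fix a convention (e.g., set $\widehat\kappa(\widetilde{x})$ to an arbitrary value, or to $+\infty$, there) and then appeal to part (i) to absorb this vanishing event into the $o_p$ statement. With that convention in place, parts (i) and (ii) follow directly from the positivity assumption, the weak law of large numbers, and the continuous mapping theorem, with no further machinery needed.
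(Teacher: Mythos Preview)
Your proposal is correct and matches the paper's approach: both compute $\Pr(\widehat\kappa^{-1}=0)=(1-p)^n\to 0$ for part (i) and use the weak law of large numbers on $\widehat\kappa^{-1}$ for part (ii). The only cosmetic difference is that the paper carries out a direct $\varepsilon$--$\delta$ argument for the reciprocal in (ii), whereas you invoke the continuous mapping theorem; these are equivalent, and your handling of the null event $\{\widehat\kappa^{-1}=0\}$ via part (i) is slightly more explicit than the paper's.
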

\begin{proof}
For any $\varepsilon>0$,
\begin{align*}
    &\Pr\{|\sqrt{n}I(\widehat\kappa(\widetilde{x})^{-1}=0)-0|>\varepsilon\}\\
    \leqslant 
    & \Pr\{\widehat\kappa(\widetilde{x})^{-1}=0\}=1-\{\kappa(\widetilde{x})^{-1}\}^{n}\\
    \rightarrow 
    & \text{ } 0, \quad \mathrm{ as } \quad n\rightarrow \infty \quad \textrm{(By assumption \textit{A5})}.
\end{align*}
By Weak Law of Large Numbers, $\forall \varepsilon$, we have
$
    \Pr\{|\widehat\kappa(\widetilde{x})^{-1}-\kappa(\widetilde{x})^{-1}|>\varepsilon\},
$
as $n\rightarrow\infty$.
Next, fix any $\varepsilon>0$, set $\kappa\equiv\Pr\{\widetilde{X}=\widetilde{x}, S=s\}/2$, we have
\begin{align*}
    &\Pr\{|\widehat\kappa(\widetilde{x})-\kappa(\widetilde{x})|>\varepsilon\}\\
    \leqslant
    & \Pr\Big\{\widehat\kappa(\widetilde{x})^{-1}\geqslant\kappa, \Big|\widehat\kappa(\widetilde{x})-\kappa(\widetilde{x})\Big|>\varepsilon\Big\}+\Pr\{\widehat\kappa(\widetilde{x})^{-1}<\kappa\}\\
    \leqslant
    & \Pr\{|\widehat\kappa(\widetilde{x})^{-1}-\kappa(\widetilde{x})^{-1}|\geqslant2\kappa^{2}\varepsilon\}+\Pr\{|\widehat\kappa(\widetilde{x})^{-1}-\kappa(\widetilde{x})^{-1}|>\kappa\}\\
    \rightarrow
    & \text{ } 0, \quad \mathrm{ as } \quad n\rightarrow \infty
\end{align*}
\end{proof}
\section{Proof of Theorem~\ref{thm:estimation2}}
\label{app:Proof-Estimation2}
\begin{restatable}{theorem}{thmestimationSA}
\label{thm:estimation1}
If assumptions \textit{A1} through \textit{A6}, and \textit{(b1)} through \textit{(b3)} hold, then 
\begin{align*}
     \widehat \psi_a(\widetilde{x})-\psi_a(\widetilde{x})
    =\mathbb{P}_n\{\mathit\Psi_ {p_{0}}(a, \widetilde{x})\}+
    Q_{n}+o_p(n^{-1/2}),
\end{align*}
where $Q_{n}\lesssim O_p\big\{||\widehat g_a(X) -g_a(X)||(||\widehat e_a(X) -e_a(X)||+||\widehat p_a(X) -p_a(X)||)\big\}$, and nuisance parameters are estimated on a separate independent sample.\\
In particular, if $Q_{n} = o_p(n^{-1/2})$, then 
$
    \sqrt{n}\big\{\widehat \psi_a(\widetilde{x})-\psi_a(\widetilde{x})\big\}\rightarrow\mathcal{N}\Big[0, \E_{p_{0}}\big\{\mathit\Psi_ {p_{0}}(a, \widetilde{x})^{2}\big\}\Big].
$\\
That is, $\widehat \psi_a(\widetilde{x})$ is non-parametric efficient.
\end{restatable}
\begin{proof}
Let the nuisance parameter $\boldsymbol{\lambda}\equiv\{\kappa(\widetilde{x}), \mu_{a}(X), \eta_a(X), q_{s}(X)\}$ denote the asymptotic limits (assumed to exist) of $\boldsymbol{\widehat\lambda}\equiv\{\widehat\kappa(\widetilde{x}),  \widehat \mu_{a}(X), \widehat \eta_a(X), \widehat q_{s}(X)$\}. \\
For general functions $\kappa', \mu'_{a}(X), \eta'_{a} (X),$ and $q'(X)$ define 
\begin{align*}
G(\boldsymbol{\lambda}') =   \kappa'(\widetilde{x}) \Bigg[ I(\widetilde{X}=\widetilde{x}, S = s) \mu'_{a}(X) +
&I(A =a, \widetilde{X}=\widetilde{x}) \dfrac{ q'(X) }{ \eta'_{a} (X)} \Big\{Y  -  \mu'_{a}(X)  \Big\}    \Bigg].
\end{align*}
Using notation from \citet{vaart1996weak}, define $\mathbb{P}_n\big\{v(W)\big\} = n^{-1} \sum_{i=1}^n v(W_i)$ for some function $v$ and a random variable $W$. Using this notation, $\widehat \phi_{a, s}(\widetilde{x}) = \mathbb{P}_n\big\{G(\boldsymbol{\widehat\lambda})\big\}$.\\
We observe that we can decompose $\widehat \phi_{a, s}(\widetilde{x}) - \phi_{a, s}(\widetilde{x})$ into three parts as below.
\begin{align*}
     &\widehat \phi_{a, s}(\widetilde{x}) - \phi_{a, s}(\widetilde{x})=\mathbb{P}_n\big\{G(\boldsymbol{\widehat \lambda})\big\}-\mathbb{P}\big\{G(\boldsymbol{\lambda})\big\}\\
    =&\underbrace{
    (\mathbb{P}_n-\mathbb{P})\big\{G(\boldsymbol{\widehat \lambda})-G(\boldsymbol{\lambda})\big\} }_{1}+
    \underbrace{
    \mathbb{P}\big\{G(\boldsymbol{\widehat \lambda})-G(\boldsymbol{\lambda})\big\} }_{2}+
    \underbrace{
    (\mathbb{P}_n-\mathbb{P})G(\boldsymbol{\lambda})}_{3}.
\end{align*}
Working on term 1, note that
\begin{align*}
    &||G(\boldsymbol{\widehat \lambda})-G(\boldsymbol{\lambda})||\\
    =
    & \Bigg|\Bigg|I(\widetilde{X}=\widetilde{x})\widehat \kappa(\widetilde{x})\Bigg[I(S=s)\widehat \mu_a(X)+I(A=a)\dfrac{\widehat q(X)}{\widehat \eta_a(X)}\Big\{Y-\widehat \mu_a(X) \Big\}  \Bigg]\\
    & -I(\widetilde{X}=\widetilde{x}) \kappa(\widetilde{x})\Bigg[I(S=s) \mu_a(X)+I(A=a)\dfrac{q(X)}{\eta_a(X)}\Big\{Y- \mu_a(X) \Big\}  \Bigg]\Bigg|\Bigg|\\
    \leqslant 
    & \Bigg|\Bigg|I(\widetilde{X}=\widetilde{x})\widehat \kappa(\widetilde{x})\Bigg[I(S=s)\{\widehat \mu_a(X)- \mu_a(X)\}+I(A=a)\dfrac{\widehat q(X)}{\widehat \eta_a(X)}\Big\{ \mu_a(X)- \widehat \mu_a(X) \Big\}  \Bigg]\Bigg|\Bigg|\\
    & + \Bigg|\Bigg|I(\widetilde{X}=\widetilde{x}) \{\widehat \kappa(\widetilde{x})-\kappa(\widetilde{x})\}\Bigg[I(S=s) \mu_a(X)+I(A=a)\dfrac{q(X)}{\eta_a(X)}\Big\{Y- \mu_a(X) \Big\}  \Bigg]\Bigg|\Bigg|\\
    & + \Bigg|\Bigg|I(A=a, \widetilde{X}=\widetilde{x})\widehat \kappa(\widetilde{x}) \Big\{\dfrac{\widehat q(X)}{\widehat \eta_a(X)}-\dfrac{q(X)}{\eta_a(X)}\Big\}\Big\{ Y- \mu_a(X)\Big\}\Bigg|\Bigg|\\
    \lesssim
    & || \widehat \mu_a(X)-\mu_a(X)||+||\widehat \eta_a(X)-\eta_a(X)||+||\widehat q(X)-q(X)|| +||\widehat \kappa(\widetilde{x})-\kappa(\widetilde{x})||
\end{align*}
By assumptions \textit{(a3)} and Lemma \ref{lemma:gamma}, we have $||G(\boldsymbol{\widehat \lambda})-G(\boldsymbol{\lambda})|| = o_p(1)$, so that by Lemma 2 of \citet{kennedy2020sharp}, $(\mathbb{P}_n-\mathbb{P})\big\{G(\boldsymbol{\widehat \lambda})-G(\boldsymbol{\lambda})\big\}=o_p(n^{-1/2})$.

Working on term 2, we have
\begin{align*}
     &\E\{G(\boldsymbol{\widehat \lambda})-G(\boldsymbol{\lambda})\}\\
    =& \E\Bigg(I(\widetilde{X}=\widetilde{x})
    \Big\{
    \widehat \kappa(\widetilde{x})\bigg[I(S=s)\widehat \mu_a(X)+I(A=a)\dfrac{\widehat q(X)}{\widehat \eta_a(X)}\Big\{ Y- \widehat \mu_a(X) \Big\} 
    \bigg]\\
    &-\kappa(\widetilde{x})I(S=s) \mu_a(X)
    \Big\}\Bigg) \\
    =&\E\Bigg\{I(\widetilde{X}=\widetilde{x})
    \Big(\widehat \kappa(\widetilde{x})\Big[q(X)\{\widehat \mu_a(X) - \mu_a(X) \}
    +\widehat q(X)\dfrac{\eta_a(X)}{\widehat \eta_a(X)}\Big\{\widehat \mu_a(X) - \mu_a(X) \Big\}\Big]
    \Big)\Bigg\}\\
    & + \{\widehat \kappa(\widetilde{x})- \kappa(\widetilde{x})\}\E\big[q(X)\mu_a(X)\big]\\
    =& \E\Bigg\{I(\widetilde{X}=\widetilde{x})
    \Big(\widehat \kappa(\widetilde{x})\Big[\{\widehat q(X)- q(X)\}\{\widehat \mu_a(X) - \mu_a(X) \}\\
    & +\widehat q(X) \Big\{\dfrac{\eta_a(X)}{\widehat \eta_a(X)}-1\Big\}\Big\{\widehat \mu_a(X) - \mu_a(X) \Big\}\Big]
    \Big)\Bigg\}+ \Big\{\dfrac{\widehat\kappa(\widetilde{x})}{\kappa(\widetilde{x})} -1\Big\}\phi_{a, s}(\widetilde{x}).
\end{align*}
Combine term 2 and 3, we have
\begin{align}\nonumber
     &\mathbb{P}\big\{G(\boldsymbol{\widehat \lambda})-G(\boldsymbol{\lambda})\big\}+
    (\mathbb{P}_n-\mathbb{P})G(\boldsymbol{\lambda})\\\label{6}
    =& \mathbb{P}_n\{\mathit\Phi_ {p_{0}}(a, s, \widetilde{x})\}\\\nonumber
    &+\E\Bigg\{I(\widetilde{X}=\widetilde{x})
    \Big(\widehat \kappa(\widetilde{x})\Big[\{\widehat q(X)- q(X)\}\{\widehat \mu_a(X) - \mu_a(X) \}\\\label{7}
    & +\widehat q(X) \Big\{\dfrac{\eta_a(X)}{\widehat \eta_a(X)}-1\Big\}\Big\{\widehat \mu_a(X) - \mu_a(X) \Big\}\Big]
    \Big)\Bigg\}\\
    &+\Big\{\dfrac{\widehat\kappa(\widetilde{x})^{-1}}{\kappa(\widetilde{x})^{-1}} -1\Big\}\phi_{a, s}(\widetilde{x})+\Big\{\dfrac{\widehat\kappa(\widetilde{x})}{\kappa(\widetilde{x})} -1\Big\}\phi_{a, s}(\widetilde{x})\label{8}
\end{align}
Factoring and simplifying term (\ref{8}) yields $\phi_{a, s}(\widetilde{x})\left\{\widehat\kappa(\widetilde{x}) - \kappa(\widetilde{x})\right\}\left\{\frac{1}{\kappa(\widetilde{x})} - \frac{1}{\widehat\kappa(\widetilde{x})}\right\}$, which is \\
$o_{p}(n^{-1/2})$ by the central limit theorem. That is, both term 1 and (\ref{8}) are $o_{p}(n^{-1/2})$.

Therefore, combine term 1, 2 and 3, we conclude
\begin{align*}
     \widehat \phi_{a, s}(\widetilde{x})-\phi_{a, s}(\widetilde{x})
    =&
    O_{p}\Big[||\widehat \mu_a(X) -\mu_a(X)||\big\{||\widehat \eta_a(X) -\eta_a(X)||+||\widehat q(X) -q(X)||\big\}\Big]+\\
    &\mathbb{P}_n\{\mathit\Phi_ {p_{0}}(a, s, \widetilde{x})\}+o_p(n^{-1/2}).
\end{align*}
Further, if $\left\lVert \widehat\mu_a - \mu_a\right\rVert \left\{\left\lVert \widehat\eta_a - \eta_a\right\rVert + \left\lVert \widehat q_a - q_a\right\rVert\right\} = o_p(n^{-1/2})$, then
\begin{align*}
    \sqrt{n}\big\{\widehat \phi_{a, s}(\widetilde{x})-\phi_{a, s}(\widetilde{x})\big\}\rightarrow\mathcal{N}\Big[0, \E_{p_{0}}\big\{\mathit\Phi_ {p_{0}}(a, s, \widetilde{x})^{}\big\}\Big].
\end{align*}
That is, $\widehat \phi_{a, s}(\widetilde{x})$ is non-parametric efficient.
\end{proof}
\section{Proof of Corollary \ref{lemma:simultaneous confidence bands_bootstrap}}
\label{appendix:simultaneousconfidencebands_bootstrap}
\begin{restatable}{corollary}{}
\label{lemma:simultaneous confidence bands_bootstrap}
Let $\widehat \phi_{a, s}(\widetilde{x})$ and $\widehat\sigma_{a,s}(\widetilde{x})$ be the estimated $\phi_{a, s}(\widetilde{x})$ and its standard deviation respectively.
Suppose $\widetilde{x}$ is a vector with support $\widetilde{X}$, and let $$
t^b_{max}=\text{sup}_{\widetilde{x}\in\widetilde{X}}\Big|\dfrac{\widehat \phi_{a, s}^b(\widetilde{x})-\widehat \phi_{a, s}(\widetilde{x})}{\widehat\sigma_{a,s}(\widetilde{x})}\Big|,
$$
where $\widehat \phi_{a, s}^b(\widetilde{x})$ is the point estimate of $\widehat \phi_{a, s}(\widetilde{x}), b=1,\dots,B$ from the $b$-th bootstrapped data. If assumptions (a1) through (a3) hold, then the $(1-\alpha)$ simultaneous confidence bands of $\widehat \phi_{a, s}(\widetilde{x})$ is $\{\widehat \phi_{a, s}(\widetilde{x})-c(1-\alpha)\widehat \sigma_{a,s}(\widetilde{x}),\text{ } \widehat \phi_{a, s}(\widetilde{x})+c(1-\alpha)\widehat \sigma_{a,s}(\widetilde{x})\}$, where the critical value $c(1-\alpha)$ is the $(1-\alpha)$-quantile empirical distribution of $t^b_{max}$ over $B$ replicates.
\end{restatable}
\begin{proof}
The simultaneous confidence band with coverage probability $(1-\alpha)$ satisfies the following condition:
\begin{align*}
    &P\big\{\widehat \phi_{a, s}(\widetilde{x})-c(1-\alpha)\widehat \sigma_{a,s}(\widetilde{x}) \leqslant \phi_{a, s}(\widetilde{x}) \leqslant \widehat \phi_{a, s}(\widetilde{x})+c(1-\alpha)\widehat \sigma_{a,s}(\widetilde{x}), \text{ }\forall \widetilde{x}\in\widetilde{\mathcal{X}}\big\}=1-\alpha\\
    \implies &P\big\{\text{sup}_{\widetilde{x}\in\widetilde{\mathcal{X}}}\Big|\frac{\widehat \phi_{a, s}(\widetilde{x})-\phi_{a, s}(\widetilde{x})}{\widehat \sigma_{a,s}(\widetilde{x})}\Big|\leqslant c(1-\alpha)\big\}=1-\alpha
\end{align*}
Therefore, when setting $c(1-\alpha)$ as the $(1-\alpha)$-quantile empirical distribution of $t^b_{max}$ over $B$ replicates (Suppose B is large enough), the $(1-\alpha)$ simultaneous confidence bands of $\widehat \phi_{a, s}(\widetilde{x})$ is $\{\widehat \phi_{a, s}(\widetilde{x})-c(1-\alpha)\widehat \sigma_{a,s}(\widetilde{x}),\text{ } \widehat \phi_{a, s}(\widetilde{x})+c(1-\alpha)\widehat \sigma_{a,s}(\widetilde{x})\}$.
\end{proof}

\section{Approximation to the simultaneous confidence bands}
\label{app:SCB_another}
\begin{restatable}{corollary}{}
\label{lemma:simultaneous confidence bands}
Let $\widehat \phi_{a, s}(\widetilde{x})$ and $\widehat\sigma_{a,s}(\widetilde{x})$ be the estimated $\phi_{a, s}(\widetilde{x})$ and its standard deviation respectively. Suppose $\widetilde{x}$ is a vector with $d$ levels, and denote $\mathcal{N}_{d}^{b}$ by the $b$-th bootstrap draw from $N(0, I_d), b=1...B$. If $\left\lVert \widehat\mu_a(X) - \mu_a(X)\right\rVert \left\{\left\lVert \widehat\eta_a(X) - \eta_a(X)\right\rVert + \left\lVert \widehat q_s(X) - q_s(X)\right\rVert\right\} = o_p(n^{-1/2})$ and assumptions (a1) through (a3) hold, then the $(1-\alpha)$ asymptotic simultaneous confidence bands of $\widehat \phi_{a, s}(\widetilde{x})$ is $\{\widehat \phi_{a, s}(\widetilde{x})-c(1-\alpha)\widehat \sigma_{a,s}(\widetilde{x}),\text{ } \widehat \phi_{a, s}(\widetilde{x})+c(1-\alpha)\widehat \sigma_{a,s}(\widetilde{x})\}$, where the critical value $c(1-\alpha)$ is the $(1-\alpha)$-quantile empirical distribution of $\text{max}(|\mathcal{N}_{d}^{b}|)$ over $B$ replicates.
\end{restatable}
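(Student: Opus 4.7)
The plan is to upgrade the pointwise asymptotic normality established in Theorem~\ref{thm:estimation2} to a joint (multivariate) central limit theorem over the $d$ levels of $\widetilde{x}$, and then apply the continuous mapping theorem to the maximum standardized deviation. Concretely, writing $\widetilde{\mathcal{X}} = \{\widetilde{x}_1, \ldots, \widetilde{x}_d\}$ and stacking the per-level expansions from Theorem~\ref{thm:estimation2}, one obtains
\begin{equation*}
  \sqrt{n}\bigl\{\widehat\phi_{a,s}(\widetilde{x}_k) - \phi_{a,s}(\widetilde{x}_k)\bigr\}_{k=1}^{d}
  = \sqrt{n}\,\mathbb{P}_n\bigl\{\mathit\Phi_{p_0}(a,s,\widetilde{x}_k)\bigr\}_{k=1}^{d} + o_p(1),
\end{equation*}
where the remainder is $o_p(1)$ under the rate assumption. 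By the multivariate central limit theorem (via the Cramér--Wold device), the right-hand side converges in distribution to $\mathcal{N}(0, \Sigma)$, where $\Sigma_{k\ell} = \E_{p_0}\{\mathit\Phi_{p_0}(a,s,\widetilde{x}_k)\,\mathit\Phi_{p_0}(a,s,\widetilde{x}_\ell)\}$.

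The crucial structural observation is that every term in $\mathit\Phi_{p_0}(a, s, \widetilde{x})$ contains the indicator $I(\widetilde{X}=\widetilde{x})$ as a factor. Therefore, for any two distinct levels $\widetilde{x}_k \neq \widetilde{x}_\ell$ and every realization of $O$,
\begin{equation*}
  \mathit\Phi_{p_0}(a,s,\widetilde{x}_k)\,\mathit\Phi_{p_0}(a,s,\widetilde{x}_\ell) = 0,
\end{equation*}
since $I(\widetilde{X}=\widetilde{x}_k)\,I(\widetilde{X}=\widetilde{x}_\ell)=0$ almost surely. This makes $\Sigma$ diagonal with entries $\Sigma_{kk} = \E_{p_0}\{\mathit\Phi_{p_0}(a,s,\widetilde{x}_k)^2\} = \sigma_{a,s}(\widetilde{x}_k)^2$. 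Standardizing each coordinate by its limiting standard deviation yields
\begin{equation*}
  \left\{\frac{\sqrt{n}\,[\widehat\phi_{a,s}(\widetilde{x}_k) - \phi_{a,s}(\widetilde{x}_k)]}{\sigma_{a,s}(\widetilde{x}_k)}\right\}_{k=1}^{d} \rightsquigarrow \mathcal{N}(0, I_d).
\end{equation*}

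The rest assembles by Slutsky and the continuous mapping theorem. Since $\widehat\sigma_{a,s}(\widetilde{x}_k)$ is consistent for $\sigma_{a,s}(\widetilde{x}_k)$ (by the usual plug-in arguments under assumptions \emph{(a1)}--\emph{(a3)}), one can replace $\sigma$ by $\widehat\sigma$ without changing the joint limit. Applying the continuous mapping theorem with the function $v \mapsto \max_{k}|v_k|$ gives
\begin{equation*}
  \sup_{\widetilde{x}\in\widetilde{\mathcal{X}}}\left|\frac{\widehat\phi_{a,s}(\widetilde{x}) - \phi_{a,s}(\widetilde{x})}{\widehat\sigma_{a,s}(\widetilde{x})/\sqrt{n}\cdot\sqrt{n}}\cdot\right| \rightsquigarrow \max_{1\le k\le d}|Z_k|, \quad Z = (Z_1,\ldots,Z_d) \sim \mathcal{N}(0, I_d).
\end{equation*}
Inverting this into a simultaneous coverage probability statement (as in the proof of Corollary~\ref{lemma:simultaneous confidence bands_bootstrap}) shows that the $(1-\alpha)$ quantile of $\max_{k}|Z_k|$ furnishes a valid asymptotic critical value. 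Monte Carlo simulation of $B$ draws of $\mathcal{N}_d^b \sim \mathcal{N}(0, I_d)$ approximates this quantile to arbitrary precision as $B \to \infty$ by the Glivenko--Cantelli theorem.

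The main obstacle, and where the argument requires care, is justifying that $\widehat\sigma_{a,s}(\widetilde{x})$ is uniformly consistent across the $d$ subgroups (so that Slutsky can be applied jointly) when the nuisance estimates are obtained via flexible data-adaptive methods; this follows from the sample-splitting setup already invoked in Section~\ref{sec:estimation2}, together with assumptions \emph{(a1)}--\emph{(a3)}, but should be made explicit. A secondary but important point is verifying that the rate condition imposed in the statement controls the remainder uniformly over $\widetilde{x}\in\widetilde{\mathcal{X}}$, which is immediate since $d$ is finite and the per-level bound from Theorem~\ref{thm:estimation2} applies to each of the finitely many coordinates.
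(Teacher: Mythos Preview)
Your proposal is correct and follows the same high-level route as the paper---upgrade pointwise asymptotic normality from Theorem~\ref{thm:estimation2} to a statement about the maximum standardized deviation over the $d$ levels, then read off the critical value from the $(1-\alpha)$-quantile of $\max_k |Z_k|$ with $Z \sim \mathcal{N}(0, I_d)$.

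The main difference is in rigor and explicitness. The paper's own argument is brief: it states the pointwise limit, asserts (citing \citet{belloni2015some}) that the standardized statistic can be simulated by an empirical Gaussian process, and then passes directly from the simultaneous coverage condition to $P\{\max(|\mathcal{N}_d^b|) \le c(1-\alpha)\} = 1-\alpha$ without ever explaining \emph{why} the limiting covariance is $I_d$ rather than a general correlated Gaussian. Your proposal supplies exactly the missing piece: the structural observation that $\mathit\Phi_{p_0}(a,s,\widetilde{x}_k)\,\mathit\Phi_{p_0}(a,s,\widetilde{x}_\ell) = 0$ almost surely for $k\neq\ell$, because every term in the influence function carries the indicator $I(\widetilde{X}=\widetilde{x})$. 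This is what forces $\Sigma$ to be diagonal and, after standardization, yields the identity covariance. Your explicit invocation of the multivariate CLT, Slutsky, and the continuous mapping theorem is also more careful than the paper's treatment. In short, you have not taken a different route so much as filled in the key step the paper leaves implicit.
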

\noindent
We verified numerically that when the sample size $n$ and bootstrap replicate $B$ are sufficiently large, the critical quantile $c(1-\alpha)$ obtained from Corollary \ref{lemma:simultaneous confidence bands_bootstrap} converges to the one obtained from Corollary \ref{lemma:simultaneous confidence bands}.
\label{appendix:simultaneousconfidencebands}
\begin{proof}
By Theorem \ref{thm:estimation2}, when the assumptions hold, \begin{align*}
    \sqrt{n}\big\{\widehat \phi_{a, s}(\widetilde{x})-\phi_{a, s}(\widetilde{x})\big\}\rightarrow\mathcal{N}\Big[0, \E_{p_{0}}\big\{\mathit\Phi_ {p_{0}}(a, s, \widetilde{x})^{}\big\}\Big], n\rightarrow\infty.
\end{align*}
Therefore, 
$$
\frac{\widehat \phi_{a, s}(\widetilde{x})-\phi_{a, s}(\widetilde{x})}{\sqrt{\frac{1}{n}\widehat\E_{p_{0}}\{\mathit\Phi_ {p_{0}}(a, s, \widetilde{x})^{2}}\}}=\frac{\widehat \phi_{a, s}(\widetilde{x})-\phi_{a, s}(\widetilde{x})}{\widehat \sigma_{a,s}(\widetilde{x})},
$$ which can be simulated by the empirical Gaussian process $\mathcal{N}_{1}^{b}$ \citep{belloni2015some}.\\
The simultaneous confidence band with coverage probability $(1-\alpha)$ satisfies the following condition:
\begin{align*}
    &P\big\{\widehat \phi_{a, s}(\widetilde{x})-c(1-\alpha)\widehat \sigma_{a,s}(\widetilde{x}) \leqslant \phi_{a, s}(\widetilde{x}) \leqslant \widehat \phi_{a, s}(\widetilde{x})+c(1-\alpha)\widehat \sigma_{a,s}(\widetilde{x}), \text{ }\forall \widetilde{x}\in\widetilde{\mathcal{X}}\big\}=1-\alpha\\
    \implies &P\big\{\text{sup}_{\widetilde{x}\in\widetilde{\mathcal{X}}}\Big|\frac{\widehat \phi_{a, s}(\widetilde{x})-\phi_{a, s}(\widetilde{x})}{\widehat \sigma_{a,s}(\widetilde{x})}\Big|\leqslant c(1-\alpha)\big\}=1-\alpha\\
    \implies &P\big\{\text{max}(|\mathcal{N}_{d}^{b}|)\leqslant c(1-\alpha)\big\}=1-\alpha.
\end{align*}
Therefore, when setting $c(1-\alpha)$ as the $(1-\alpha)$-quantile of $\text{max}(|\mathcal{N}_{d}^{b}|)$, the $(1-\alpha)$ simultaneous confidence bands of $\widehat \phi_{a, s}(\widetilde{x})$ is $\{\widehat \phi_{a, s}(\widetilde{x})-c(1-\alpha)\widehat \sigma_{a,s}(\widetilde{x}),\text{ } \widehat \phi_{a, s}(\widetilde{x})+c(1-\alpha)\widehat \sigma_{a,s}(\widetilde{x})\}$.\\
Heuristically, when the sample size and bootstrap replicates are sufficiently large, the empirical distribution of $\dfrac{\widehat \phi_{a, s}^b(\widetilde{x})-\widehat \phi_{a, s}(\widetilde{x})}{\widehat\sigma_{a,s}(\widetilde{x})}$ in Corollary \ref{lemma:simultaneous confidence bands_bootstrap} approaches to a standard normal distribution, which also justifies the claim of Corollary \ref{lemma:simultaneous confidence bands}.
\end{proof}\clearpage
\section{Identification of $\psi_a(\widetilde{x})$}
\label{appendix:identification1}
\begin{restatable}[Identification of potential outcome means in an external data]{theorem}{thmidentificationcollection}
\label{thm:identification1}
Under conditions \textit{A1} through \textit{A6}, the subgroup potential outcome mean in external data under treatment $a \in \mathcal A$, $\psi_a(\widetilde{x})=\E(Y^a | \widetilde{X}=\widetilde{x}, R = 0 )$, is identifiable by the observed data 
\begin{equation}
\label{eq:identification1g}
  \psi_a(\widetilde{x}) \equiv 
  \E\big\{\E(Y | A=a, X, R=1) | \widetilde{X}=\widetilde{x}, R=0\big\},
\end{equation}
which can be equivalently expressed as 
\begin{equation*} 
  \psi (a,\widetilde{x}) = \dfrac{1}{\Pr(\widetilde{X}=\widetilde{x}, R=0)} \E\Big\{\frac{I(A=a, \widetilde{X}=\widetilde{x}, R=1) Y \Pr(R=0|X)}{\Pr(R=1|X) \Pr(A=a|X, R=1)} \Big\}.
\end{equation*}
\end{restatable}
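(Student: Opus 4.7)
The plan is to mirror the proof of the internal-target Theorem 1 (given in \ref{appendix:identification2}) and adapt each step to the presence of the external-data indicator $R$. The crucial new ingredient is A6: conditioning on $R=0$ precludes direct observation of $Y$ and $A$, so A6 is needed to bridge the covariate-specific potential outcome mean between the external population ($R=0$) and the multi-source data ($R=1$), after which the argument proceeds in parallel to Theorem 1.

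For the g-formula, I would start with the law of iterated expectation conditioning on $X$:
\begin{equation*}
\psi_a(\widetilde{x}) = E\{E(Y^a \mid X, R=0) \mid \widetilde{X}=\widetilde{x}, R=0\}.
\end{equation*}
Applying A6 yields $E(Y^a \mid X, R=0) = E(Y^a \mid X)$. It remains to express $E(Y^a \mid X)$ as an observed-data functional, which is only possible on the $R=1$ subset. Decomposing over $S$ by iterated expectation (noting that $\{R=1\} = \{S \in \mathcal{S}\}$), I would apply A2 to insert $A=a$ in the inner conditioning set within each source, A4 to show the resulting expression does not depend on $s$, and consistency (A1) to replace $Y^a$ by $Y$. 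Equivalently, one may invoke the consequence of A2 and A4 highlighted in the paper, namely $Y \indep S \mid (X, A=a)$, which gives $E(Y \mid A=a, X, S=s) = E(Y \mid A=a, X, R=1)$ directly, producing the stated g-formula.

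For the weighting form, starting from the g-formula I would rewrite the outer expectation as an indicator-weighted expectation:
\begin{equation*}
\psi_a(\widetilde{x}) = \frac{1}{\Pr(\widetilde{X}=\widetilde{x}, R=0)} E\{I(\widetilde{X}=\widetilde{x}, R=0)\, E(Y \mid A=a, X, R=1)\}.
\end{equation*}
Since $\widetilde{X}$ is a function of $X$, iterated expectation with respect to $X$ converts $I(R=0)$ into $\Pr(R=0 \mid X)$. The inner conditional mean is then rewritten by standard inverse-probability identification as $E[I(A=a, R=1)Y/\{\Pr(A=a \mid X, R=1)\Pr(R=1 \mid X)\} \mid X]$, legitimized by the positivity of the joint treatment-source conditional probability. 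Consolidating inside a single outer expectation reproduces the stated weighting formula.

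The main obstacle I anticipate is tracking the conditioning sets carefully: A2 and A4 are stated in terms of $S$, A6 in terms of $R$, and the argument hops between $\{R=0\}$, the unconditional distribution of $X$, and $\{R=1\}$. The identity $\{R=1\} = \{S \in \mathcal{S}\}$ needs to be used at the right moment to convert a source-indexed conditioning set into one indexed by $R$. A secondary technicality is positivity: the weighting form additionally requires $\Pr(R=1 \mid X) > 0$ on the support where the integrand is evaluated and $\Pr(\widetilde{X}=\widetilde{x}, R=0) > 0$ for the normalization, which extend but are not explicitly listed among A3 and A5.
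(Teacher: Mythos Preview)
Your proposal is correct and follows essentially the same route as the paper: iterated expectation on $X$, then A6 to transport the inner mean across $R$, then A2/A4 to insert $A=a$ within $R=1$, and A1 for consistency; the weighting form is obtained by the same indicator/IPW rewriting you describe, just with the order of the inner-IPW step and the outer-indicator step reversed. The only cosmetic difference is that the paper uses A6 to pass directly from $E(Y^a\mid X,R=0)$ to $E(Y^a\mid X,R=1)$ rather than stopping at $E(Y^a\mid X)$, and then inserts $A$ in one line without explicitly decomposing over $S$.
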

\begin{proof}
By the law of total expectation and Assumptions \textit{A6, A4, A2, A1} respectively, we have
\begin{align*}
    \psi_a(\widetilde{x})= 
     & \E(Y^{a} | \widetilde{X}=\widetilde{x}, R=0)\\ 
    =& \E\big\{\E(Y^{a} | X, R=0) | \widetilde{X}=\widetilde{x}, R=0\big\}\\ 
    =& \E\big\{\E(Y^{a} | X, R=1) | \widetilde{X}=\widetilde{x}, R=0\big\}\\  
    =& \E\big\{\E(Y^{a} | A, X, R=1) | \widetilde{X}=\widetilde{x}, R=0\big\}\\ 
    =& \E\big\{\E(Y | A=a, X, R=1) | \widetilde{X}=\widetilde{x}, R=0\big\},
\end{align*}
which completes the derivation of the result in \eqref{eq:identification1g}. We can re-express \eqref{eq:identification1g} to use weighting (which relays on Assumptions \textit{A3} and \textit{A5}),
\begin{align*}
    \psi_a(\widetilde{x})
    =& \E\big\{\E(Y | A=a, X, R=1) | \widetilde{X}=\widetilde{x}, R=0\big\}\\
    =& \E\Bigg[\E\Big\{\dfrac{I(A=a, R=1 )Y}{\Pr(R=1|X)\Pr(A=a|X, R=1)}\Big| X\Big\}\Big|\widetilde{X}=\widetilde{x}, R=0\Bigg]\\
    =& \dfrac{1}{\Pr(\widetilde{X}=\widetilde{x}, R=0)}\E\Bigg[I(\widetilde{X}=\widetilde{x}, R=0)E\Big\{\dfrac{I( A=a, R=1)Y}{\Pr(R=1|X)\Pr(A=a|X, R=1)}\Big| X\Big\}\Bigg]\\
    =& \dfrac{1}{\Pr(\widetilde{X}=\widetilde{x}, R=0)}\E\Big\{\dfrac{I(A=a, \widetilde{X}=\widetilde{x}, R=1)Y \Pr(R=0|X)}{\Pr(R=1|X)\Pr(A=a|X, R=1)}\Big\}.
\end{align*}
\end{proof}

\section{The first-order influence function of $\psi_a(\widetilde{x})$ under the non-parametric model}
\label{appendix:influence_function1}
The first-order influence function of $\psi_a(\widetilde{x})$ under the non-parametric model is 
\begin{align*}
\mathit\Psi_ {p_{0}}(a, \widetilde{x})=&\dfrac{1}{\Pr_ {p_{0}}(\widetilde{X}=\widetilde{x}, R=0)}
    \Big[I(\widetilde{X}=\widetilde{x}, R=0)\big\{\E_{p_{0}}(Y | A=a, X, R=1)-\psi_a(\widetilde{x})\big\}\\
    +&\dfrac{\Pr_{p_{0}}(R=0|X)}{1-\Pr_{p_{0}}(R=0|X)}\dfrac{I(A=a, \widetilde{X}=\widetilde{x}, R=1)}{\Pr_{p_{0}}(A=a|X, R=1)}\big\{Y-\E_{p_{0}}(Y | A=a, X, R=1)\big\}\Big].
\end{align*}
\begin{proof}
Recall that under the nonparametric model, $\mathcal M_{\text{\tiny np}}$, for the observable data $O$, the density of the law of the observable data can be written as 
\begin{equation*}
    p(r,x,a,y) = p(r) p(x|r) p(a|x, r) p(y|a, x, r).
\end{equation*}
Under this model, the tangent space is the Hilbert space of mean zero random variables with finite variance. It can be decomposed as $L_2^0 = \Lambda_{R} \oplus \Lambda_{X|R} \oplus  \Lambda_{A|R,X} \oplus  \Lambda_{Y|R, X, A} $. We now derive the influence function for $\psi_a(\widetilde{x})$ under this nonparametric model. 

\noindent
Recall that
\begin{equation*} 
  \psi_a(\widetilde{x}) \equiv \E\big\{\E(Y | A=a, X, R=1) | \widetilde{X}=\widetilde{x}, R=0\big\}.
\end{equation*}
Similar to \ref{appendix:influence_function2}, we have
\begin{align*}
     &\dfrac{\partial \psi_{p_{t}}(a, \widetilde{x})}{\partial t}\Big|_{t=0}\\
    =&\dfrac{\partial}{\partial t}\E_{p_{t}}\big\{\E_{p_{t}}(Y | A=a, X, R=1) | \widetilde{X}=\widetilde{x}, R=0\big\}\Big|_{t=0}\\
    =&\dfrac{\partial}{\partial t}\E_{p_{t}}\big\{\E_{p_{0}}(Y | A=a, X, R=1) | \widetilde{X}=\widetilde{x}, R=0\big\}\Big|_{t=0}\\
    +&\E_{p_{0}}\Big\{\dfrac{\partial}{\partial t}\E_{p_{t}}(Y | A=a, X, R=1)\Big|_{t=0} \widetilde{X}=\widetilde{x}, R=0\Big\}\\
    =&\E_{p_{0}}\big\{E_{p_{0}}(Y | A=a, X, R=1) u_{Y|A, X, R=1} | \widetilde{X}=\widetilde{x}, R=0\big\}\\
    +& \E_{p_{0}}\big\{\E_{p_{0}}(Y u_{Y|A=a, X, R=1}| A=a, X, R=1) |\widetilde{X}=\widetilde{x}, R=0\big\}\\
    =&\E_{p_{0}}\Big\{\dfrac{I(\widetilde{X}=\widetilde{x}, R=0)}{\Pr_ {p_{0}}(\widetilde{X}=\widetilde{x}, R=0)}\E_{p_{0}}(Y | A=a, X, R=1) u_{Y|A, X, R}  \Big\}\\
    +& \E_{p_{0}}\Big(\E_{p_{0}}\Big[\big\{Y-\E_{p_{0}}(Y | A=a, X, R=1)\big\}u_{Y|A=a, X, R=1}| A=a, X, R=1\Big]\Big|\widetilde{X}=\widetilde{x}, R=0\Big)\\
    =&\E_{p_{0}}\Big[\dfrac{I(\widetilde{X}=\widetilde{x}, R=0)}{\Pr_ {p_{0}}(\widetilde{X}=\widetilde{x}, R=0)}\big\{\E_{p_{0}}(Y | A=a, X, R=1)-\psi_a(\widetilde{x})\big\} u_{Y|A, X, R}  \Big]\\
    +& \E_{p_{0}}\Big(\dfrac{I(\widetilde{X}=\widetilde{x}, R=0)}{\Pr_ {p_{0}}(\widetilde{X}=\widetilde{x}, R=0)}\E_{p_{0}}\Big[\big\{Y-\E_{p_{0}}(Y | A=a, X, R=1)\big\}u(O)| A=a, X, R=1\Big]\Big)\\
    =&\E_{p_{0}}\Big[\dfrac{I(\widetilde{X}=\widetilde{x}, R=0)}{\Pr_ {p_{0}}(\widetilde{X}=\widetilde{x}, R=0)}\big\{\E_{p_{0}}(Y | A=a, X, R=1)-\psi_a(\widetilde{x})\big\} u(O)  \Big]\\
    +& \E_{p_{0}}\Big[\E_{p_{0}}\Big\{ \dfrac{I(\widetilde{X}=\widetilde{x}, R=0)}{\Pr_ {p_{0}}(\widetilde{X}=\widetilde{x}, R=0)}\Big| A=a, X, R=1\Big\}
    \big\{Y-\E_{p_{0}}(Y | A=a, X, R=1)\big\}u(O)\Big]\\
    =&\E_{p_{0}}\Big(u(O)\dfrac{I(\widetilde{X}=\widetilde{x})}{\Pr_ {p_{0}}(\widetilde{X}=\widetilde{x}, R=0)}
    \Big[I(R=0)\big\{\E_{p_{0}}(Y | A=a, X, R=1)-\psi_a(\widetilde{x})\big\}\\
    +&R\dfrac{\Pr_{p_{0}}(R=0|X)}{1-\Pr_{p_{0}}(R=0|X)}\dfrac{I(A=a)}{\Pr_{p_{0}}(A=a|X, R=1)}\big\{Y-\E_{p_{0}}(Y | A=a, X, R=1)\big\}\Big]
    \Big).
\end{align*}
Thus, we conclude that the influence function of $\psi_a(\widetilde{x})$ is $\mathit\Psi_{p_0}(a, \widetilde{x})$.
\end{proof}
\section{}
\label{app:cor:sameIF_phi}
\begin{restatable}[Efficient influence function under semiparametric model]{corollary}{cor:identification1}
\label{cor:identification1}
The efficient influence function under the semi-parametric model which incorporates the propensity score $\Pr(A=a|X, R=1)$ (e.g., the multi-source data are a collection of randomized clinical trials) is $\mathit\Psi_ {p_{0}}(a, \widetilde{x})$.
\end{restatable}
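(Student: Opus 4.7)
The plan is to follow the exact same strategy used for Corollary~\ref{cor:identification2}: argue that $\mathit\Psi_{p_0}(a,\widetilde{x})$ already lies in the restricted tangent space of the semiparametric submodel, so that by uniqueness of the nonparametric influence function, it must coincide with the efficient influence function under that submodel.

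First, I would write down the relevant factorization of the observable data density. Since $(A,Y)$ are only observed when $R=1$, the law factors as $p(r,x,a,y) = p(r)\,p(x\mid r)\,p(a\mid x,r=1)\,p(y\mid a,x,r=1)$, so the nonparametric tangent space decomposes as $L_2^0 = \Lambda_R \oplus \Lambda_{X\mid R} \oplus \Lambda_{A\mid X,R=1} \oplus \Lambda_{Y\mid A,X,R=1}$. Knowing the propensity score $\Pr(A=a\mid X, R=1)$ removes the component $\Lambda_{A\mid X,R=1}$, so the restricted tangent space under the semiparametric submodel is $\Lambda_R \oplus \Lambda_{X\mid R} \oplus \Lambda_{Y\mid A,X,R=1}$.

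Next, I would rewrite $\mathit\Psi_{p_0}(a,\widetilde{x})$ as a sum of two pieces and place each in the correct tangent subspace. The first piece, $\frac{I(\widetilde{X}=\widetilde{x},R=0)}{\Pr_{p_0}(\widetilde{X}=\widetilde{x},R=0)}\{\E_{p_0}(Y\mid A=a,X,R=1)-\psi_a(\widetilde{x})\}$, depends only on $(X,R)$, so it belongs to $\Lambda_R \oplus \Lambda_{X\mid R}$; I would verify its mean-zero property using Theorem~\ref{thm:identification1}, which gives $\E\{g_a(X)\mid \widetilde{X}=\widetilde{x},R=0\}=\psi_a(\widetilde{x})$. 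The second piece, proportional to $\frac{I(A=a,\widetilde{X}=\widetilde{x},R=1)}{\Pr_{p_0}(A=a\mid X,R=1)}\{Y-\E_{p_0}(Y\mid A=a,X,R=1)\}$, is a function of $(R,X,A,Y)$ that is mean zero conditional on $(A,X,R=1)$, hence lies in $\Lambda_{Y\mid A,X,R=1}$.

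Combining these observations, $\mathit\Psi_{p_0}(a,\widetilde{x}) \in \Lambda_R \oplus \Lambda_{X\mid R} \oplus \Lambda_{Y\mid A,X,R=1}$, i.e., the restricted tangent space. Since the influence function is unique under the nonparametric model (by path differentiability, as derived in Appendix~\ref{appendix:influence_function1}), and this unique influence function already sits inside the restricted tangent space, it is automatically the efficient influence function in the semiparametric submodel. The main (though modest) obstacle is being careful with the mixed structure caused by $(A,Y)$ being undefined when $R=0$: the tangent space decomposition and the mean-zero verification for the first piece must account for the indicator $I(R=0)$ and not accidentally condition on $A$ in the wrong stratum; otherwise, the argument is a direct adaptation of the proof of Corollary~\ref{cor:identification2}.
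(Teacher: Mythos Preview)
Your proposal is correct and follows essentially the same approach the paper intends: the paper's own proof of this corollary is simply ``similar to the proof of Corollary~\ref{cor:identification2}, and thus we omit it here,'' and your argument is precisely that adaptation---decompose the tangent space, observe that knowing $\Pr(A=a\mid X,R=1)$ removes the $\Lambda_{A\mid X,R}$ component, and verify that the two summands of $\mathit\Psi_{p_0}(a,\widetilde{x})$ lie respectively in $\Lambda_R\oplus\Lambda_{X\mid R}$ and $\Lambda_{Y\mid A,X,R}$. The only cosmetic difference is that in the analogous Corollary~\ref{cor:identification2} the paper pins the first summand more finely into $\Lambda_{W\mid \widetilde{X},S}$ rather than the coarser $\Lambda_R\oplus\Lambda_{X\mid R}$, but since both are contained in the restricted tangent space this does not affect the conclusion.
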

The proof is similar to the proof of Corollary \ref{cor:identification2}, and thus we omit it here.
\section{Proof of Theorem~\ref{thm:estimation1}}
\label{app:Proof-Estimation1}
\begin{restatable}{theorem}{thmestimationSA}
\label{thm:estimation1}
If assumptions \textit{A1} through \textit{A6}, and \textit{(b1)} through \textit{(b3)} hold, then 
\begin{align*}
     \widehat \psi_a(\widetilde{x})-\psi_a(\widetilde{x})
    =\mathbb{P}_n\{\mathit\Psi_ {p_{0}}(a, \widetilde{x})\}+
    Q_{n}+o_p(n^{-1/2}),
\end{align*}
where $Q_{n}\lesssim O_p\big\{||\widehat g_a(X) -g_a(X)||(||\widehat e_a(X) -e_a(X)||+||\widehat p_a(X) -p_a(X)||)\big\}$, and nuisance parameters are estimated on a separate independent sample.\\
In particular, if $Q_{n} = o_p(n^{-1/2})$, then 
$
    \sqrt{n}\big\{\widehat \psi_a(\widetilde{x})-\psi_a(\widetilde{x})\big\}\rightarrow\mathcal{N}\Big[0, \E_{p_{0}}\big\{\mathit\Psi_ {p_{0}}(a, \widetilde{x})^{2}\big\}\Big].
$\\
That is, $\widehat \psi_a(\widetilde{x})$ is non-parametric efficient.
\end{restatable}
\begin{proof}
Define all nuisance parameters (asymptotic limits) as $\boldsymbol{\theta}\equiv \big\{\gamma(\widetilde{x}), g_{a}(X), e_a(X)$, $p(X)\big\}$, and estimates as $\boldsymbol{\widehat \theta}\equiv \big\{\widehat \gamma(\widetilde{x}), \widehat g_{a}(X), \widehat e_a(X)$, $\widehat p(X)\big\}$. 
For general functions $\gamma'(\widetilde{x}), g'_{a}(X), e'_{a} (X),$ and $p'(X)$ define 
\begin{align*}
H(\boldsymbol{\theta}') =   \gamma'(\widetilde{x}) \Bigg[ I(\widetilde{X}=\widetilde{x}, R = 0) g'_{a}(X) +
&I(A =a, \widetilde{X}=\widetilde{x}, R=1 ) \dfrac{  1 - p' (X) }{p'(X) e'_{a} (X)} \Big\{Y  -  g'_{a}(X)  \Big\}    \Bigg].
\end{align*}
We observe that we can decompose $\widehat \psi_a(\widetilde{x}) - \psi_a(\widetilde{x})$ into three parts as below.
\begin{align*}
     &\widehat \psi_a(\widetilde{x}) - \psi_a(\widetilde{x})=\mathbb{P}_n\big\{H(\boldsymbol{\widehat \theta})\big\}-\mathbb{P}\big\{H(\boldsymbol{\theta})\big\}\\
    =&\underbrace{
    (\mathbb{P}_n-\mathbb{P})\big\{H(\boldsymbol{\widehat \theta})-H(\boldsymbol{\theta})\big\} }_{1}+
    \underbrace{
    \mathbb{P}\big\{H(\boldsymbol{\widehat \theta})-H(\boldsymbol{\theta})\big\} }_{2}+
    \underbrace{
    (\mathbb{P}_n-\mathbb{P})H(\boldsymbol{\theta})}_{3}.
\end{align*}
Working on term 1, note that
\begin{align*}
    &||H(\boldsymbol{\widehat \theta})-H(\boldsymbol{\theta})||\\
    = 
    & \Bigg|\Bigg|I(\widetilde{X}=\widetilde{x})\widehat \gamma(\widetilde{x})\Bigg[(1-R)\widehat g_{a}(X)+R\dfrac{I(A=a)}{\widehat e_a(X) }\dfrac{1-\widehat p_a(X)}{\widehat p_a(X)}\Big\{Y-\widehat \mu_a() \Big\}  \Bigg]\\
    & -I(\widetilde{X}=\widetilde{x}) \gamma(\widetilde{x})\Bigg[(1-R) g_{a}(X)+R\dfrac{I(A=a)}{ e_a(X) }\dfrac{1- p_a(X)}{ p_a(X)}\Big\{Y- g_a(X) \Big\}  \Bigg]\Bigg|\Bigg|\\
    \leqslant 
    & \Bigg|\Bigg|I(\widetilde{X}=\widetilde{x})\widehat \gamma(\widetilde{x})\Bigg[(1-R)\{\widehat g_a(X)- g_a(X)\}+R\dfrac{I(A=a)}{\widehat e_a(X) }\dfrac{1-\widehat p_a(X)}{\widehat p_a(X)}\Big\{ g_a(X)- \widehat g_a(X) \Big\}  \Bigg]\Bigg|\Bigg|\\
    & + \Bigg|\Bigg|I(\widetilde{X}=\widetilde{x}) \{\widehat \gamma(\widetilde{x})-\gamma(\widetilde{x})\}\Bigg[(1-R) g_{a}(X)+R\dfrac{I(A=a)}{ e_a(X) }\dfrac{1- p_a(X)}{ p_a(X)}\Big\{Y- g_a(X) \Big\}  \Bigg]\Bigg|\Bigg|\\
    & + \Bigg|\Bigg|I(\widetilde{X}=\widetilde{x})\widehat \gamma(\widetilde{x})RI(A=a) \Big\{\dfrac{1-\widehat p_a(X)}{\widehat e_a(X) \widehat p_a(X)}-\dfrac{1- p_a(X)}{ e_a(X)  p_a(X)} \Big\}\Big\{ Y- g_a(X)\Big\}\Bigg|\Bigg|\\
    \lesssim
    & || \widehat g_a(x)-g_a(X)||+||\widehat e_a(x)-e_a(X)||+||\widehat p_a(x)-p_a(X)|| +||\widehat \gamma(\widetilde{x})-\gamma(\widetilde{x})||
\end{align*}
By similar argument to Lemma \ref{lemma:gamma} and assumptions \textit{(b3)}, we have $||H(\boldsymbol{\widehat \theta})-H(\boldsymbol{\theta})|| = o_p(1)$, so that by Lemma 2 of \citet{kennedy2020sharp}, $(\mathbb{P}_n-\mathbb{P})\big\{H(\boldsymbol{\widehat \theta})-H(\boldsymbol{\theta})\big\}=o_p(n^{-1/2})$.

Working on term 2, we have
\begin{align*}
     &\E\{H(\boldsymbol{\widehat \theta})-H(\boldsymbol{\theta})\}\\
    =& \E\Bigg(I(\widetilde{X}=\widetilde{x})
    \Big\{
    \widehat \gamma(\widetilde{x})\bigg[(1-R)\widehat g_a(X)+R\dfrac{I(A=a)}{\widehat e_a(X) }\dfrac{1-\widehat p_a(X)}{\widehat p_a(X)}\Big\{ Y- \widehat g_a(X) \Big\} 
    \bigg]\\
    &-\gamma(\widetilde{x})(1-R) g_a(X)
    \Big\}\Bigg) \\
    =&\E\Bigg\{I(\widetilde{X}=\widetilde{x})
    \Big(\widehat \gamma(\widetilde{x})\Big[\{1- p_a(X)\}\{\widehat g_a(X) - g_a(X) \}\\
    & +\dfrac{1-\widehat p_a(X)}{\widehat p_a(X)} p_a(X)\dfrac{e_a(x)}{\widehat e_a(X)}\Big\{\widehat g_a(X) - g_a(X) \Big\}\Big]
    \Big)\Bigg\}\\
    & + \{\widehat \gamma(\widetilde{x})- \gamma(\widetilde{x})\}\E\big[\{1-p_a(X)\}g_a(X)\big]\\
    =& \E\Bigg\{I(\widetilde{X}=\widetilde{x})
    \Big(\widehat \gamma(\widetilde{x})\Big[\{\widehat p_a(X)- p_a(X)\}\{\widehat g_a(X) - g_a(X) \}\\
    & +\{1-\widehat p_a(X)\} \Big\{1-\dfrac{p_a(X)}{\widehat p_a(X)}\dfrac{e_a(X)}{\widehat e_a(X)}\Big\}\Big\{\widehat g_a(X) - g_a(X) \Big\}\Big]
    \Big)\Bigg\}+ \Big\{\dfrac{\widehat\gamma(\widetilde{x})}{\gamma(\widetilde{x})} -1\Big\}\psi_a(\widetilde{x}).
\end{align*}
Combine term 2 and 3, we have
\begin{align}\nonumber
     &\mathbb{P}\big\{H(\boldsymbol{\widehat \theta})-H(\boldsymbol{\theta})\big\}+
    (\mathbb{P}_n-\mathbb{P})H(\boldsymbol{\theta})\\\label{1}
    =& \mathbb{P}_n\{\mathit\Psi_{p_0}(a, \widetilde{x})\}\\\nonumber
    &+\E\Bigg\{I(\widetilde{X}=\widetilde{x})
    \Big(\widehat \gamma(\widetilde{x})\Big[\{\widehat p_a(X)- p_a(X)\}\{\widehat g_a(X) - g_a(X) \}\\\label{2}
    & +\{1-\widehat p_a(X)\} \Big\{1-\dfrac{p_a(X)}{\widehat p_a(X)}\dfrac{e_a(X)}{\widehat e_a(X)}\Big\}\Big\{\widehat g_a(X) - g_a(X) \Big\}\Big]
    \Big)\Bigg\}\\
    &+\Big\{\dfrac{\widehat\gamma(\widetilde{x})^{-1}}{\gamma(\widetilde{x})^{-1}} -1\Big\}\psi_a(\widetilde{x})+\Big\{\dfrac{\widehat\gamma(\widetilde{x})}{\gamma(\widetilde{x})} -1\Big\}\psi_a(\widetilde{x})\label{3}
\end{align}
Factoring and simplifying term (~\ref{3}) yields $\psi_a(\widetilde{x})\left\{\widehat\gamma(\widetilde{x}) - \gamma(\widetilde{x})\right\}\left\{\frac{1}{\gamma(\widetilde{x})} - \frac{1}{\widehat\gamma(\widetilde{x})}\right\}$, which is\\ 
$o_{p}(n^{-1/2})$ by the central limit theorem and similar argument to Lemma \ref{lemma:gamma}. That is, both term 1 and (\ref{3}) are $o_{p}(n^{-1/2})$.

Therefore, combining terms 1, 2 and 3, we conclude
\begin{align*}
     \widehat \psi_a(\widetilde{x})-\psi_a(\widetilde{x})
    =&
    O_{p}\Big[||\widehat g_a(X) -g_a(X)||\big\{||\widehat e_a(X) -e_a(X)||+||\widehat p_a(X) -p_a(X)||\big\}\Big]+\\
    &\mathbb{P}_n\{\mathit\Psi_ {p_{0}}(a, \widetilde{x})\}+o_p(n^{-1/2}).
\end{align*}
Further, if $\left\lVert \widehat{g}_a - g_a\right\rVert \left\{\left\lVert \widehat{e}_a - e_a\right\rVert + \left\lVert \widehat{p}_a - p_a\right\rVert\right\} = o_p(n^{-1/2})$, then
\begin{align*}
    \sqrt{n}\big\{\widehat \psi_a(\widetilde{x})-\psi_a(\widetilde{x})\big\}\rightarrow\mathcal{N}\Big[0, \E_{p_{0}}\big\{\mathit\Psi_ {p_{0}}(a, \widetilde{x})^{2}\big\}\Big].
\end{align*}
That is, $\widehat \psi_a(\widetilde{x})$ is non-parametric efficient.
\end{proof}
\section{Data generation mechanism of simulation for rate robustness assessment}
\label{app:sim_gen_2}
For simplicity, we considered a two data source setting with $S \in \{1,2\}$. First, we generated $X = (X_0, X_1)$ with $X_0 \sim \mathrm{Bernoulli}(0.5)$ independent of $X_1 \sim \mathrm{Uniform}(0,1)$, then sampled $S \mid X$ with $\Pr(S = 1 \mid X) = \mathrm{expit}(-0.2 + 0.5 X_0 + 1.2 X_1)$. Next, we generated $A \mid X, S \sim \min{\{\max{[0.1, \Pr_A(X, S)]}, 0.9\}}$, where
$\Pr_A(X, S) = \mathrm{expit}(0.8 + 0.9X_0 - 0.8X_1)  \text{ if } S = 1, \text{ and } \Pr_A(X, S) = \mathrm{expit}(-0.8 -0.9X_0 + 0.8 X_1)  \text{ if } S = 0$.
 Finally, we generated $Y \mid X, S, A \sim \mathcal{N}(m_Y(X, S, A), 1)$, where
$m_Y(X, S, A) = 5.2 + (1.2 - 0.6 X_0)A + X_0 - 1.2X_1$.
\section{Additional simulation results}\label{app:add_sim}
Figure \ref{app:fig:r1} show the biases, coverage of simultaneous confidence bands, coverage of pointwise confidence intervals,theoretical standard deviation and Monte-Carlo standard deviation of $\widehat \psi_{1}(\widetilde{x}), \widetilde{x}=1, 2, 3, 4, 5$. Sample size is 1000 with 5000 iterations.\\
\begin{figure}[h]
    \centering
    \includegraphics[width=0.75\textwidth]{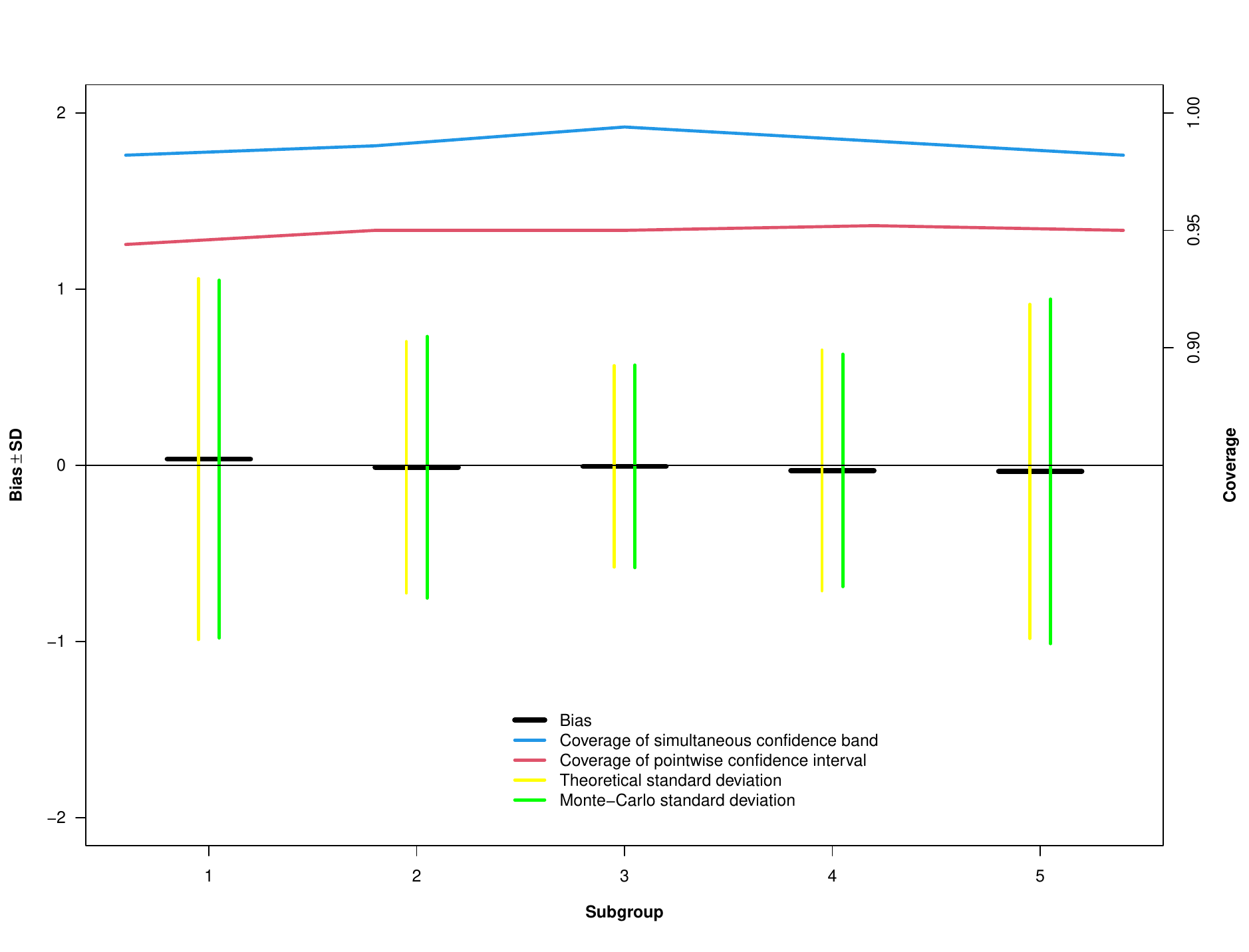}
    \caption{Biases, coverage of simultaneous confidence bands, coverage of pointwise confidence intervals,theoretical standard deviation and Monte-Carlo standard deviation of $\widehat \psi_{1}(\widetilde{x}), \widetilde{x}=1, 2, 3, 4, 5$. Sample size is 1000 with 5000 iterations.}
    \label{app:fig:r1}
\end{figure}
Figure \ref{app:fig:r2} shows the simulation results for $\widehat \psi_{1}(3)$ when the sample size $n=\sum_{s=0}^{3}n_s-10^4$ and $\sum_{s=1}^{3}n_s=1000$ with 5000 iterations. The results include the averaged biases and Monte-Carlo standard errors of various estimators with different combination(s) of correctly specified models. The methods being compared are the proposed doubly robust estimator (DR) $\widehat\psi_{1}(3)$, the estimator that only uses outcome regression (Regression) $\widehat\psi^g_{1}(3)$, and the inverse probability of treatment weighting (IPTW) $\widehat\psi^w_{1}(3)$.\\
\begin{figure}
    \centering
    \includegraphics[width=0.75\textwidth,trim={0 0 0 2cm},clip]{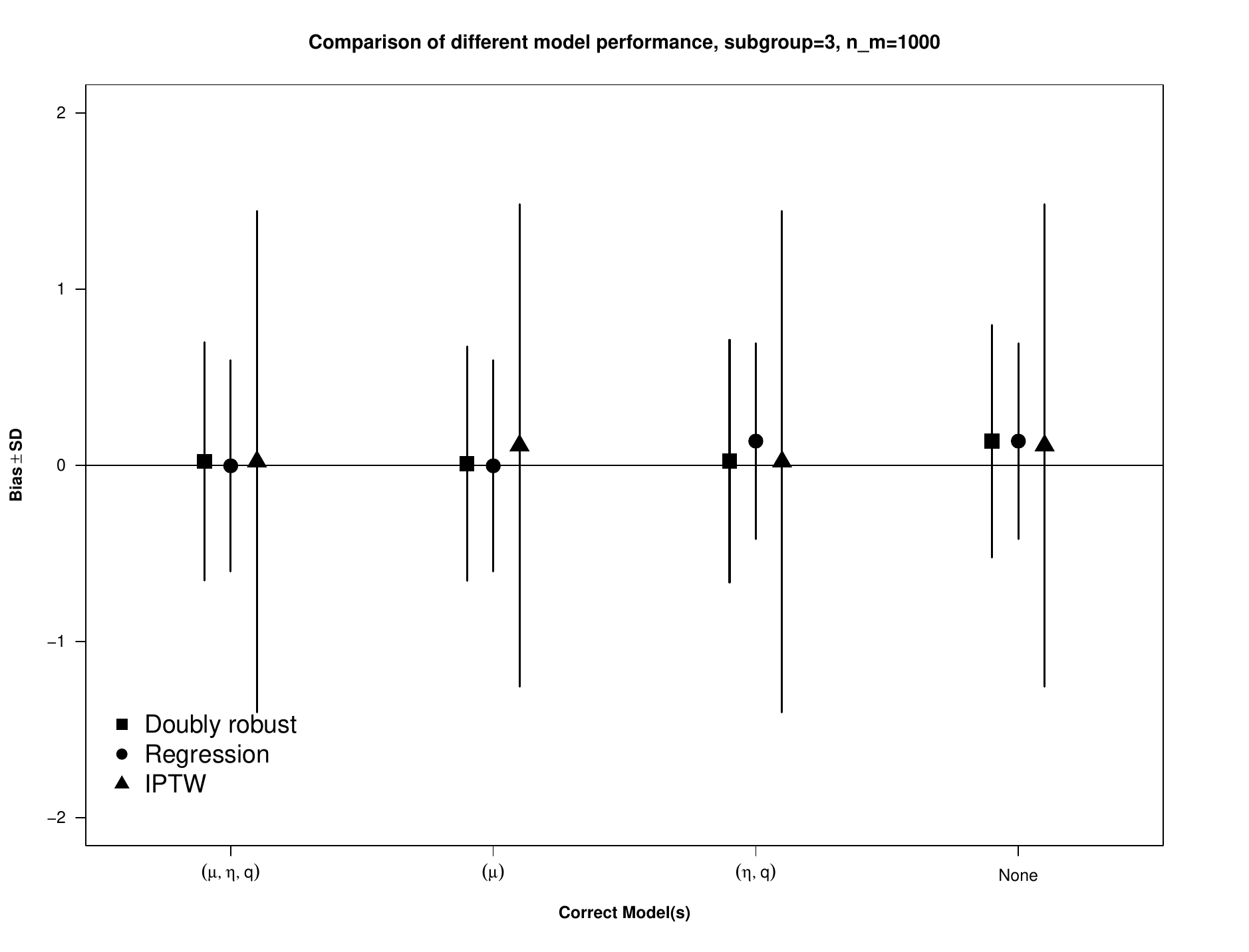}
    \caption{Simulation results for $\widehat \psi_{1}(3)$ when the sample size $n=\sum_{s=0}^{3}n_s-10^4$ and $\sum_{s=1}^{3}n_s=1000$ with 5000 iterations. The results include the averaged biases and Monte-Carlo standard errors of various estimators with different combination(s) of correctly specified models. The methods being compared are the proposed doubly robust estimator (DR) $\widehat\psi_{1}(3)$, the estimator that only uses outcome regression (Regression) $\widehat\psi^g_{1}(3)$, and the inverse probability of treatment weighting (IPTW) $\widehat\psi^w_{1}(3)$.}
    \label{app:fig:r2}
\end{figure}
Table \ref{app:table:add_s} shows the simulation results when the sample sizes are 1000, 2000, 5000. The results include the averaged biases and Monte-Carlo standard errors of various estimators with different combination(s) of correctly specified models from 500 simulations. The methods being compared are the proposed doubly robust estimator (DR) $\widehat\phi_{1,1}(3)$, the estimator that only uses outcome regression (Regression) $\widehat\phi^g_{1,1}(3)$, and the inverse probability of treatment weighting (IPTW) $\widehat\phi^w_{1,1}(3)$.\\
\begin{table}[h]
\centering
\begin{tabular}{cccccccc}
  \hline
  &&\multicolumn{2}{c}{\textbf{$\sum_{s=1}^{3}n_s=1000$}}&\multicolumn{2}{c}{$\sum_{s=1}^{3}n_s=2000$}&\multicolumn{2}{c}{$\sum_{s=1}^{3}n_s=5000$}\\
    \cline{3-4}\cline{5-6}\cline{7-8}
 \textbf{Correct models}& \textbf{Method} & \textbf{Bias} & \textbf{SD}& \textbf{Bias} & \textbf{SD}& \textbf{Bias} & \textbf{SD} \\ 
  \hline
\multirow{3}{*}{$(\mu,q,\eta)$}
&DR & 0.00 & 0.62 & 0.00 & 0.44 &0.00 & 0.28 \\ 
&Regression & 0.01 & 0.58 & 0.00 & 0.43 & 0.00 & 0.26 \\ 
&IPTW & 0.03 & 0.97 & 0.02 & 0.65 & 0.00 & 0.44 \\ \hline
\multirow{3}{*}{$(\mu)$}  
&DR & 0.01 & 0.62 & 0.00 & 0.44 & 0.00 & 0.28 \\ 
  &Regression & 0.01 & 0.58 & 0.00 & 0.43 & 0.00 & 0.26 \\ 
  &IPTW & 0.14 & 0.97 & 0.11 & 0.66 & 0.09 & 0.45 \\ \hline
\multirow{3}{*}{$(q,\eta)$}  
&DR & 0.00 & 0.62 & 0.00 & 0.44 & 0.00 & 0.28 \\ 
  &Regression & 0.14 & 0.55 & 0.12 & 0.42 & 0.11 & 0.26 \\ 
  &IPTW & 0.03 & 0.97 & 0.02 & 0.65 & 0.00 & 0.44 \\ \hline
\multirow{3}{*}{None}  
&DR & 0.11 & 0.62 & 0.10 & 0.43 & 0.09 & 0.28 \\ 
  &Regression & 0.14 & 0.55 & 0.12 & 0.42 & 0.11 & 0.26 \\ 
  &IPTW & 0.14 & 0.97 & 0.11 & 0.66 & 0.09 & 0.45 \\ 
   \hline
\end{tabular}
\caption{Simulation results when the sample sizes are 1000, 2000, 5000. The results include the averaged biases and Monte-Carlo standard errors of various estimators with different combination(s) of correctly specified models from 500 simulations. The methods being compared are the proposed doubly robust estimator (DR) $\widehat\phi_{1,1}(3)$, the estimator that only uses outcome regression (Regression) $\widehat\phi^g_{1,1}(3)$, and the inverse probability of treatment weighting (IPTW) $\widehat\phi^w_{1,1}(3)$.}
\label{app:table:add_s}
\end{table}
Table \ref{app:table:add_r1} shows the simulation results when the sample sizes $\sum_{s=1}^{3}n_s$ is 1000, 2000, 5000 and $n=\sum_{s=0}^{3}n_s$ is $10^4$. The results include the averaged biases and Monte-Carlo standard errors of various estimators with different combination(s) of correctly specified models from 500 simulations. The methods being compared are the proposed doubly robust estimator (DR) $\widehat\psi_{1}(3)$, the estimator that only uses outcome regression (Regression), and the inverse probability of treatment weighting (IPTW).\\
\begin{table}[h]
\centering
\begin{tabular}{cccccccc}
  \hline
    &&\multicolumn{2}{c}{\textbf{$\sum_{s=1}^{3}n_s=1000$}}&\multicolumn{2}{c}{$\sum_{s=1}^{3}n_s=2000$}&\multicolumn{2}{c}{$\sum_{s=1}^{3}n_s=5000$}\\
    \cline{3-4}\cline{5-6}\cline{7-8}
 \textbf{Correct models}& \textbf{Method} & \textbf{Bias} & \textbf{SD}& \textbf{Bias} & \textbf{SD}& \textbf{Bias} & \textbf{SD} \\ 
  \hline
  \multirow{3}{*}{$(g,p,e)$}
&DR &          -0.01 & 0.71 &-0.01 & 0.51 &-0.01 & 0.32\\
&Regression &  0.01  &  0.61 &  -0.02  &  0.45  & -0.01  &  0.27\\
&IPTW &  0.07  &  1.45 &  -0.02  &  1.01 &   0.03  &  0.71\\
 \multirow{3}{*}{$(p,e)$}
&DR &  -0.01  &  0.69  & -0.02  &  0.51  & -0.01  &  0.31\\
&Regression & 0.01  &  0.61  & -0.02  &  0.45  & -0.01  &  0.27\\
&IPTW &  0.13   & 1.38  &  0.10  &  0.99  &  0.16  &  0.70\\
\multirow{3}{*}{$(p,e)$}
&DR  & -0.01  &  0.72  & -0.01  &  0.52  & -0.01  &  0.32\\
&Regression &  0.16  &  0.58  &  0.13  &  0.42  &  0.13  & 0.26\\
&IPTW & 0.07 &   1.45  & -0.02  &  1.01  &  0.03  &  0.71\\
\multirow{3}{*}{None}
&DR  &  0.11   & 0.68  &  0.11  &  0.51  &  0.12  &  0.30\\
&Regression &  0.16   & 0.58  &  0.13  &  0.42  &  0.13   & 0.26\\
&IPTW & 0.13  &  1.38   & 0.10   & 0.99  &  0.16   & 0.70\\
   \hline
\end{tabular}
\caption{Simulation results when the sample sizes $\sum_{s=1}^{3}n_s$ is 1000, 2000, 5000 and $n=\sum_{s=0}^{3}n_s$ is $10^4$. The results include the averaged biases and Monte-Carlo standard errors of various estimators with different combination(s) of correctly specified models from 500 simulations. The methods being compared are the proposed doubly robust estimator (DR) $\widehat\psi_{1}(3)$, the estimator that only uses outcome regression (Regression), and the inverse probability of treatment weighting (IPTW).}
\label{app:table:add_r1}
\end{table}

Table \ref{app:table:add_r2} shows the simulation results when the sample sizes $\sum_{s=1}^{3}n_s$ is 1000, 2000, 5000 and $n=\sum_{s=0}^{3}n_s$ is $10^5$. The results include the averaged biases and Monte-Carlo standard errors of various estimators with different combination(s) of correctly specified models from 500 simulations. The methods being compared are the proposed doubly robust estimator (DR) $\widehat\psi_{1}(3)$, the estimator that only uses outcome regression (Regression), and the inverse probability of treatment weighting (IPTW).
\begin{table}[h]
\centering
\begin{tabular}{cccccccc}
  \hline
    &&\multicolumn{2}{c}{\textbf{$\sum_{s=1}^{3}n_s=1000$}}&\multicolumn{2}{c}{$\sum_{s=1}^{3}n_s=2000$}&\multicolumn{2}{c}{$\sum_{s=1}^{3}n_s=5000$}\\
    \cline{3-4}\cline{5-6}\cline{7-8}
 \textbf{Correct models}& \textbf{Method} & \textbf{Bias} & \textbf{SD}& \textbf{Bias} & \textbf{SD}& \textbf{Bias} & \textbf{SD} \\ 
  \hline
  \multirow{3}{*}{$(g,p,e)$}
  &DR         & -0.01  &  0.71  &  0.01  &  0.52  &  0.00  &  0.33 \\ 
  &Regression & 0.00  &  0.63  &  0.01  &  0.45 &  -0.01  &  0.28 \\ 
  &IPTW       & -0.05  &  1.43   & 0.04  &  0.91  &  0.02  &  0.64 \\ 
  \multirow{3}{*}{$(g)$}
  &DR         & -0.01  &  0.69  &  0.01  &  0.50  &  0.00  &  0.32\\ 
  &Regression & 0.00  &  0.63 &   0.01  &  0.45  & -0.01  &  0.28 \\ 
  &IPTW       & 0.17  &  1.36  &  0.15  &  0.87  &  0.15  &  0.62  \\ 
  \multirow{3}{*}{$(p,e)$}
  &DR         & -0.02  &  0.71  &  0.01  &  0.52  &  0.00  &  0.33\\ 
  &Regression & 0.15  &  0.59  &  0.15  &  0.43  &  0.14   & 0.27 \\ 
  &IPTW       &  -0.05  &  1.43  &  0.04  &  0.91  &  0.02  &  0.64\\ 
  \multirow{3}{*}{None}
  &DR         & 0.11  &  0.68   & 0.13  &  0.50  &  0.13  &  0.31\\ 
  &Regression & 0.15  &  0.59   & 0.15  &  0.43  &  0.14  &  0.27\\ 
  &IPTW       & 0.17  &  1.36  &  0.15  &  0.87  &  0.15  &  0.62\\
   \hline
\end{tabular}
\caption{Simulation results when the sample sizes $\sum_{s=1}^{3}n_s$ is 1000, 2000, 5000 and $n=\sum_{s=0}^{3}n_s$ is $10^5$. The results include the averaged biases and Monte-Carlo standard errors of various estimators with different combination(s) of correctly specified models from 500 simulations. The methods being compared are the proposed doubly robust estimator (DR) $\widehat\psi_{1}(3)$, the estimator that only uses outcome regression (Regression), and the inverse probability of treatment weighting (IPTW).}
\label{app:table:add_r2}
\end{table}

\section{Summary statistics of each trial}
\label{app:application1}
Table \ref{tab:TableOne} shows the baseline characteristics in the meta-trials, stratified by $S$. Results reported as mean (standard deviation) for continuous variables and count (percentage) for binary variables.\\
\begin{table}[]
    \centering
    \begin{tabular}{cccc}
    \hline
    Source of data, S     &  1 & 2 & 3\\
    \hline
    Number of individuals & 485 & 284 & 448\\
    PANSS at 6 week & 78.34 (23.74) & 8013 (21.66) & 79.13 (22.98)\\
    Paliperidone  ER & 0.75 (0.43) & 0.68 (0.47) & 0.75 (0.43)\\
    PANSS Baseline & 93.6 (10.97) & 93.26 (11.57) & 92.98 (12.46)\\
    PANSS Baseline $<$ 85 & 96 (19.8) & 72 (25.4) & 119 (26.6)\\
    PANSS Baseline, 85-100 & 253 (52.1) & 135 (47.5) & 204 (45.5)\\
    PANSS Baseline $>$ 100 & 137 (28.2) & 77 (27.1) & 125 (27.9)\\
    Age & 42.64 (11.42) & 41.87 (10.71) & 37.01 (10.63)\\
    Age Group, 18-34 & 215 (44.2) & 71 (25.0) & 194 (43.3)\\
    Age Group, 34-48 & 119 (24.5) & 136 (47.9) & 182 (40.6)\\
    Age Group, 48-74 & 152 (31.3) & 77 (27.1) & 72 (16.1)\\
    Female & 0.47 (0.50) & 0.29 (0.45) & 0.36 (0.48)\\
    Race, White & 420 (86.4) & 129 (45.4) & 219 (48.9)\\
    Race, Black & 0 (0.0) & 151(53.2) & 90 (20.1)\\
    Race, Asian & 1 (0.2) & 0 (0.0) & 112 (25.0)\\
    Race, Other & 65 (13.4) & 4 (1.4) & 27 (6.0)\\
    \hline
    \end{tabular}
    \caption{Baseline characteristics in the meta-trials, stratified by $S$. Results reported as mean (standard deviation) for continuous variables and count (percentage) for binary variables.}
    \label{tab:TableOne}
\end{table}

\section{Additional results}
\label{app:application2}
Figure \ref{fig:gender} shows the results from analyses using the meta-trial data. For each trail (target population), and each gender subgroup, the plot includes the point estimation (horizontal short solid line), 95\% confidence interval (vertical solid line), and simultaneous confidence bands (vertical dashed line). The horizontal dashed line indicates the null.\\
\begin{figure}
    \centering
    \includegraphics[width=0.5\textwidth]{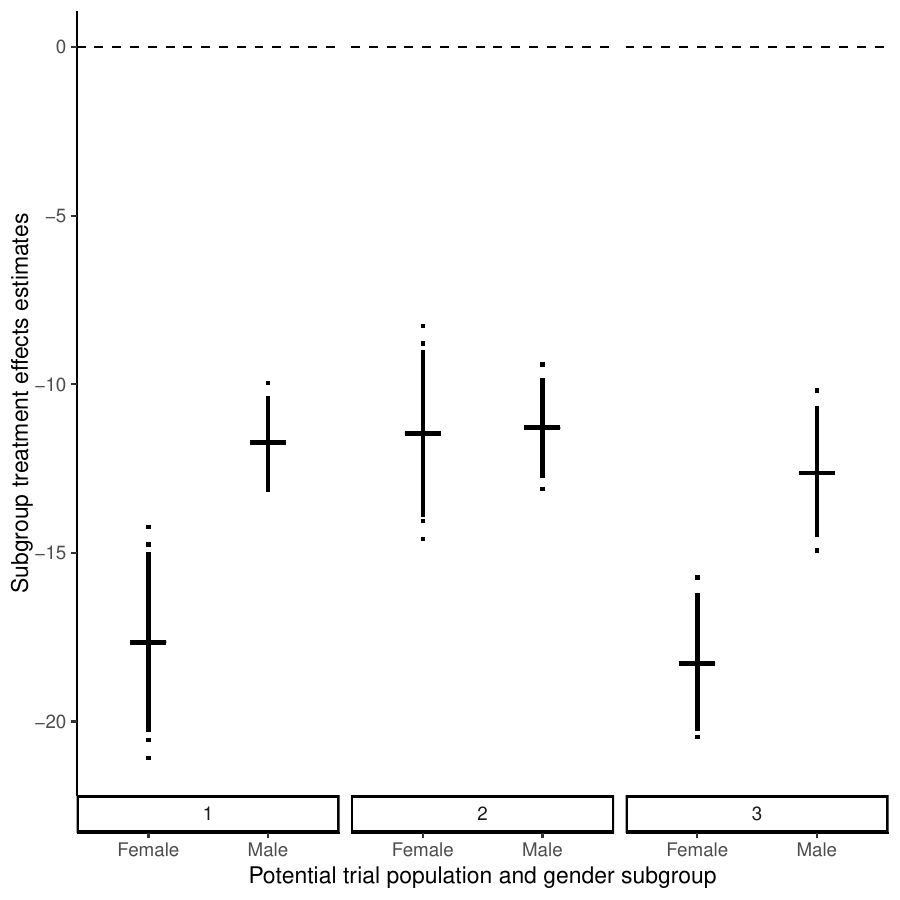}
    \caption{Results from analyses using the meta-trial data. For each trail (target population), and each gender subgroup, the plot includes the point estimation (horizontal short solid line), 95\% confidence interval (vertical solid line), and simultaneous confidence bands (vertical dashed line). The horizontal dashed line indicates the null.}
    \label{fig:gender}
\end{figure}
Figure \ref{fig:agebaseline} shows the results from analyses using the meta-trial data. For each trail (target population), and each PANSS subgroup, the plot includes the point estimation (horizontal short solid line), 95\% confidence interval (vertical solid line), and simultaneous confidence bands (vertical dashed line). The horizontal dashed line indicates the null.
\begin{figure}
    \centering
    \includegraphics[width=0.5\textwidth]{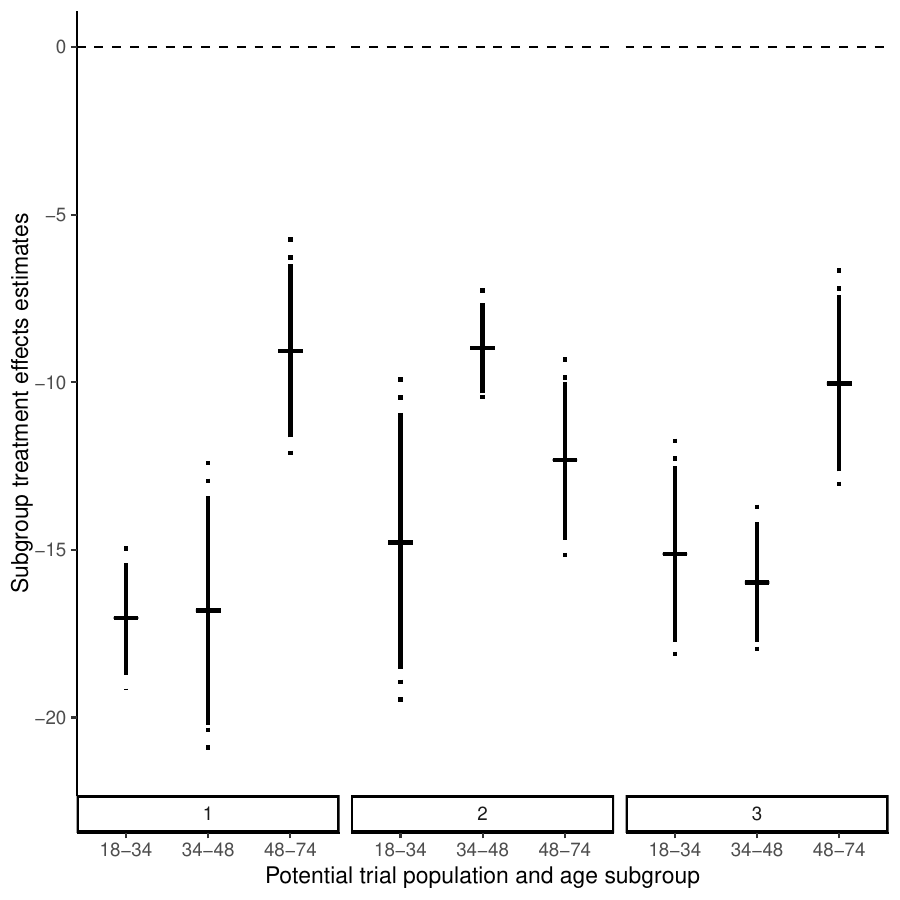}
    \caption{Results from analyses using the meta-trial data. For each trail (target population), and each age subgroup, the plot includes the point estimation (horizontal short solid line), 95\% confidence interval (vertical solid line), and simultaneous confidence bands (vertical dashed line). The horizontal dashed line indicates the null.}
    \label{fig:agebaseline}
\end{figure}
\end{appendices}
\clearpage
\end{document}